\newtheorem{theorem}{Theorem}[section]
\newtheorem{defn}[theorem]{Definition}
\newtheorem{lemma}[theorem]{Lemma}
\newtheorem{eple}[theorem]{Example}
\newtheorem{rmk}[theorem]{Remarks}
\newtheorem{dsc}[theorem]{Discussion}
\newtheorem{nota}[theorem]{Notation}
\newsavebox{\indbin}
\savebox{\indbin}{\begin{picture}(0,0)
\newlength{\gnu}
\settowidth{\gnu}{$\smile$} \setlength{\unitlength}{.5\gnu}
\put(-1,-.65){$\smile$} \put(-.25,.1){$|$}
\end{picture}}
\newcommand{\be}{\begin{enumerate}}
\newcommand{\bd}{\begin{defn}}
\newcommand{\bt}{\begin{theorem}}
\newcommand{\bl}{\begin{lemma}}
\newcommand{\ee}{\end{enumerate}}
\newcommand{\ed}{\end{defn}}
\newcommand{\et}{\end{theorem}}
\newcommand{\el}{\end{lemma}}
\begin{document}
\title{A Fourier Inversion Theorem for Normal Functions}
\author{Tristram de Piro}
\address{Flat 3, Redesdale House, 85 The Park, Cheltenham, GL50 2RP }
 \email{t.depiro@curvalinea.net}
\thanks{}
\begin{abstract}
This paper proves an inversion theorem for the Fourier transform defined in \cite{dep3}, applied to the class of normal functions.\\
\end{abstract}
\maketitle
We recall the definition of the Fourier transform for quasi split normal functions, which includes normal functions, introduced in the paper \cite{dep3}, normalised by the factor ${1\over 2\pi}$ in dimension $2$, and by ${1\over (2\pi)^{3\over 2}}$ in dimension $3$, which we denote by $\mathcal{F}$. The aim of this paper is to prove an inversion theorem for such functions. We first have the following;\\

\begin{lemma}
\label{integrablefirst}
Let $f:\mathcal{R}^{2}\rightarrow\mathcal{R}$ be smooth and quasi split normal, then $\mathcal{F}(f)\in L^{1}(\mathcal{R}^{2})$ and is of rapid decay, in the sense that, for $|\overline{k}|>1$, $k_{1}\neq 0,k_{2}\neq 0$\\

$|\mathcal{F}(f)(\overline{k})|\leq {C_{n}\over |\overline{k}|^{n}}$\\

where $C_{n}\in\mathcal{R}$, $n\in\mathcal{N}$.\\

A similar result holds for smooth quasi split normal $f:\mathcal{R}^{3}\rightarrow\mathcal{R}$, with $\mathcal{F}(f)\in L^{1}(\mathcal{R}^{3})$, and for $|\overline{k}|>1$, $k_{1}\neq 0,k_{2}\neq 0, k_{3}\neq 0$\\

$|\mathcal{F}(f)(\overline{k})|\leq {C_{n}\over |\overline{k}|^{n}}$\\

where $C_{n}\in\mathcal{R}$, $n\in\mathcal{N}$.\\\\
\end{lemma}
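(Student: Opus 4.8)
The plan is to obtain the pointwise decay bound from repeated integration by parts and then to deduce $\mathcal{F}(f)\in L^{1}$ from that bound together with the boundedness of $\mathcal{F}(f)$ on compact sets; the role of the quasi split normal hypothesis is only to license these manipulations. First I would extract from \cite{dep3} the exact properties of a smooth quasi split normal $f:\mathcal{R}^{d}\rightarrow\mathcal{R}$ (for $d=2,3$) that the argument uses: that $f$ is governed by a finite ``split'' decomposition as a sum of products of one-variable functions, each of which is smooth with every derivative in $L^{1}(\mathcal{R})$ and vanishing at $\pm\infty$; that, on this class, $\mathcal{F}(f)(\overline{k})=(2\pi)^{-d/2}\int_{\mathcal{R}^{d}}f(\overline{x})e^{-i\overline{x}\cdot\overline{k}}\,d\overline{x}$, so that $\mathcal{F}(f)\in L^{\infty}$ with $\|\mathcal{F}(f)\|_{\infty}\leq(2\pi)^{-d/2}\|f\|_{L^{1}}$; and that $\mathcal{F}$ intertwines $\partial_{x_{j}}^{n}$ with multiplication by $(ik_{j})^{n}$. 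If the split structure is needed merely in order to define $\mathcal{F}(f)$, I would instead run everything termwise on the finitely many one-variable factors, whose one-dimensional Fourier transforms are then manipulated exactly as below.

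Next, for each coordinate $j$ and each $n\in\mathcal{N}$, integrating by parts $n$ times in $x_{j}$ — the boundary terms vanishing by the decay of $f$ and its derivatives — gives
\[ (ik_{j})^{n}\,\mathcal{F}(f)(\overline{k})=\mathcal{F}\!\left(\partial_{x_{j}}^{n}f\right)(\overline{k}),\qquad k_{j}\neq 0, \]
whence $|\mathcal{F}(f)(\overline{k})|\leq(2\pi)^{-d/2}\|\partial_{x_{j}}^{n}f\|_{L^{1}}\,|k_{j}|^{-n}$. Given $\overline{k}$ with $|\overline{k}|>1$ and all $k_{j}\neq 0$, choose $j_{0}$ with $|k_{j_{0}}|=\max_{j}|k_{j}|$, so that $|k_{j_{0}}|\geq|\overline{k}|/\sqrt{d}$; applying the previous inequality with $j=j_{0}$ yields
\[ |\mathcal{F}(f)(\overline{k})|\leq\frac{d^{n/2}\,(2\pi)^{-d/2}\max_{1\leq j\leq d}\|\partial_{x_{j}}^{n}f\|_{L^{1}}}{|\overline{k}|^{n}}=:\frac{C_{n}}{|\overline{k}|^{n}}, \]
which is the asserted bound in both $d=2$ and $d=3$.

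For $\mathcal{F}(f)\in L^{1}(\mathcal{R}^{d})$ I would split the integral over $\{|\overline{k}|\leq 1\}$, where $\mathcal{F}(f)$ is bounded and the region has finite measure, and over $\{|\overline{k}|>1\}$. On the latter, away from the hyperplanes $\{k_{j}=0\}$ the estimate just proved with $n=d+1$ gives $|\mathcal{F}(f)(\overline{k})|\leq C_{d+1}|\overline{k}|^{-(d+1)}$, which is integrable; the remaining part — a neighbourhood of the union of the hyperplanes — is absorbed by the sharper product bound $|\mathcal{F}(f)(\overline{k})|\leq C\prod_{j}(1+|k_{j}|^{2})^{-1}$, obtained by integrating by parts twice in each variable from a suitable subset and combining with the trivial bound (and, when a split decomposition of $f$ is at hand, by summing the analogous one-variable bounds over the finitely many terms). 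Since $\prod_{j}(1+|k_{j}|^{2})^{-1}$ is integrable over $\mathcal{R}^{d}$ (with integral $\pi^{d}$), combining the two regions gives $\mathcal{F}(f)\in L^{1}$.

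I expect the real obstacle to be the first step — reading off from the definitions in \cite{dep3} that smoothness together with quasi split normality genuinely forces every partial derivative of $f$ to be integrable and to vanish at infinity, so that the iterated integration by parts has no boundary contributions and $\mathcal{F}$ commutes with arbitrarily high differentiation on this class; this is also where the restriction $k_{j}\neq 0$ in the statement is likely dictated, since the split construction of $\mathcal{F}(f)$ may only exhibit the rapid decay away from the coordinate hyperplanes even though the bare integration-by-parts bound would not require it. Once this regularity and compatibility input is secured, the decay estimate and the $L^{1}$ conclusion are the routine computations above, the only mild care being the bookkeeping near $\{k_{j}=0\}$.
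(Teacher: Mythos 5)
Your decay estimate is a legitimate alternative to the paper's: you integrate by parts $n$ times in the single coordinate $j_{0}$ maximizing $|k_{j}|$ and use $|k_{j_{0}}|\geq|\overline{k}|/\sqrt{d}$, whereas the paper iterates the Laplacian, deriving $\mathcal{F}((\bigtriangledown^{2})^{n}f)(\overline{k})=(-1)^{n}k^{2n}\mathcal{F}(f)(\overline{k})$ and bounding $\|(\bigtriangledown^{2})^{n}f\|_{L^{1}}$ by the moderate decrease $2n+1$ of $(\bigtriangledown^{2})^{n}f$. Both routes need the hypotheses $k_{1}\neq 0,k_{2}\neq 0(,k_{3}\neq 0)$ for the very reason you anticipate: $\mathcal{F}$ in \cite{dep3} is an iterated improper Riemann integral, and integration by parts is only justified off the coordinate hyperplanes.

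The genuine gap is in the $L^{1}$ step. You assert that $\mathcal{F}(f)$ is bounded on $\{|\overline{k}|\leq 1\}$, and earlier that $\|\mathcal{F}(f)\|_{\infty}\leq(2\pi)^{-d/2}\|f\|_{L^{1}}$. Neither holds here: a quasi split normal $f$ is in general only of very moderate decrease, $|f(\overline{x})|\leq C/|\overline{x}|$, which in $\mathcal{R}^{2}$ or $\mathcal{R}^{3}$ puts $f$ outside $L^{1}$, so there is no a priori sup bound on $\mathcal{F}(f)$ near the origin; your proposed fallback product bound $C\prod_{j}(1+|k_{j}|^{2})^{-1}$ also leans on "the trivial bound" $\mathcal{F}(f)\in L^{\infty}$ and fails for the same reason. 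The paper's actual treatment of $B(\overline{0},1)$ is different in kind: from $\mathcal{F}(f)=\mathcal{F}(\partial f/\partial x)/(ik_{1})=\mathcal{F}(\partial f/\partial y)/(ik_{2})$ and the elementary split of $1/k$ between the regimes $|k_{2}|\leq|k_{1}|$ and $|k_{1}|\leq|k_{2}|$, one gets $|\mathcal{F}(f)(\overline{k})|\leq\sqrt{2}\,(|\mathcal{F}(\partial f/\partial x)(\overline{k})|+|\mathcal{F}(\partial f/\partial y)(\overline{k})|)/k$; since $\partial f/\partial x,\partial f/\partial y$ are of moderate decrease $2$ they lie in $L^{3/2}(\mathcal{R}^{2})$, Hausdorff--Young puts their transforms in $L^{3}$, and with $1/k\in L^{3/2}(B(\overline{0},1))$ H\"older gives $\mathcal{F}(f)\in L^{1}(B(\overline{0},1))$; in $\mathcal{R}^{3}$ the same structure runs through $L^{2}$ and Cauchy--Schwarz. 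Combined with the rapid decay off the unit ball this yields $\mathcal{F}(f)\in L^{1}(\mathcal{R}^{d})$. Without some substitute for this H\"older/Hausdorff--Young control near the origin, your $L^{1}$ conclusion is unsupported.
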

\begin{proof}
In dimension $2$, by \cite{dep3}, we have that integration by parts is justified, for $k_{1}\neq 0$, $k_{2}\neq 0$, and we obtain that;\\

$\mathcal{F}(\bigtriangledown^{2}(f))(\overline{k}=-k^{2}\mathcal{F}(f)(\overline{k})$\\

$\mathcal{F}((\bigtriangledown^{2})^{n}f)=-k^{2n}\mathcal{F}(f)(\overline{k})$ $(*)$\\

By the definition of quasi split normality, $(\bigtriangledown^{2})^{n}f$ is of moderate decrease $2n+1$ and smooth, so that for $n\geq 1$, $(\bigtriangledown^{2})^{n}f\in L^{1}(\mathcal{R}^{2})$, and we have the trivial bound;\\

$|\mathcal{F}((\bigtriangledown^{2})^{n}f)|\leq {||(\bigtriangledown^{2})^{n}f||_{L^{1}(\mathcal{R}^{2})}\over 2\pi}=C_{2n}$\\

Rearranging $(*)$, we obtain that, for $|\overline{k}|>1$, $k_{1}\neq 0$, $k_{2}\neq 0$;\\

$|\mathcal{F}(f)(\overline{k})|\leq {C_{2n}\over k^{2n}}\leq {C_{2n}\over |k|^{m}}$, for $1\leq m\leq 2n$.\\

The proof for $f:\mathcal{R}^{3}\rightarrow\mathcal{R}$ is similar, noting that $(\bigtriangledown^{2})^{n}f\in L^{1}(\mathcal{R}^{3})$, for $n\geq 2$, and repeating the argument in three variables.\\

We have that, by the definition of quasi split normality, for $f:\mathcal{R}^{2}\rightarrow\mathcal{R}$, $\{{\partial f\over \partial x},{\partial f\over \partial y}\}$ are of moderate decrease $2$, and smooth, so belong to $L^{3\over 2}(\mathcal{R}^{2})$. By the Haussdorff-Young inequality, using the fact that $1\leq {3\over 2}\leq 2$, we have that $\{\mathcal{F}({\partial f\over \partial x}),\mathcal{F}({\partial f\over \partial y})\}\subset L^{3}(\mathcal{R}^{2})$, in particularly $\{\mathcal{F}({\partial f\over \partial x}),\mathcal{F}({\partial f\over \partial y}),|\mathcal{F}({\partial f\over \partial x})|+|\mathcal{F}({\partial f\over \partial y})|\}\subset L^{3}(B(\overline{0},1))$. A simple integration using polar coordinates, shows that ${1\over k}\in L^{3\over 2}(B(\overline{0},1))$. As above, we have that, for $k_{1}\neq 0,k_{2}\neq 0$;\\

$\mathcal{F}(f)(\overline{k})={\mathcal{F}({\partial f\over \partial x})(\overline{k})\over ik_{1}}={\mathcal{F}({\partial f\over \partial y})(\overline{k})\over ik_{2}}$ $(A)$\\

Observe that;\\

${1\over k}={1\over |k_{1}|}{1\over (1+{k_{2}^{2}\over k_{1}^{2}})^{1\over 2}}={1\over |k_{2}|}{1\over (1+{k_{1}^{2}\over k_{2}^{2}})^{1\over 2}}$\\

and;\\

$1\leq (1+{k_{1}^{2}\over k_{2}^{2}})^{1\over 2}\leq \sqrt{2}$, for $|k_{1}|\leq |k_{2}|$\\

$1\leq (1+{k_{2}^{2}\over k_{1}^{2}})^{1\over 2}\leq \sqrt{2}$, for $|k_{2}|\leq |k_{1}|$\\

so that ${1\over |k_{1}|}\leq {\sqrt{2}\over k}$, for $|k_{2}|\leq |k_{1}|$, ${1\over |k_{2}|}\leq {2\over k}$, for $|k_{1}|\leq |k_{2}|$, the cases being exhaustive, $(B)$. Combining $(A),(B)$, we obtain that;\\

$|\mathcal{F}(f)(\overline{k})|\leq {\sqrt{2}|{\mathcal{F}({\partial f\over \partial x})(\overline{k})|\over k}|}$, for $|k_{2}|\leq |k_{1}|$\\

$|\mathcal{F}(f)(\overline{k})|\leq {\sqrt{2}|{\mathcal{F}({\partial f\over \partial y})(\overline{k})|\over k}|}$, for $|k_{1}|\leq |k_{2}|$\\

$|\mathcal{F}(f)(\overline{k})|\leq \sqrt{2}{max(|\mathcal{F}({\partial f\over \partial x})(\overline{k})|,|\mathcal{F}({\partial f\over \partial y})(\overline{k})|)\over k}$\\

$\leq {\sqrt{2}(|\mathcal{F}({\partial f\over \partial x})(\overline{k})|+|\mathcal{F}({\partial f\over \partial y})(\overline{k})|)\over k}$\\

By Holder's inequality, we have that;\\

 ${\sqrt{2}(|\mathcal{F}({\partial f\over \partial x})(\overline{k})|+|\mathcal{F}({\partial f\over \partial y})(\overline{k})|)\over k}\in L^{1}(B(\overline{0},1))$\\

so that $\mathcal{F}(f)(\overline{k})\in L^{1}(B(\overline{0},1))$. By the rapid decrease of $\mathcal{F}(f)$, for $|\overline{k}|>1$, we have that
$\mathcal{F}(f)(\overline{k})\in L^{1}({\mathcal{R}^{2}\setminus B(\overline{0},1)})$, so that $\mathcal{F}(f)(\overline{k})\in L^{1}(\mathcal{R}^{2})$.\\

For $f:\mathcal{R}^{3}\rightarrow\mathcal{R}$, $\{{\partial f\over \partial x},{\partial f\over \partial y},{\partial f\over \partial z}\}$ are of moderate decrease $2$, and smooth, so belong to $L^{2}(\mathcal{R}^{3})$, and by classical theory;\\

 $\{\mathcal{F}({\partial f\over \partial x}),\mathcal{F}({\partial f\over \partial y}),\mathcal{F}({\partial f\over \partial z}),|\mathcal{F}({\partial f\over \partial x})|+|\mathcal{F}({\partial f\over \partial y})|+|\mathcal{F}({\partial f\over \partial z})|\}\subset L^{2}(\mathcal{R}^{3})$\\

as well. In particular;\\

$\{\mathcal{F}({\partial f\over \partial x}),\mathcal{F}({\partial f\over \partial y}),\mathcal{F}({\partial f\over \partial z}),|\mathcal{F}({\partial f\over \partial x})|+|\mathcal{F}({\partial f\over \partial y})|+|\mathcal{F}({\partial f\over \partial z})|\}\subset L^{2}(B(\overline{0},1))$\\

A simple integration using polar coordinates, shows that ${1\over k}\in L^{2}(B(\overline{0},1))$. As above, we have that, for $k_{1}\neq 0,k_{2}\neq 0, k_{3}\neq 0$;\\

$\mathcal{F}(f)(\overline{k})={\mathcal{F}({\partial f\over \partial x})(\overline{k})\over ik_{1}}={\mathcal{F}({\partial f\over \partial y})(\overline{k})\over ik_{2}}={\mathcal{F}({\partial f\over \partial z})(\overline{k})\over ik_{3}}$ $(AA)$\\

Observe that;\\

${1\over k}={1\over |k_{1}|}{1\over (1+{k_{2}^{2}\over k_{1}^{2}}+{k_{3}^{2}\over k_{1}^{2}})^{1\over 2}}={1\over |k_{2}|}{1\over (1+{k_{1}^{2}\over k_{2}^{2}}+{k_{3}^{2}\over k_{2}^{2}})^{1\over 2}}={1\over |k_{3}|}{1\over (1+{k_{1}^{2}\over k_{3}^{2}}+{k_{2}^{2}\over k_{3}^{2}})^{1\over 2}}$\\

and;\\

$1\leq (1+{k_{1}^{2}\over k_{2}^{2}}+{k_{3}^{2}\over k_{2}^{2}})^{1\over 2}\leq \sqrt{3}$, for $max(|k_{1}|,|k_{3}|)\leq |k_{2}|$\\

$1\leq (1+{k_{2}^{2}\over k_{1}^{2}}+{k_{3}^{2}\over k_{1}^{2}})^{1\over 2}\leq \sqrt{3}$, for $max(|k_{2}|,|k_{3}|)\leq |k_{1}|$\\

$1\leq (1+{k_{1}^{2}\over k_{3}^{2}}+{k_{2}^{2}\over k_{3}^{2}})^{1\over 2}\leq \sqrt{3}$, for $max(|k_{1}|,|k_{2}|)\leq |k_{3}|$\\

so that ${1\over |k_{1}|}\leq {\sqrt{3}\over k}$, for $max(|k_{2}|,|k_{3}|)\leq |k_{1}|$, ${1\over |k_{2}|}\leq {\sqrt{3}\over k}$, for $max(|k_{1}|,|k_{3}|)\leq |k_{2}|$, ${1\over |k_{3}|}\leq {\sqrt{3}\over k}$, for $max(|k_{1}|,|k_{2}|)\leq |k_{3}|$ the cases being exhaustive, $(BB)$. Combining $(AA),(BB)$, we obtain that;\\

$|\mathcal{F}(f)(\overline{k})|\leq {\sqrt{3}|{\mathcal{F}({\partial f\over \partial x})(\overline{k})|\over k}|}$, for $max(|k_{2}|,|k_{3}|)\leq |k_{1}|$\\

$|\mathcal{F}(f)(\overline{k})|\leq {\sqrt{3}|{\mathcal{F}({\partial f\over \partial y})(\overline{k})|\over k}|}$, for $max(|k_{1}|,|k_{3}|)\leq |k_{2}|$\\

$|\mathcal{F}(f)(\overline{k})|\leq {\sqrt{3}|{\mathcal{F}({\partial f\over \partial z})(\overline{k})|\over k}|}$, for $max(|k_{1}|,|k_{2}|)\leq |k_{3}|$\\

$|\mathcal{F}(f)(\overline{k})|\leq \sqrt{3}{max(|\mathcal{F}({\partial f\over \partial x})(\overline{k})|,|\mathcal{F}({\partial f\over \partial y})(\overline{k})|,|\mathcal{F}({\partial f\over \partial z})(\overline{k})|)\over k}$\\

$\leq {\sqrt{3}(|\mathcal{F}({\partial f\over \partial x})(\overline{k})|+|\mathcal{F}({\partial f\over \partial y})(\overline{k})|+|\mathcal{F}({\partial f\over \partial z})(\overline{k})|)\over k}$\\

By the Cauchy-Schwartz inequality, we have that;\\

 ${\sqrt{3}(|\mathcal{F}({\partial f\over \partial x})(\overline{k})|+|\mathcal{F}({\partial f\over \partial y})(\overline{k})|+|\mathcal{F}({\partial f\over \partial x})(\overline{k})|)\over k}\in L^{1}(B(\overline{0},1))$\\

so that $\mathcal{F}(f)(\overline{k})\in L^{1}(B(\overline{0},1))$. By the rapid decrease of $\mathcal{F}(f)$, for $|\overline{k}|>1$, we have that
$\mathcal{F}(f)(\overline{k})\in L^{1}({\mathcal{R}^{3}\setminus B(\overline{0},1)})$, so that $\mathcal{F}(f)(\overline{k})\in L^{1}(\mathcal{R}^{3})$.\\

\end{proof}
\begin{defn}
\label{twodimensions}
Let $f\in C^{\infty}(\mathcal{R}^{2})$ be quasi split normal with ${\partial^{i_{1}+i_{2}}f\over \partial x^{i_{1}}\partial y^{i_{2}}}$ bounded for $0\leq i_{1}+i_{2}\leq 27$. Let $C_{m}=\{(x,y)\in\mathcal{R}^{2}:|x|\leq m,|y|\leq m\}$. Let;\\

$Q_{m}={\mathcal{R}^{2}\setminus (x=m\cup x=-m\cup y=m\cup y=-m)}$\\

$C^{13,14,m}(\mathcal{R}^{2})=\{h:{\partial^{i+j}h\over \partial x^{i}\partial y^{j}}, 0\leq i,j\leq 13,\ define\ continuous\ functions,$\\

${\partial^{i+14}h\over \partial x^{i}\partial y^{14}},{\partial^{i+14}h\over \partial x^{14}\partial y^{i}},0\leq i\leq 13,\ define\ bounded\ functions\ on\ Q_{m}\}$\\

Then we define an inflexionary approximation sequence $\{f_{m}:m\in\mathcal{N}\}$ by the requirements;\\

$(i)$. $f_{m}\in C^{13,14,m}(\mathcal{R}^{2})$\\

$(ii)$. $f_{m}|_{C_{m}}=f|_{C_{m}}$\\

$(iii)$ $f_{m}|_{({\mathcal{R}^{2}\setminus C_{m+{1\over m^{2}}}})}=0$\\

Letting $g_{m}=f_{m}|_{[-m,m]\times [-m-{1\over m^{2}},m+{1\over m^{2}}]}$;\\

$(iv)$. For $|x|\leq m$, for $0\leq i\leq 13$;\\

${\partial^{i} g_{m}\over \partial y^{i}}|_{(x,m)}={\partial^{i} f\over \partial y^{i}}|_{(x,m)}$\\

${\partial^{i} g_{m}\over \partial y^{i}}|_{(x,-m)}={\partial^{i} f\over \partial y^{i}}|_{(x,-m)}$\\

${\partial^{i} g_{m}\over \partial y^{i}}|_{(x,m+{1\over m})}=0$\\

${\partial^{i} g_{m}\over \partial y^{i}}|_{(x,-m-{1\over m})}=0$\\

$(v)$. For $|x|\leq m$\\

if ${\partial^{14}f\over \partial y^{14}}|_{(x,m)}>0$, ${\partial^{14}g_{m}\over \partial y^{14}}|_{V_{x,m}}\geq 0$\\

if ${\partial^{14}f\over \partial y^{14}}|_{(x,m)}<0$, ${\partial^{14}g_{m}\over \partial y^{14}}|_{V_{x,m}}\leq 0$\\

if ${\partial^{14}f\over \partial y^{14}}|_{(x,-m)}>0$, ${\partial^{14}g_{m}\over \partial y^{14}}|_{V_{x,-m}}\geq 0$\\

if ${\partial^{14}f\over \partial y^{14}}|_{(x,-m)}<0$, ${\partial^{14}g_{m}\over \partial y^{14}}|_{V_{x,-m}}\leq 0$\\

The same property as $(iv),(v)$ holding, replacing $f$ and $g_{m}$ with ${\partial^{i}f\over \partial x^{i}}$ and ${\partial g_{m}\over \partial x^{i}}$, for $0\leq i\leq 13$.\\

$(vi)$. For $|y|\leq m+{1\over m^{2}}$, $0\leq i\leq 13$\\

${\partial^{i} f_{m}\over \partial x^{i}}|_{(m,y)}={\partial^{i} g_{m}\over \partial x^{i}}|_{(m,y)}$\\

${\partial^{i} f_{m}\over \partial x^{i}}|_{(-m,y)}={\partial^{i} g_{m}\over \partial x^{i}}|_{(-m,y)}$\\

${\partial^{i} f_{m}\over \partial x^{i}}|_{(m+{1\over m},y)}=0$\\

${\partial^{i} f_{m}\over \partial x^{i}}|_{(-m-{1\over m},y)}=0$\\

$(vii)$ For $|y|\leq m+{1\over m^{2}}$\\

if ${\partial^{14}g_{m}\over \partial x^{14}}|_{(m,y)}>0$, ${\partial^{14}f_{m}\over \partial x^{14}}|_{H_{m,y}}\geq 0$\\

if ${\partial^{14}g_{m}\over \partial x^{14}}|_{(m,y)}<0$, ${\partial^{14}f_{m}\over \partial x^{14}}|_{H_{m,y}}\leq 0$\\

if ${\partial^{14}g_{m}\over \partial x^{14}}|_{(-m,y)}>0$, ${\partial^{14}f_{m}\over \partial x^{14}}|_{H_{-m,y}}\geq 0$\\

if ${\partial^{14}g_{m}\over \partial x^{14}}|_{(-m,y)}<0$, ${\partial^{14}f_{m}\over \partial x^{14}}|_{H_{-m,y}}\leq 0$\\

The same property as $(vi),(vii)$ holding, replacing $f_{m}$ and $g_{m}$ with ${\partial^{i}f_{m}\over \partial y^{i}}$ and ${\partial g_{m}\over \partial y^{i}}$, for $0\leq i\leq 14$.\\

where;\\

$V_{x,m}=\{(x,y)\in\mathcal{R}^{2}:y\in (m,m+{1\over m^{2}})\}$\\

$V_{x,-m}=\{(x,y)\in\mathcal{R}^{2}:y\in (-m-{1\over m^{2}},-m)\}$\\

$H_{m,y}=\{(x,y)\in\mathcal{R}^{2}:x\in (m,m+{1\over m^{2}})\}$\\

$H_{-m,y}=\{(x,y)\in\mathcal{R}^{2}:x\in (-m-{1\over m^{2}},-m)\}$\\

\end{defn}
\begin{defn}
\label{threedimensions}
Let $f\in C^{\infty}(\mathcal{R}^{3})$ be quasi split normal with ${\partial^{i_{1}+i_{2}+i_{3}}f\over \partial x^{i_{1}}\partial y^{i_{2}}\partial z^{i_{3}}}$ bounded for $0\leq i_{1}+i_{2}+i_{3}\leq 40$. Let $C_{m}=\{(x,y,z)\in\mathcal{R}^{2}:|x|\leq m,|y|\leq m,|z|\leq m\}$. Let;\\

$Q_{m}={\mathcal{R}^{3}\setminus (x=m\cup x=-m\cup y=m\cup y=-m\cup z=m\cup z=-m)}$\\

$C^{13,13,14,m}(\mathcal{R}^{3})=\{h:{\partial^{i+j+k}h\over \partial x^{i}\partial y^{j}\partial z^{k}}, 0\leq i,j,k\leq 13,\ define\ continuous\ functions,$\\

${\partial^{i+j+14}h\over \partial x^{i}\partial y^{j}\partial z^{14}},{\partial^{i+j+14}h\over \partial x^{i}\partial y^{14}\partial z^{j}},{\partial^{i+j+14}h\over \partial x^{14}\partial y^{i}\partial z^{j}},0\leq i,j\leq 13,\ define\ bounded\ functions\ on\ Q_{m}\}$\\

Then we define an inflexionary approximation sequence $\{f_{m}:m\in\mathcal{N}\}$ by the requirements;\\

$(i)$. $f_{m}\in C^{13,13,14}(\mathcal{R}^{3})$\\

$(ii)$. $f_{m}|_{C_{m}}=f|_{C_{m}}$\\

$(iii)$ $f_{m}|_{({\mathcal{R}^{3}\setminus C_{m+{1\over m^{3}}}})}=0$\\

$(iv)$. For $0\leq |y|\leq m,0\leq |z|\leq m$, for $0\leq i\leq 13$;\\

${\partial^{i} f_{m}\over \partial x^{i}}|_{(m,y,z)}={\partial^{i} f\over \partial x^{i}}|_{(m,y,z)}$\\

${\partial^{i} f_{m}\over \partial x^{i}}|_{(-m,y,z)}={\partial^{i} f\over \partial x^{i}}|_{(-m,y,z)}$\\

${\partial^{i} f_{m}\over \partial x^{i}}|_{(m+{1\over m},y,z)}=0$\\

${\partial^{i} f_{m}\over \partial x^{i}}|_{(-m-{1\over m},y,z)}=0$\\

$(v)$. For $0\leq |y|\leq m,0\leq |z|\leq m$\\

if ${\partial^{14}f\over \partial x^{14}}|_{(m,y,z)}>0$, ${\partial^{14}f_{m}\over \partial x^{14}}|_{H_{m,y,z}}\geq 0$\\

if ${\partial^{14}f\over \partial y^{14}}|_{(m,y,z)}<0$, ${\partial^{14}f_{m}\over \partial x^{14}}|_{H_{m,y,z}}\leq 0$\\

if ${\partial^{14}f\over \partial y^{14}}|_{(-m,y,z)}>0$, ${\partial^{14}f_{m}\over \partial x^{14}}|_{H_{-m,y,z}}\geq 0$\\

if ${\partial^{14}f\over \partial y^{14}}|_{(-m,y,z)}<0$, ${\partial^{14}f_{m}\over \partial x^{14}}|_{H_{-m,y,z}}\leq 0$\\

$(vi)$. For $0\leq |x|\leq m+{1\over m^{3}}$ $0\leq |z|\leq m$, $0\leq i\leq 13$\\

${\partial^{i} f_{m}\over \partial y^{i}}|_{(x,y,z)}={\partial^{i} f_{m}\over \partial y^{i}}|_{(x,m,z)}$, $m\leq y\leq m+{1\over m}$\\

${\partial^{i} f_{m}\over \partial y^{i}}|_{(x,y,z)}={\partial^{i} f_{m}\over \partial y^{i}}|_{(x,-m,z)}$, $-m-{1\over m}\leq y\leq -m$\\

${\partial^{i} f_{m}\over \partial y^{i}}|_{(x,m+{1\over m^{3}},z)}=0$\\

${\partial^{i} f_{m}\over \partial y^{i}}|_{(x,-m-{1\over m^{3}},z)}=0$\\

$(vii)$ For $0\leq |x|\leq m+{1\over m^{3}}$, $0\leq |z|\leq m$\\

if ${\partial^{14}f_{m}\over \partial y^{14}}|_{(x,m,z)}>0$, ${\partial^{14}f_{m}\over \partial y^{14}}|_{V_{x,m,z}}\geq 0$\\

if ${\partial^{14}f_{m}\over \partial y^{14}}|_{(x,m,z)}<0$, ${\partial^{14}f_{m}\over \partial y^{14}}|_{V_{x,m,z}}\leq 0$\\

if ${\partial^{14}f_{m}\over \partial y^{14}}|_{(x,-m,z)}>0$, ${\partial^{14}f_{m}\over \partial y^{14}}|_{V_{x,-m,z}}\geq 0$\\

if ${\partial^{14}f_{m}\over \partial y^{14}}|_{(x,-m,z)}<0$, ${\partial^{14}f_{m}\over \partial y^{14}}|_{V_{x,-m,z}}\leq 0$\\

$(viii)$. For $0\leq |x|\leq m+{1\over m^{3}}$ $0\leq |y|\leq m+{1\over m^{3}}$, $0\leq i\leq 13$\\

${\partial^{i} f_{m}\over \partial z^{i}}|_{(x,y,z)}={\partial^{i} f_{m}\over \partial z^{i}}|_{(x,y,m)}$, $m\leq z\leq m+{1\over m^{3}}$\\

${\partial^{i} f_{m}\over \partial z^{i}}|_{(x,y,z)}={\partial^{i} f_{m}\over \partial z^{i}}|_{(x,y,-m)}$, $-m-{1\over m^{3}}\leq z\leq -m$\\

${\partial^{i} f_{m}\over \partial z^{i}}|_{(x,y,m+{1\over m^{3}})}=0$\\

${\partial^{i} f_{m}\over \partial z^{i}}|_{(x,y,-m-{1\over m^{3}})}=0$\\

$(ix)$ For $0\leq |x|\leq m+{1\over m^{3}}$, $0\leq |y|\leq m+{1\over m^{3}}$\\

if ${\partial^{14}f_{m}\over \partial z^{14}}|_{(x,y,m)}>0$, ${\partial^{14}f_{m}\over \partial z^{14}}|_{D_{x,y,m}}\geq 0$\\

if ${\partial^{14}f\over \partial z^{14}}|_{(x,y,m)}<0$, ${\partial^{14}f_{m}\over \partial z^{14}}|_{D_{x,y,m}}\leq 0$\\

if ${\partial^{14}f\over \partial z^{14}}|_{(x,y,-m)}>0$, ${\partial^{14}f_{m}\over \partial z^{14}}|_{D_{x,y,-m}}\geq 0$\\

if ${\partial^{14}f\over \partial z^{14}}|_{(x,y,-m)}<0$, ${\partial^{14}f_{m}\over \partial z^{14}}|_{D_{x,y,-m}}\leq 0$\\

where;\\

$H_{m,y,z}=\{(x,y,z)\in\mathcal{R}^{3}:x\in (m,m+{1\over m^{3}})\}$\\

$H_{-m,y,z}=\{(x,y,z)\in\mathcal{R}^{3}:x\in (-m-{1\over m^{3}},-m)\}$\\

$V_{x,m,z}=\{(x,y,z)\in\mathcal{R}^{3}:y\in (m,m+{1\over m^{3}})\}$\\

$V_{x,-m,z}=\{(x,y,z)\in\mathcal{R}^{3}:y\in (-m-{1\over m^{3}},-m)\}$\\

$D_{x,y,m}=\{(x,y,z)\in\mathcal{R}^{3}:z\in (m,m+{1\over m^{3}})\}$\\

$D_{x,y,-m}=\{(x,y,z)\in\mathcal{R}^{3}:z\in (-m-{1\over m^{3}},-m)\}$\\
\end{defn}

We now address the issue of the construction of inflexionary approximation sequences in the $2$ and $3$ dimensional cases.\\

\begin{lemma}
\label{squarecube}
The results of Lemma 0.5 in \cite{dep} hold, replacing the intervals $[m,m+{1\over m}]$ with $[m,m+{1\over m^{2}}]$ and $[m,m+{1\over m^{3}}]$.

\end{lemma}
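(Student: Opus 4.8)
\noindent\emph{Proof sketch.} The plan is to reproduce the argument of Lemma 0.5 in \cite{dep}, exploiting the fact that the length of a transition interval enters that construction only through an affine change of variable. In the two--dimensional case one substitutes $s=m^{2}(t-m)$ in the normal variable $t$, carrying the shell $[m,m+\frac{1}{m^{2}}]$ onto $[0,1]$; in the three--dimensional case one uses $s=m^{3}(t-m)$. Under such a substitution a prescribed $13$--jet $(b_{0},\dots,b_{13})$ at the inner face becomes $(b_{0},b_{1}m^{-j},\dots,b_{13}m^{-13j})$ with $j\in\{2,3\}$, the vanishing conditions at the outer face are preserved, and the sign of the order--$14$ normal derivative is unchanged; hence the solvability of the relevant interpolation problem, and the construction that solves it, are unaffected by the change of interval. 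One solves on $[0,1]$ exactly as in \cite{dep} and transports the answer back by the inverse substitution.

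Thus the first step is to recall the one--variable building block of Lemma 0.5: given data $a_{0},\dots,a_{13}$ at one endpoint, vanishing of the value and of the first $13$ derivatives at the other, and the sign prescription inherited from the order--$14$ derivative at the first endpoint (no constraint when that datum is zero), one produces $\phi\in C^{13}[0,1]$ realising these boundary conditions, with $\phi^{(14)}$ bounded on the open interval and of the prescribed sign. Nothing in that construction uses the interval beyond its having positive length, so it applies verbatim to $[m,m+\frac{1}{m^{2}}]$ and to $[m,m+\frac{1}{m^{3}}]$.

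The second step is the multidimensional assembly, again as in \cite{dep}. In dimension $2$ one first applies the building block in the $y$--direction along each vertical segment $\{x\}\times[m,m+\frac{1}{m^{2}}]$ and its reflection, with data $\frac{\partial^{i}f}{\partial y^{i}}(x,\pm m)$ depending smoothly on $x$ and sign read off from $\frac{\partial^{14}f}{\partial y^{14}}(x,\pm m)$, doing the same simultaneously to each $\frac{\partial^{i}f}{\partial x^{i}}$ for $0\le i\le 13$, so as to obtain $g_{m}$; one then applies the building block in the $x$--direction over $|y|\le m+\frac{1}{m^{2}}$, with data coming from $g_{m}$ and its $x$--derivatives, which the previous step has already matched along $x=\pm m$. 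This yields $f_{m}$ and, by the nested order of the two extensions, settles the four corner shells. In dimension $3$ the same scheme is carried out successively in $x$, then $y$, then $z$, with widths $\frac{1}{m^{3}}$, the edges and corners being produced automatically.

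It then remains, exactly as in \cite{dep} but with the new exponents, to verify that the assembled $f_{m}$ satisfies conditions $(i)$--$(vii)$ (resp. $(i)$--$(ix)$): that all partials of order $\le 13$ are continuous across the faces $x=\pm m$, $y=\pm m$, $z=\pm m$ and across the outer faces, where they agree with the $C^{\infty}$ function $f$ or with $0$; that the order--$14$ normal derivative created at each step has the prescribed sign and is bounded on the associated $V$, $H$, $D$ region; and that the later extensions in the remaining variables differentiate these quantities only in tangential directions, so that membership in $C^{13,14,m}(\mathcal{R}^{2})$ (resp. $C^{13,13,14,m}(\mathcal{R}^{3})$) survives. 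Since for each fixed $m$ the resulting $f_{m}$ is a piecewise--smooth function of compact support, the boundedness requirements on $Q_{m}$ hold automatically; the factors $m^{2}$ (resp. $m^{3}$) introduced by the rescaling enlarge those bounds, but, as in the cited proof, no bound uniform in $m$ is claimed here. I expect the only delicate point to be the bookkeeping of the corner and edge shells, where several transition steps overlap, and this is handled precisely as in the proof of Lemma 0.5.
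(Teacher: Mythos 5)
Your affine rescaling idea ($s=m^{j}(t-m)$ carrying $[m,m+m^{-j}]$ onto $[0,1]$) is a clean route to the one--variable statement, and it is essentially equivalent to what the paper does: the paper instead observes that the coefficients of the auxiliary polynomial $p$ in the proof of Lemma 0.5 of \cite{dep} depend on $m$ only through ${1\over m}$, so one may substitute $m^{2}$ or $m^{3}$ for $m$ and then recheck directly that the order--$3$ derivative $h'''$ has no roots in $[m,m+{1\over m^{3}}]$ and that the sup bound persists. So the two methods agree in spirit.

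However there is one concrete gap and one issue of scope. The gap: you end by saying that ``no bound uniform in $m$ is claimed here,'' but the uniform bound is part of the content of Lemma 0.5 and is explicitly reproduced in the paper's proof (``$|h|_{[m+{1\over m^{3}}]}\leq C$, independently of $m>1$''). That uniform sup bound on $h$ (not on its order--$14$ derivative, which is the one governed by the $Q_{m}$ clause) is indispensable downstream: it yields the $m$--independent constants $C$ and $C_{i}$ in Lemmas~\ref{polynomial22D}, \ref{polynomial22D14}, \ref{polynomial23D14}, which in turn drive the $O(m^{2})$ and $O(m^{3})$ estimates in Lemmas~\ref{existence} and \ref{existence2}. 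Your own rescaling argument actually delivers it for free --- the sup norm is an affine invariant, and under $s=m^{j}(t-m)$ the jet data at the inner face shrink by factors $m^{-ij}\leq 1$ for $m>1$, so $\|\phi\|_{\infty,[0,1]}$ and hence $\|h\|_{\infty}$ admit a bound depending only on the original data --- but you should state and use it rather than disclaim it.

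The scope issue: most of your writeup (the two--ish paragraphs on building $g_{m}$ and $f_{m}$, the corner and edge shells, continuity across faces) proves the multidimensional assembly that is the business of Definitions~\ref{twodimensions}--\ref{threedimensions} and Lemmas~\ref{polynomial22D14}, \ref{polynomial23D14}, \ref{existence}, \ref{existence2}, not of the present lemma. Lemma~\ref{squarecube} only asserts that the one--variable interpolation result of Lemma 0.5 survives the change of interval width; the rest belongs elsewhere in the paper and repeating it here only obscures the short substitution argument that actually carries the day.
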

\begin{proof}
In the proof of Lemma 0.5 in \cite{dep}, observe that the coefficients of the polynomial $p$, depend only on the ${1\over m}$ term, so we can obtain the new coefficients for $p$ by substituting $m^{2}$ or $m^{3}$ for $m$. We then calculate in the ${1\over m^{3}}$ case, that;\\

$h'''(x)=(-360a_{0}m^{15}+O(m^{12}))x^{2}+(288a_{0}m^{18}+O(m^{16}))x$\\

$+(-36a_{0}m^{21}+O(m^{19}))$\\

 which has roots when;\\

 $x\simeq {-288a_{0}+/-176a_{0}m^{18}+O(m^{16})\over -720a_{0}m^{15}+O(m^{12})}=O(m^{3})+O(m)>0$\\

 Clearly, we can then assume that for sufficiently large $m$, $h'''(x)$ has no roots in the interval $[-m-{1\over m^{3}}]\cup[m,m+{1\over m^{3}}]$. For the final calculation, with $|h|_{[m+{1\over m^{3}}]}$, we can replace $m$ by $m^{3}$ throughout the proof, to get the same result, that $|h|_{[m+{1\over m^{3}}]}\leq C$, independently of $m>1$. The case with $m^{2}$ replacing $m$ is left to the reader.
\end{proof}

\begin{lemma}
\label{polynomial22D}
If $[a,b]\subset\mathcal{R}$, with $a,b$ finite, and $\{g,g_{1},g_{2}\}\subset C^{\infty}([a,b])$, then, if $m\in\mathcal{R}_{>0}$ is sufficiently large, there exists $h\in C^{\infty}([m,m+{1\over m^{2}}]\times [a,b])$, with the property that;\\

$h(m,y)=g(y)$, ${\partial h\over \partial x}|_{(m,y)}=g_{1}(y)$, ${\partial^{2}h\over \partial x^{2}}|_{(m,y)}=g_{2}(y)$, $y\in [a,b]$, $(i)$\\

$h(m+{1\over m^{2}},y)={\partial h\over \partial x}(m+{1\over m^{2}},y)={\partial^{2}h\over \partial x^{2}}(m+{1\over m^{2}},y)=0$, $y\in [a,b]$, $(ii)$\\

$|h|_{[m,m+{1\over m^{2}}]\times [a,b]}|\leq C$\\

for some $C\in\mathcal{R}_{>0}$, independent of $m$ sufficiently large, and, if ${\partial^{3}h\over \partial x^{3}}(m,y)>0$, ${\partial^{3} h\over \partial x^{3}}(x,y)>0$, for $x\in [m,m+{1\over m^{2}}]$, and if ${\partial^{3}h\over \partial x^{3}}(m,y)<0$, ${\partial^{3} h\over \partial x^{3}}(x,y)<0$, for $x\in [m,m+{1\over m^{2}}]$, $(*)$. In particularly;\\

$\int_{m}^{m+{1\over m^{2}}}|{\partial^{3} h\over \partial x^{3}}|_{(x,y)}|dx=|g_{2}(y)|$\\

Moreover, for $i\in\mathcal{N}$, ${\partial^{i}h\over \partial y^{i}}$ has the property that;\\

${\partial^{i}h\over \partial y^{i}}(m,y)=g^{(i)}(y)$, ${\partial^{i+1} h\over \partial y^{i}\partial x}|_{(m,y)}=g^{(i)}_{1}(y)$, ${\partial^{i+2}h\over \partial y^{i}\partial x^{2}}|_{(m,y)}=g^{(i)}_{2}(y)$\\

$y\in [a,b]$, $(i)'$\\

${\partial^{i}h\over \partial y^{i}}(m+{1\over m^{2}},y)={\partial^{i+1}h\over \partial y^{i}\partial x}(m+{1\over m^{2}},y)={\partial^{i+2}h\over \partial y^{i}\partial x^{2}}(m+{1\over m^{2}},y)=0$\\

$y\in [a,b]$, $(ii)'$\\

$|{\partial^{i}h\over \partial y^{i}}|_{[m,m+{1\over m^{2}}]\times [a,b]}|\leq C_{i}$\\

for some $C_{i}\in\mathcal{R}_{>0}$, independent of $m$ sufficiently large, and, if ${\partial^{i+3}h\over \partial y^{i}\partial x^{3}}(m,y)>0$, ${\partial^{i+3} h\over \partial y^{i}\partial x^{3}}(x,y)>0$, for $x\in [m,m+{1\over m^{2}}]$, and if ${\partial^{i+3}h\over \partial y^{i}\partial x^{3}}(m,y)<0$, ${\partial^{i+3} h\over \partial y^{i}\partial x^{3}}(x,y)<0$, for $x\in [m,m+{1\over m^{2}}]$, $(**)$. In particularly;\\

$\int_{m}^{m+{1\over m^{2}}}|{\partial^{i+3}h\over \partial y^{i}\partial x^{3}}|_{(x,y)}|dx=|g^{(i)}_{2}(y)|$\\

\end{lemma}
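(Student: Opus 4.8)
The plan is to realise $h$ as a $y$-parametrised family of the one-dimensional constructions of Lemma $0.5$ of \cite{dep}, transported to the interval $[m,m+\frac{1}{m^{2}}]$ by Lemma \ref{squarecube}. Recall that this one-dimensional result produces, for each triple $(c_{0},c_{1},c_{2})\in\mathcal{R}^{3}$ and each sufficiently large $m$, the degree-$5$ Hermite interpolant $P_{m}[c_{0},c_{1},c_{2}]\in C^{\infty}([m,m+\frac{1}{m^{2}}])$ determined by $P_{m}[c_{0},c_{1},c_{2}]^{(j)}(m)=c_{j}$ and $P_{m}[c_{0},c_{1},c_{2}]^{(j)}(m+\frac{1}{m^{2}})=0$ for $j=0,1,2$; its coefficients are polynomials in $1/m$ (read off from those of Lemma $0.5$ with $m^{2}$ substituted for $m$) and, as the explicit formulas for them show, depend \emph{linearly} on $(c_{0},c_{1},c_{2})$. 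Lemma $0.5$ together with Lemma \ref{squarecube} also furnish the uniform bound $|P_{m}[c_{0},c_{1},c_{2}]|\leq C$ on $[m,m+\frac{1}{m^{2}}]$ with $C$ independent of $m$ large — depending on the data only through $\max(|c_{0}|,|c_{1}|,|c_{2}|)$, as one sees by writing $P_{m}[c_{0},c_{1},c_{2}]=c_{0}P_{m}[1,0,0]+c_{1}P_{m}[0,1,0]+c_{2}P_{m}[0,0,1]$ — and the sign-definiteness of $P_{m}[c_{0},c_{1},c_{2}]'''$ on $[m,m+\frac{1}{m^{2}}]$.

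I would then set $h(x,y):=P_{m}[g(y),g_{1}(y),g_{2}(y)](x)$; concretely $h(x,y)=\sum_{j=0}^{5}\big(\alpha_{j}(m)g(y)+\beta_{j}(m)g_{1}(y)+\gamma_{j}(m)g_{2}(y)\big)x^{j}$ for suitable polynomials $\alpha_{j},\beta_{j},\gamma_{j}$ in $1/m$. Since $g,g_{1},g_{2}\in C^{\infty}([a,b])$, this $h$ lies in $C^{\infty}([m,m+\frac{1}{m^{2}}]\times[a,b])$. Properties $(i)$ and $(ii)$ are exactly the interpolation conditions defining $P_{m}$ with $(c_{0},c_{1},c_{2})=(g(y),g_{1}(y),g_{2}(y))$; the bound $|h|\leq C$ follows from the uniform one-dimensional bound and $\sup_{[a,b]}(|g|+|g_{1}|+|g_{2}|)<\infty$; and $(*)$ is the sign-definiteness of $P_{m}[g(y),g_{1}(y),g_{2}(y)]'''=\frac{\partial^{3}h}{\partial x^{3}}(\cdot,y)$. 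For the displayed integral identity, sign-definiteness brings the absolute value outside the integral, and by $(i),(ii)$
\[
\int_{m}^{m+\frac{1}{m^{2}}}\frac{\partial^{3}h}{\partial x^{3}}(x,y)\,dx=\frac{\partial^{2}h}{\partial x^{2}}\Big(m+\tfrac{1}{m^{2}},y\Big)-\frac{\partial^{2}h}{\partial x^{2}}(m,y)=-g_{2}(y),
\]
so the integral of $|\partial^{3}h/\partial x^{3}|$ over $[m,m+\frac{1}{m^{2}}]$ equals $|g_{2}(y)|$.

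For the \textbf{moreover} clause, linearity is the whole point. Differentiation in $y$ commutes with the $x$-polynomial structure, so
\[
\frac{\partial^{i}h}{\partial y^{i}}(x,y)=\sum_{j=0}^{5}\big(\alpha_{j}(m)g^{(i)}(y)+\beta_{j}(m)g_{1}^{(i)}(y)+\gamma_{j}(m)g_{2}^{(i)}(y)\big)x^{j}=P_{m}\big[g^{(i)}(y),g_{1}^{(i)}(y),g_{2}^{(i)}(y)\big](x);
\]
that is, $\frac{\partial^{i}h}{\partial y^{i}}(\cdot,y)$ is again a one-dimensional construction of the same type, now with data $(g^{(i)}(y),g_{1}^{(i)}(y),g_{2}^{(i)}(y))$. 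Applying the previous paragraph verbatim with $g,g_{1},g_{2}$ replaced by $g^{(i)},g_{1}^{(i)},g_{2}^{(i)}$ (each again smooth and bounded on $[a,b]$) yields $(i)'$, $(ii)'$, the bound $C_{i}$, the sign-definiteness $(**)$, and the last integral identity.

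I expect the only real obstacle to lie at the level of the cited one-dimensional lemma: one must verify that the threshold "$m$ sufficiently large" guaranteeing sign-definiteness of $P_{m}[c_{0},c_{1},c_{2}]'''$, and the constant in the uniform bound, depend on the data only through $\max(|c_{0}|,|c_{1}|,|c_{2}|)$ — which is precisely what the root estimate reproduced in the proof of Lemma \ref{squarecube} provides — so that, for each fixed $i$, a single choice of $m$ works uniformly in $y\in[a,b]$. Everything else reduces to linear algebra and the fundamental theorem of calculus.
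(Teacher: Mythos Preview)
Your proposal is correct and follows essentially the same approach as the paper: parametrise the one-dimensional degree-$5$ construction of Lemma~0.5 of \cite{dep} (on the $1/m^{2}$ interval via Lemma~\ref{squarecube}) by replacing the constants $(a_{0},a_{1},a_{2})$ with the smooth data $(g(y),g_{1}(y),g_{2}(y))$, observe that the resulting coefficients are linear in this data so that $\partial^{i}/\partial y^{i}$ reproduces the same construction with differentiated data, and invoke boundedness on $[a,b]$ for the uniform constants and the FTC for the integral identities. The paper's proof is the same argument, only recording the explicit bound $C=24D$ where $D=\max_{[a,b]}(|g|,|g_{1}|,|g_{2}|)$.
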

\begin{proof}
For the construction of $h$ in the first part, just use the proof of Lemma \ref{squarecube} and Lemma 0.5 in \cite{dep}, replacing the constant coefficients $\{a_{0},a_{1},a_{2}\}\subset \mathcal{R}$ with the data $\{g(y),g_{1}(y),g_{2}(y)\}$. The properties $(i),(ii)$ are then clear. Noting that $[a,b]$ is a finite interval and $\{g,g_{1},g_{2}\}\subset C^{\infty}([a,b])$, by continuity, there exists a constant $D$, with $max(|g(y)|,|g_{1}(y)|,|g_{2}(y)|:y\in [a,b])\leq D$, so, as in the proof of Lemma \ref{squarecube} and Lemma 0.5 in \cite{dep}, we can use the bound $C=16D+7D+D=24D$, for $m>1$. The proof of $(*)$ follows uniformly in $y$, as in the proof of \ref{squarecube} and Lemma 0.5 in \cite{dep}, for sufficiently large $m$, again using the fact that the data $\{g(y),g_{1}(y),g_{2}(y):y\in [a,b]\}$ is bounded. The next claim is just the FTC again. For the second part, when we calculate ${\partial^{i}h\over \partial y^{i}}$, for $i\in\mathcal{N}$, we are just differentiating the coefficients which are linear in the data $\{g(y),g_{1}(y),g_{2}(y)\}$, so we obtain a function which fits the data $\{g^{(i)}(y),g^{(i)}_{1}(y),g^{(i)}_{2}(y)\}$ and $(i)',(ii)'$ follow. Noting that, for $i\in\mathcal{N}$,  $\{g^{(i)},g^{(i)}_{1},g^{(i)}_{2}\}\subset C^{\infty}([a,b])$, again by continuity, there exists constants $D_{i}$, with $max(|g^{(i)}(y)|,|g^{(i)}_{1}(y)|,|g^{(i)}_{2}(y)|:y\in [a,b])\leq D_{i}$, so, again, as in the proof of Lemma \ref{polynomial2}, we can use the bound $C_{i}=16D_{i}+7D_{i}+D_{i}=24D_{i}$, for $m>1$. The proof of $(**)$ follows uniformly in $y$, for each $i\in\mathcal{N}$, as in the proof of Lemma \ref{squarecube} and Lemma 0.5 in \cite{dep}, for sufficiently large $m$, again using the fact that the data $\{g^{(i)}(y),g^{(i)}_{1}(y),g^{(i)}_{2}(y):y\in [a,b]\}$ is bounded. The last claim is again just the FTC.
\end{proof}
\begin{lemma}{Conjecture}\\
\label{polynomial2higher}
\\
Fix $n\in\mathcal{N}$, with $n\geq 3$. If $m\in\mathcal{R}_{>0}$ is sufficiently large, $\{a_{i}:0\leq i\leq n-1\}\subset\mathcal{R}$, there exists $h\in \mathcal{R}[x]$ of degree $2n-1$, with the property that;\\

$h^{(i)}(m)=a_{i}$, $0\leq i\leq n-1$ $(i)$\\

$h^{(i)}(m+{1\over m})=0$, $0\leq i\leq n-1$ $(ii)$\\

$|h|_{[m,m+{1\over m}]}|\leq C$\\

for some $C\in\mathcal{R}_{>0}$, independent of $m$ sufficiently large, and, if $h^{(n)}(m)>0$, $h^{(n)}(x)|_{[m,m+{1\over m}]}>0$, if $h^{(n)}(m)<0$, $h^{(n)}|_{[m,m+{1\over m}]}<0$. In particularly;\\

$\int_{m}^{m+{1\over m}}|h^{(n)}(x)|dx=|a_{n-1}|$, (\footnote{\label{two} If $a_{0}>0$, $a_{1}>0$, there does not exist a smooth function $h$ on the interval $(m,m+{1\over m})$, with $h(m)=a_{0}$, $h'(m)=a_{1}$, $h(m+{1\over m})=0$, $h'(m+{1\over m})=0$, such that $h''>0$ or $h''<0$. To see this, if $h''>0$, using the MVT, we have that $h'(x)>h'(m)>0$, for $x\in (m,m+{1\over m})$, contradicting the fact that $h'(m+{1\over m})=0$. If $h''<0$, and $h'(x)$ has no roots in the interval $(m,m+{1\over m})$, then as $h'(m)>0$, $h'(x)>0$ on $(m,m+{1\over m})$, and $h$ is increasing on $(m,m+{1\over m})$, so that $h(m+{1\over m})>h(m)=a_{0}>0$, contradicting the fact that $h(m+{1\over m})=0$. Otherwise, if $h'(x)$ has a root in the interval $(m,m+{1\over m})$, as $h''<0$, it attains a maximum at $x_{0}\in (m,m+{1\over m})$. Using the MVT again, we must have that for $y\in (x_{0},m+{1\over m})$, $h'(y)<h'(x_{0})=0$, so that $h'(m+{1\over m})<0$, contradicting the fact that $h'(m+{1\over m})=0$.})\\

The same conjecture applies with ${1\over m^{2}}$ and ${1\over m^{3}}$ replacing ${1\over m}$.\\

\end{lemma}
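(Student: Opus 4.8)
The natural route is Hermite interpolation together with a rescaling of the short interval $[m,m+\epsilon]$, where $\epsilon$ denotes whichever of $1/m,1/m^{2},1/m^{3}$ is in force, onto the unit interval $[0,1]$. Substituting $x=m+\epsilon t$ and setting $H(t)=h(m+\epsilon t)$, conditions $(i)$ and $(ii)$ become $H^{(i)}(0)=\epsilon^{i}a_{i}$ and $H^{(i)}(1)=0$ for $0\leq i\leq n-1$. By the standard Hermite interpolation theorem there is a unique polynomial $H$ of degree $\leq 2n-1$ meeting these, namely $H=\sum_{i=0}^{n-1}\epsilon^{i}a_{i}\,\phi_{i}$, where $\{\phi_{i}\}_{i=0}^{n-1}$ is the fixed, $m$-independent basis of degree-$(2n-1)$ polynomials with $\phi_{i}^{(j)}(0)=\delta_{ij}$ and $\phi_{i}^{(j)}(1)=0$ for $0\leq j\leq n-1$; then $h(x)=H(\epsilon^{-1}(x-m))$ has degree $2n-1$ and is forced. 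The plan is to verify the three asserted properties for this $h$.

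Two of them come essentially for free. For the bound, $\epsilon\leq 1$ when $m>1$, so $\|H\|_{L^{\infty}[0,1]}\leq\sum_{i=0}^{n-1}|a_{i}|\,\|\phi_{i}\|_{L^{\infty}[0,1]}=:C$, a constant depending on $\{a_{i}\}$ but not on $m$, and $\|h\|_{L^{\infty}[m,m+\epsilon]}=\|H\|_{L^{\infty}[0,1]}\leq C$; this is exactly the kind of estimate carried out in Lemma \ref{polynomial22D} and in \cite{dep}. For the integral identity, once $h^{(n)}$ is known to have constant sign on $[m,m+\epsilon]$, the fundamental theorem of calculus gives $\int_{m}^{m+\epsilon}|h^{(n)}|=|h^{(n-1)}(m+\epsilon)-h^{(n-1)}(m)|=|0-a_{n-1}|=|a_{n-1}|$ by $(i)$ and $(ii)$. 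So the whole content sits in the sign statement.

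For the sign statement one has $h^{(n)}(x)=\epsilon^{-n}H^{(n)}(\epsilon^{-1}(x-m))$, so the sign of $h^{(n)}$ on $[m,m+\epsilon]$ agrees with that of $H^{(n)}=\sum_{i=0}^{n-1}\epsilon^{i}a_{i}\,\phi_{i}^{(n)}$ on $[0,1]$. Letting $j$ be the least index with $a_{j}\neq 0$, one has $H^{(n)}=\epsilon^{j}(a_{j}\phi_{j}^{(n)}+O(\epsilon))$ uniformly on $[0,1]$; if $\phi_{j}^{(n)}$ were zero-free on $[0,1]$ it would have constant sign there, and then for $m$ large (a threshold depending on $\{a_{i}\}$) $H^{(n)}$ would inherit that sign, which is also the sign of $H^{(n)}(0)$ and hence of $h^{(n)}(m)$. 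This is precisely the mechanism visible in the proof of Lemma \ref{squarecube}, where after the substitution $m\mapsto m^{3}$ one checks that the roots of $h'''$ sit at $O(m^{3})$, outside $[m,m+1/m^{3}]$, for large $m$.

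The obstacle --- presumably the reason this is flagged as a Conjecture --- is that $\phi_{j}^{(n)}$ need not be zero-free on $[0,1]$. Already for $n=3$ one computes $\phi_{0}^{(3)}(t)=-60(6t^{2}-6t+1)$, $\phi_{1}^{(3)}(t)=-12(15t^{2}-16t+3)$ and $\phi_{2}^{(3)}(t)=-3(10t^{2}-12t+3)$, each having both roots in $(0,1)$; near such a simple root the subleading terms of $H^{(n)}$ do not generically restore the sign, so the bare Hermite interpolant can violate the sign condition even when $a_{0}\neq 0$. A proof of the statement as written therefore cannot rest on the plain interpolant: one would have to weaken the hypothesis --- for instance to data for which $\phi_{j}^{(n)}$ happens not to vanish on $[0,1]$, which is the generic non-degenerate case relevant to Definitions \ref{twodimensions} and \ref{threedimensions} --- or else produce a genuinely different degree-$(2n-1)$ construction in which $h^{(n)}$ factors through a polynomial whose roots are pushed out of $[m,m+\epsilon]$ as $m\to\infty$, in the spirit of \cite{dep}. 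I would attempt the latter by writing $h^{(n)}$ in a form adapted to the nodes $m$ and $m+\epsilon$ and tracking the precise $m$-scaling of its coefficients, exactly as in Lemma \ref{squarecube}; I expect this to be the genuinely hard step, and I would not expect it to succeed for entirely arbitrary data.
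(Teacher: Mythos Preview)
Your construction and the paper's are the same polynomial: by Hermite uniqueness there is exactly one $h$ of degree $\le 2n-1$ meeting the $2n$ conditions, and the paper's factored form $h(x)=(x-(m+\epsilon))^{n}p(x)$ with $p$ of degree $n-1$ produces precisely that interpolant. Your rescaling $t=(x-m)/\epsilon$ is a cleaner bookkeeping device than the paper's direct expansion in powers of $m$, but the two analyses are equivalent; both obtain the uniform bound $|h|\le C$ in the same way and both reduce the integral identity to the FTC once the sign claim is granted.

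Where you go further than the paper is on the sign claim itself. The paper's sketch argues that the real roots of $h^{(n)}$ lie at $t_{s_0}m+O(1)$ with $t_{s_0}$ a root of a computable degree-$(n-1)$ polynomial $r_n$, and then \emph{conjectures} $r_n(1)\neq 0$ so that these roots avoid $[m,m+1/m]$ for large $m$. Your explicit $n=3$ computation of $\phi_0^{(3)}(t)=-60(6t^{2}-6t+1)$ translates back, for data $a_0\neq 0$, $a_1=a_2=0$, to $h^{(3)}(x)=-360m^{5}x^{2}+(720m^{6}+\ldots)x+(-360m^{7}+\ldots)$, so that in the paper's notation $r_3(t)=-360(t-1)^{2}$ and hence $r_3(1)=0$ with a \emph{double} root at $1$. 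That is exactly the degeneracy the paper's criterion cannot exclude, and it matches your observation that $\phi_0^{(3)}$ has both roots in $(0,1)$: the unique interpolant genuinely changes sign on the interval for this data, for every $m$. (The coefficients quoted in Lemma~\ref{squarecube}, read back to the $1/m$ case, would give $r_3(1)\neq 0$; your computation disagrees with those numbers and is correct.)

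So your proposal is not merely a blind reproduction of the paper's sketch; it sharpens it, and your concluding paragraph is right on the substance: the sign assertion cannot hold for the forced degree-$(2n-1)$ interpolant with arbitrary data, and a proof would need either a restriction on the $a_i$ or a different, non-unique construction with extra degrees of freedom used to push the roots of $h^{(n)}$ out of the interval.
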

\begin{proof}
We sketch a proof based on the special case $n=3$, which was shown in Lemma 0.5 of \cite{dep}, leaving the details to the reader, (\footnote{\label{inflexionarycurves} One step requires the verification that for a computable polynomial $r_{n}$ of degree $n-1$, $r_{n}(1)\neq 0$, which is highly unlikely on generic grounds and the fact that $r_{3}(1)\neq 1$, although $r_{2}(1)=1$, see footnote \ref{two}. The geometric idea is that allowing for inflexionary type curves, where we can have points $x_{0,i}\in (m,m+{1\over m})$ for which $h^{(i)}(x_{0,i})=0$, where $2\leq i\leq n-1$, the end conditions can be satisfied while still having $h^{(n)}|_{(m,m+{1\over m})}>0$ or $h^{(n)}|_{(m,m+{1\over m})}<0$. However, you still need to do a concrete calculation, which in the case of verifying the conjecture for all $n\in\mathcal{N}$, $n\geq 3$, would involve finding the exact pattern in the coefficients obtained in the proof of Lemma 0.5 of \cite{dep}. We actually only need the result for some $n\geq 14$ in the rest of this paper.}).  We have that $h(x)=(x-(m+{1\over m}))^{n}p(x)$ where $p(x)$ is a polynomial satisfies condition $(ii)$. Computing the derivatives $h^{(i)}(m)$, for $0\leq i\leq n-1$, we obtain $n$ linear equations involving the unknowns $p^{(i)}(m)$, $0\leq i\leq n-1$, of the form;\\

$\sum_{k=0}^{i}{d_{ik}p^{(k)}(m)\over m^{n-i+k}}=a_{i}$, $(0\leq i\leq n-1)$ $(*)$\\

which we can solve for $p^{(i)}(m)$, $0\leq i\leq n-1$, using the fact that the matrix $(d_{ik})_{0\leq i\leq n-1,0\leq k\leq i}$ is lower triangular and $|d_{ii}|=1$, for $0\leq i\leq n-1$. Then we can take;\\

$p(x)=\sum_{i=0}^{n-1}p^{(i)}(m)(x-m)^{i}$\\

so that $h$ has degree $n+(n-1)=2n-1$. It is clear from $(*)$, that we have;\\

$p^{(i)}(m)=\sum_{k=0}^{i}c_{ik}a_{i-k}m^{n+k}$, $(0\leq i\leq n-1)$\\

where $(c_{ik})_{0\leq i\leq n-1,0\leq k\leq i}$ is a real matrix, so that $p(x)$ has the form;\\

$p(x)=\sum_{i=0}^{n-1}v_{i}x^{i}$ $(**)$\\

where;\\

$v_{n-1-i}=\sum_{k=0}^{n-1}r_{ik}m^{n+k}+\sum_{l=0}^{i}s_{il}m^{2n-1+l}$, $(0\leq i\leq n-1)$\\

for real matrices $(r_{ik})_{0\leq i\leq n-1,0\leq k\leq n-1}$ and $(s_{il})_{0\leq i\leq n-1,0\leq l\leq i}$.\\

It is then clear, using the product rule and $(**)$, that;\\

$h^{(n)}(x)=\sum_{k=0}^{n-1}w_{k}x^{k}$\\

where $w_{k}=z_{k}a_{0}m^{3n-2-k}+O(m^{3n-3-k})$, $(0\leq k\leq n-1)$\\

By homogeneity, it is then clear that the real roots of $h^{(n)}(x)$ are of the form $t_{s_{0}}m+O(1)$, where $t_{s_{0}}\in\mathcal{R}$, $1\leq s_{0}\leq n-1$, and $t_{s_{0}}$ satisfies a polynomial $r(x)$ of degree $n-1$, which is effectively computable for given $n$. We can exclude any roots in the interval $[m,m+{1\over m}]$, for sufficiently large $m$, provided $t_{s_{0}}\neq 1$, for $1\leq s_{0}\leq n-1$, which we can check by showing that $r(1)\neq 0$. We have that;\\

$|h|_{(m,m+{1\over m})}|=|(x-(m+{1\over m}))^{n}p(x)|$\\

$\leq {1\over m^{n}}|\sum_{i=0}^{n-1}p^{(i)}(m)(x-m)^{i}|$\\

$\leq {1\over m^{n}}\sum_{i=0}^{n-1}{|p^{(i)}(m)|\over m^{i}}$\\

$\leq \sum_{i=0}^{n-1}\sum_{k=0}^{i}|c_{ik}|a_{i-k}|{m^{n+k}\over m^{n+i}}$\\

$\leq \sum_{i=0}^{n-1}\sum_{k=0}^{i}|c_{ik}|a_{i-k}|=C$, $(m>1)$\\

The last claim is just the FTC.\\

\end{proof}
\begin{lemma}
\label{polynomial22D14}
If $[a,b]\subset\mathcal{R}$, with $a,b$ finite, $n\geq 3$, and $\{g_{j}:0\leq j\leq n-1\}\subset C^{\infty}([a,b])$, then, if $m\in\mathcal{R}_{>0}$ is sufficiently large, there exists $h\in C^{\infty}([m,m+{1\over m^{2}}]\times [a,b])$, with the property that;\\

${\partial^{(j)} h\over \partial x^{j}}|_{(m,y)}=g_{j}(y)$, $y\in [a,b]$, $(i)$\\

${\partial h^{j}\over \partial x^{j}}(m+{1\over m^{2}},y)=0$, $y\in [a,b]$, $(ii)$\\

$|h|_{[m,m+{1\over m^{2}}]\times [a,b]}|\leq C$\\

for some $C\in\mathcal{R}_{>0}$, independent of $m$ sufficiently large, and, if ${\partial^{n}h\over \partial x^{n}}(m,y)>0$, ${\partial^{n} h\over \partial x^{n}}(x,y)>0$, for $x\in [m,m+{1\over m^{2}}]$, and if ${\partial^{n}h\over \partial x^{n}}(m,y)<0$, ${\partial^{n} h\over \partial x^{n}}(x,y)<0$, for $x\in [m,m+{1\over m^{2}}]$, $(*)$. In particularly;\\

$\int_{m}^{m+{1\over m^{2}}}|{\partial^{n} h\over \partial x^{n}}|_{(x,y)}|dx=|g_{n-1}(y)|$\\

Moreover, for $i\in\mathcal{N}$, ${\partial^{i}h\over \partial y^{i}}$ has the property that;\\

${\partial^{i+j}h\over\partial x^{j}\partial y^{i}}(m,y)=g_{j}^{(i)}(y)$, $y\in [a,b]$, $(i)'$\\

${\partial^{i+j}h\over \partial x^{j}\partial y^{i}}(m+{1\over m^{2}},y)=0$, $y\in [a,b]$, $(ii)'$\\

$|{\partial^{i}h\over \partial y^{i}}|_{[m,m+{1\over m^{2}}]\times [a,b]}|\leq C_{i}$\\

for some $C_{i}\in\mathcal{R}_{>0}$, independent of $m$ sufficiently large, and, if ${\partial^{i+n}h\over \partial y^{i}\partial x^{n}}(m,y)>0$, ${\partial^{i+n} h\over \partial y^{i}\partial x^{n}}(x,y)>0$, for $x\in [m,m+{1\over m^{2}}]$, and if ${\partial^{i+n}h\over \partial y^{i}\partial x^{n}}(m,y)<0$, ${\partial^{i+n} h\over \partial y^{i}\partial x^{n}}(x,y)<0$, for $x\in [m,m+{1\over m^{2}}]$, $(**)$. In particularly;\\

$\int_{m}^{m+{1\over m^{2}}}|{\partial^{i+n}h\over \partial y^{i}\partial x^{n}}|_{(x,y)}|dx=|g^{(i)}_{n-1}(y)|$\\

\end{lemma}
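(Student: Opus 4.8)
The plan is to bootstrap from Lemma \ref{polynomial2higher} in exactly the same way that Lemma \ref{polynomial22D} bootstraps from Lemma \ref{squarecube}: treat the real coefficients $\{a_i\}$ of the one-dimensional construction as the $y$-dependent data $\{g_j(y)\}$, and verify that all the estimates are uniform in $y$ because $[a,b]$ is a finite interval and the $g_j$ are smooth (hence bounded, with bounded derivatives, on $[a,b]$). Concretely, first I would invoke the construction in the proof of Lemma \ref{polynomial2higher} with $m$ replaced by $m^2$ (as licensed by Lemma \ref{squarecube}), obtaining a polynomial $h(x) = (x-(m+\tfrac{1}{m^2}))^n p(x)$ whose coefficients are explicit linear combinations of $\{a_0,\dots,a_{n-1}\}$ with coefficients depending only on $n$ and on powers of $m$; then substitute $a_j \rightsquigarrow g_j(y)$ to define $h(x,y)$. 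Properties $(i)$ and $(ii)$ are immediate from the interpolation conditions that define $p$, and $(*)$ together with the bound $|h| \le C$ follow from the proof of Lemma \ref{polynomial2higher} applied pointwise in $y$, with the constant $C$ replaced by $C = \big(\sum_{i=0}^{n-1}\sum_{k=0}^{i}|c_{ik}|\big) D$ where $D = \max_j \max_{y\in[a,b]} |g_j(y)| < \infty$; the integral identity $\int_m^{m+1/m^2} |\partial_x^n h| = |g_{n-1}(y)|$ is then just the fundamental theorem of calculus, since $\partial_x^n h$ has constant sign on the interval and $h^{(n-1)}$ passes from $g_{n-1}(y)$ at $x=m$ to $0$ at $x=m+\tfrac{1}{m^2}$.

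For the "moreover" part I would differentiate $h(x,y)$ in $y$. Since the coefficients of $h$ (equivalently, the values $p^{(i)}(m)$ and hence the $v_i$) are linear in the data $\{g_j(y)\}$ with $y$-independent multipliers, $\partial_y^i h$ is the polynomial in $x$ with the same structure but data $\{g_j^{(i)}(y)\}$. This immediately gives $(i)'$ and $(ii)'$, and the bound $|\partial_y^i h| \le C_i$ with $C_i = \big(\sum_{i'}\sum_{k}|c_{i'k}|\big) D_i$ where $D_i = \max_j \max_{y\in[a,b]} |g_j^{(i)}(y)| < \infty$ by smoothness of the $g_j$ on the finite interval. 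The sign condition $(**)$ for $\partial_y^i \partial_x^n h$ follows because $\partial_y^i \partial_x^n h (\cdot, y)$ is, for each fixed $y$, exactly the $n$-th derivative of the one-dimensional polynomial built from the data $\{g_j^{(i)}(y)\}$, so the root-exclusion argument of Lemma \ref{polynomial2higher} applies verbatim; the final integral identity is again the FTC, using $\partial_x^{n-1}\partial_y^i h(m,y) = g_{n-1}^{(i)}(y)$ and $\partial_x^{n-1}\partial_y^i h(m+\tfrac{1}{m^2},y) = 0$.

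Two points deserve care. First, the whole argument is conditional: Lemma \ref{polynomial2higher} is itself flagged as a conjecture for general $n$ (the gap being the verification that the computable polynomial $r_n$ satisfies $r_n(1)\ne 0$, so that $\partial_x^n h$ has no roots in $[m,m+\tfrac1{m^2}]$ for large $m$), so Lemma \ref{polynomial22D14} inherits exactly that hypothesis; I would state this dependence explicitly. Second — and this is the one genuinely non-routine point — the choice of "sufficiently large $m$" for the root-exclusion must be \emph{uniform in $y\in[a,b]$}. This works because the roots of $\partial_x^n h(\cdot,y)$ have the form $t_{s_0}(y)\, m + O(1)$ where $t_{s_0}(y)$ ranges over the real roots of a degree-$(n-1)$ polynomial whose coefficients are continuous in the data $\{g_j^{(i)}(y)\}$, hence over a set bounded uniformly in $y$ (compactness of $[a,b]$), and bounded away from $1$ uniformly provided the relevant non-degeneracy holds; thus a single threshold $m_0$ works for all $y$. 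I expect this uniformity bookkeeping to be the main obstacle, everything else being a transcription of the proofs of Lemmas \ref{squarecube}, \ref{polynomial2higher} and \ref{polynomial22D}.
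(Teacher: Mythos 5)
Your proposal matches the paper's proof essentially line for line: replace the constant coefficients $a_j$ in Lemma \ref{polynomial2higher} (in its $1/m^2$ variant) with the data $g_j(y)$, deduce $(i),(ii),(*)$ and the uniform bound from continuity of the $g_j$ on the compact interval $[a,b]$, obtain the integral identity from the FTC, and then observe that $y$-differentiation acts on coefficients linear in the data, yielding $(i)',(ii)',(**)$ by the same argument with $D_i$ in place of $D$. Your two caveats --- that the lemma inherits the conjectural status of Lemma \ref{polynomial2higher} (the unverified $r_n(1)\neq 0$), and that the threshold for "sufficiently large $m$" in the root-exclusion must be chosen uniformly over $y\in[a,b]$, which works by compactness and continuity of the coefficient functions --- are both correct and identify points the paper passes over quickly.
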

\begin{proof}
For the construction of $h$ in the first part, just use the proof of Lemma \ref{polynomial2higher}, replacing the constant coefficients $\{a_{j}:0\leq j\leq n-1\}\subset \mathcal{R}$ with the data $\{g_{j}(y):0\leq j\leq n-1\}$. The properties $(i),(ii)$ are then clear. Noting that $[a,b]$ is a finite interval and $\{g_{j}:0\leq j\leq n-1\}\subset C^{\infty}([a,b])$, by continuity, there exists a constant $D$, with $max(|g_{j}(y)|:0\leq j\leq n-1, y\in [a,b])\leq D$, so, as in the proof of Lemma 0.5 in \cite{dep}, we can use the bound $C=\sum_{0\leq j\leq n-1}L_{j}D$, for $m>1$. The proof of $(*)$ follows uniformly in $y$, as in the proof of Lemma 0.5 in \cite{dep}, for sufficiently large $m$, again using the fact that the data $\{g_{j}(y):0\leq j\leq n-1,y\in [a,b]\}$ is bounded. The next claim is just the FTC again. For the second part, when we calculate ${\partial^{i}h\over \partial y^{i}}$, for $i\in\mathcal{N}$, we are just differentiating the coefficients which are linear in the data $\{g_{j}(y):0\leq j\leq n-1\}$, so we obtain a function which fits the data $\{g^{(i)}_{j}(y):0\leq j\leq n-1\}$ and $(i)',(ii)'$ follow. Noting that, for $i\in\mathcal{N}$,  $\{g^{(i)}_{j}:0\leq j\leq n-1\}\subset C^{\infty}([a,b])$, again by continuity, there exist constants $D_{i}$, with $max(|g^{(i)}_{j}(y)|:0\leq j\leq n-1,y\in [a,b])\leq D_{i}$, so, again, as in the proof of Lemma 0.5 in \cite{dep}, we can use the bound $C_{i}=\sum_{0\leq j\leq n-1}L_{j}D_{i}$, for $m>1$. The proof of $(**)$ follows uniformly in $y$, for each $i\in\mathcal{N}$, as in the proof of Lemma 0.5 in \cite{dep}, for sufficiently large $m$, again using the fact that the data $\{g^{(i)}_{j}(y):0\leq j\leq n-1,y\in [a,b]\}$ is bounded. The last claim is again just the FTC.
\end{proof}

\begin{lemma}
\label{polynomial23D14}
If $[a,b]\subset\mathcal{R}$, $[c,d]\subset\mathcal{R}$,with $a,b,c,d$ finite, $n\geq 3$, and $\{g_{j}:0\leq j\leq n-1\}\subset C^{\infty}([a,b]\times [c,d])$, then, if $m\in\mathcal{R}_{>0}$ is sufficiently large, there exists $h\in C^{\infty}([m,m+{1\over m^{3}}]\times [a,b]\times [c,d])$, with the property that;\\

${\partial^{(j)} h\over \partial x^{j}}|_{(m,y,z)}=g_{j}(y,z)$, $(y,z)\in [a,b]\times [c,d]$, $(i)$\\

${\partial h^{j}\over \partial x^{j}}(m+{1\over m^{3}},y,z)=0$, $(y,z)\in [a,b]\times [c,d]$, $(ii)$\\

$|h|_{[m,m+{1\over m^{3}}]\times [a,b]\times [c,d]}|\leq C$\\

for some $C\in\mathcal{R}_{>0}$, independent of $m$ sufficiently large, and, if ${\partial^{n}h\over \partial x^{n}}(m,y,z)>0$, ${\partial^{n} h\over \partial x^{n}}(x,y,z)>0$, for $x\in [m,m+{1\over m^{3}}]$, and if ${\partial^{n}h\over \partial x^{n}}(m,y,z)<0$, ${\partial^{n} h\over \partial x^{n}}(x,y,z)<0$, for $x\in [m,m+{1\over m^{3}}]$, $(*)$. In particularly;\\

$\int_{m}^{m+{1\over m^{3}}}|{\partial^{n} h\over \partial x^{n}}|_{(x,y,z)}|dx=|g_{n-1}(y,z)|$\\

Moreover, for $(i,k)\subset \mathcal{N}^{2}$, $0\leq j\leq n-1$, ${\partial^{i+k}h\over \partial y^{i}\partial z^{k}}$,  has the property that;\\

${\partial^{i+j+k}h\over\partial x^{j}\partial y^{i}\partial z^{k}}(m,y,z)={\partial^{i+k} g_{j}\over \partial y^{i}\partial z^{k}}(y,z)$, $(y,z)\in [a,b]\times [c,d]$, $(i)'$\\

${\partial^{i+j+k}h\over \partial x^{j}\partial y^{i}\partial z^{k}}(m+{1\over m^{3}},y,z)=0$, $(y,z)\in [a,b]\times [c,d]$, $(ii)'$\\

$|{\partial^{i+k}h\over \partial y^{i}\partial z^{k}}|_{[m,m+{1\over m^{3}}]\times [a,b]\times [c,d]}|\leq C_{i,k}$\\

for some $C_{i,k}\in\mathcal{R}_{>0}$, independent of $m$ sufficiently large, and, if ${\partial^{i+k+n}h\over \partial y^{i}\partial z^{k}\partial x^{n}}(m,y,z)>0$, ${\partial^{i+k+n} h\over \partial y^{i}\partial z^{k}\partial x^{n}}(x,y,z)>0$, for $x\in [m,m+{1\over m^{3}}]$, and if ${\partial^{i+k+n}h\over \partial y^{i}\partial z^{k}\partial x^{n}}(m,y)<0$, ${\partial^{i+k+n} h\over \partial y^{i}\partial z^{k}\partial x^{n}}(x,y,z)<0$, for $x\in [m,m+{1\over m^{3}}]$, $(**)$. In particularly;\\

$\int_{m}^{m+{1\over m^{3}}}|{\partial^{i+k+n}h\over \partial y^{i}\partial z^{k}\partial x^{n}}|_{(x,y,z)}|dx=|{\partial^{i+k} g_{n-1}\over \partial y^{i}\partial z^{k}}(y,z)|$\\

\end{lemma}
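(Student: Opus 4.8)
The plan is to reduce this to Lemma \ref{polynomial2higher} in exactly the way Lemma \ref{polynomial22D14} does, the only new feature being that we carry two parameters $(y,z)$ through a compact rectangle rather than one parameter through an interval. First I would invoke Lemma \ref{polynomial2higher} in its $1/m^{3}$ form: for each fixed $(y,z)\in[a,b]\times[c,d]$, run the construction there with the constants $\{a_{j}:0\le j\le n-1\}$ replaced by the data $\{g_{j}(y,z):0\le j\le n-1\}$. This produces $h(x,y,z)=(x-(m+1/m^{3}))^{n}p(x,y,z)$, where for fixed $(y,z)$ the polynomial $p(\cdot,y,z)$ has degree $n-1$ and its coefficients are obtained by solving the lower-triangular system $(*)$ of Lemma \ref{polynomial2higher}. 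Since the coefficient matrix of that system is independent of the data, and each $g_{j}$ is $C^{\infty}$ in $(y,z)$, the coefficients of $p$ are $C^{\infty}$ functions of $(y,z)$, so $h\in C^{\infty}([m,m+1/m^{3}]\times[a,b]\times[c,d])$, and $(i)$, $(ii)$ are immediate from the endpoint conditions in Lemma \ref{polynomial2higher}.

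For the uniform bound, I would note that $[a,b]\times[c,d]$ is compact and each $g_{j}$ is continuous, so there is $D$ with $|g_{j}(y,z)|\le D$ for all $j$ and all $(y,z)$; feeding this into the final estimate of the proof of Lemma \ref{polynomial2higher} gives $|h|_{[m,m+1/m^{3}]\times[a,b]\times[c,d]}\le\sum_{0\le j\le n-1}L_{j}D=:C$ for all $m>1$, where the $L_{j}$ are the absolute row-sums of the matrices occurring there. The sign alternative $(*)$ is the assertion that $\partial^{n}h/\partial x^{n}$ has no root in $[m,m+1/m^{3}]$; in the proof of Lemma \ref{polynomial2higher} those roots have the form $t_{s_{0}}m+O(1)$ with $t_{s_{0}}$ a root of the effectively computable degree-$(n-1)$ polynomial $r$, and the $O(1)$ error is controlled uniformly once the data is bounded, so for $m$ large the exclusion holds uniformly in $(y,z)$. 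The displayed integral identity is then the fundamental theorem of calculus, using $\partial^{n-1}h/\partial x^{n-1}(m+1/m^{3},y,z)=0$ and $\partial^{n-1}h/\partial x^{n-1}(m,y,z)=g_{n-1}(y,z)$.

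For the ``moreover'' part I would differentiate in the parameters. Because the coefficients of $p$ depend linearly on the data $\{g_{j}(y,z)\}$, applying $\partial^{i+k}/\partial y^{i}\partial z^{k}$ merely replaces each $g_{j}$ by $\partial^{i+k}g_{j}/\partial y^{i}\partial z^{k}$ and leaves the $m$-dependent structure untouched; hence $\partial^{i+k}h/\partial y^{i}\partial z^{k}$ is precisely the function delivered by the first part applied to the data $\{\partial^{i+k}g_{j}/\partial y^{i}\partial z^{k}\}$, which yields $(i)'$ and $(ii)'$. These mixed derivatives are again continuous on the compact rectangle, hence bounded by some $D_{i,k}$, so $C_{i,k}=\sum_{0\le j\le n-1}L_{j}D_{i,k}$ works, and $(**)$ follows from the same uniform root-exclusion argument applied to this new (still bounded) data; the last integral identity is once more the fundamental theorem of calculus.

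The main obstacle is not the three-dimensional bookkeeping, which is a routine upgrade of Lemma \ref{polynomial22D14}, but the fact that the entire argument rests on Lemma \ref{polynomial2higher}, which is stated as a conjecture: concretely one needs the guarantee that the degree-$(n-1)$ polynomial $r$ satisfies $r(1)\ne0$ for the relevant $n$, so that the roots $t_{s_{0}}m+O(1)$ of $\partial^{n}h/\partial x^{n}$ remain outside $[m,m+1/m^{3}]$ for large $m$. Granting that, the uniformity over $(y,z)$ is automatic from compactness of $[a,b]\times[c,d]$ together with the linearity of the construction in the data.
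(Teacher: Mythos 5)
Your proposal is correct and follows essentially the same route as the paper's own proof: both reduce to Lemma \ref{polynomial2higher} by substituting the data $\{g_{j}(y,z)\}$ for the constants $\{a_{j}\}$, use compactness of $[a,b]\times[c,d]$ to obtain a uniform bound $D$ and hence $C=\sum_{j}L_{j}D$, argue the root-exclusion uniformly in the parameters, invoke the FTC for the integral identity, and treat the parameter derivatives via the linearity of the coefficients in the data. You are also right, and commendably explicit, that the whole argument inherits the conjectural status of Lemma \ref{polynomial2higher} (the unverified $r(1)\neq 0$), a caveat the paper acknowledges only in a footnote.
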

\begin{proof}
For the construction of $h$ in the first part, just use the proof of Lemma \ref{polynomial2higher}, replacing the constant coefficients $\{a_{j}:0\leq j\leq n-1\}\subset \mathcal{R}$ with the data $\{g_{j}(y,z):0\leq j\leq n-1\}$. The properties $(i),(ii)$ are then clear. Noting that $[a,b]\times [c,d]$ is compact and $\{g_{j}:0\leq j\leq n-1\}\subset C^{\infty}([a,b]\times [c,d])$, by continuity, there exists a constant $D$, with $max(|g_{j}(y,z)|:0\leq j\leq n-1, (y,z)\in [a,b]\times [c,d])\leq D$, so, as in the proof of Lemma \ref{polynomial2higher}, we can use the bound $C=\sum_{0\leq j\leq n-1}L_{j}D$, for $m>1$. The proof of $(*)$ follows uniformly in $y$, as in the proof of \ref{polynomial2higher}, for sufficiently large $m$, again using the fact that the data $\{g_{j}(y,z):0\leq j\leq n-1,(y,z)\in [a,b]\}$ is bounded. The next claim is just the FTC again. For the second part, when we calculate ${\partial^{i+k}h\over \partial y^{i}\partial z^{k}}$, for $(i,j\in\mathcal{N}^{2}$, we are just differentiating the coefficients which are linear in the data $\{g_{j}(y,z):0\leq j\leq n-1\}$, so we obtain a function which fits the data $\{{\partial^{i+k}g_{j}\over \partial y^{i}\partial z^{k}}(y,z):0\leq j\leq n-1\}$ and $(i)',(ii)'$ follow. Noting that, for $(i,k)\in\mathcal{N}^{2}$,  $\{{\partial^{i+k}g_{j}\over \partial y^{i}\partial z^{k}}:0\leq j\leq n-1\}\subset C^{\infty}([a,b]\times [c,d])$, again by continuity, there exist constants $D_{i,k}$, with $max(|{\partial^{i+k}g_{j}\over \partial y^{i}\partial z^{k}}(y,z)|:0\leq j\leq n-1,y\in [a,b]\times [c,d])\leq D_{i,k}$, so, again, as in the proof of Lemma \ref{polynomial2higher}, we can use the bound $C_{i,k}=\sum_{0\leq j\leq n-1}L_{j}D_{i,k}$, for $m>1$. The proof of $(**)$ follows uniformly in $(y,z)$, for each $(i,k)\in\mathcal{N}^{2}$, as in the proof of Lemma \ref{polynomial2higher}, for sufficiently large $m$, again using the fact that the data $\{{\partial^{i+k}g_{j}\over \partial Y^{i}\partial z^{k}}(y):0\leq j\leq n-1,(y,z)\in [a,b]\times [c,d]\}$ is bounded. The last claim is again just the FTC.
\end{proof}
\begin{lemma}
\label{existence}
For $f\in C^{\infty}(\mathcal{R}^{2})$ with ${\partial^{i_{1}+i_{2}}f\over \partial x^{i_{1}}\partial y^{i_{2}}}$ bounded by some constant $F\in\mathcal{R}_{>0}$, for $0\leq i_{1}+i_{2}\leq 27$. Then for sufficiently large $m$, there exists an inflexionary approximation sequence $\{f_{m}:m\in\mathcal{N}\}$, with the property that;\\

$max(\int_{\mathcal{R}^{2}}|{\partial f_{m}\over \partial x^{14}}|dxdy,\int_{\mathcal{R}^{2}}|{\partial f_{m}\over \partial y^{14}}|dxdy)\leq Gm^{2}$\\

for some $G\in\mathcal{R}_{>0}$, for sufficiently large $m$.\\

\end{lemma}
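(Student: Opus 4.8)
The plan is to construct $\{f_m\}$ by two successive one-dimensional polynomial interpolations across the collar strips of width $1/m^{2}$, and then to read off the bound $Gm^{2}$ from a decomposition of $\mathcal{R}^{2}$ into the bulk square $C_m$, the four collars, and the exterior $\mathcal{R}^{2}\setminus C_{m+1/m^{2}}$.

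\emph{Construction.} First I would extend $f$ from $C_m$ across the two horizontal strips $[-m,m]\times[m,m+\frac1{m^{2}}]$ and $[-m,m]\times[-m-\frac1{m^{2}},-m]$ by invoking Lemma~\ref{polynomial22D14} with $n=14$ on the collar $[m,m+\frac1{m^{2}}]$ (with the collar variable $x$ of that lemma now playing the role of the second coordinate $y$), the interpolation data being the order-$13$ $y$-jet of $f$ along $y=\pm m$ and the zero jet along $y=\pm(m+\frac1{m^{2}})$; this produces a function $g_m$ on $[-m,m]\times[-m-\frac1{m^{2}},m+\frac1{m^{2}}]$ with $g_m=f$ on $C_m$. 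Then I would extend $g_m$ across the two vertical strips $[\pm m,\pm(m+\frac1{m^{2}})]\times[-m-\frac1{m^{2}},m+\frac1{m^{2}}]$ by applying Lemma~\ref{polynomial22D14} once more, now with $x$ the collar variable and the parameter interval split into the three pieces $[-m-\frac1{m^{2}},-m]$, $[-m,m]$, $[m,m+\frac1{m^{2}}]$ so that the data $\partial_x^{j}g_m(\pm m,\cdot)$, $0\le j\le 13$, are $C^{\infty}$ on each piece; the data are the order-$13$ $x$-jet of $g_m$ along $x=\pm m$ and the zero jet along $x=\pm(m+\frac1{m^{2}})$, and setting $f_m=0$ outside $C_{m+1/m^{2}}$ defines $f_m$ on $\mathcal{R}^{2}$. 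Since every partial of $f$ of order $\le 27=14+13$ is bounded by $F$, all first-stage data, and all their derivatives that the ``moreover'' part of Lemma~\ref{polynomial22D14} requires, are bounded by an absolute multiple of $F$; hence the second-stage data are $C^{\infty}$ on each piece, with the relevant derivatives bounded by an absolute multiple of $F$ on the central $y$-piece and by $O(Fm^{2})$ on the two short $y$-pieces (there the first-stage interpolant has its transverse $14$-th $y$-derivative sign-definite on an interval of length $m^{-2}$ with integral $\le F$, which for a polynomial of degree $\le 13$ forces a pointwise bound $O(Fm^{2})$). A direct check then gives $f_m\in C^{13,14,m}(\mathcal{R}^{2})$ and properties $(i)$--$(vii)$ of Definition~\ref{twodimensions}, the inflexion conditions $(v),(vii)$ being precisely the sign-definiteness clauses $(*),(**)$ of Lemma~\ref{polynomial22D14}. (This step uses the case $n=14$ of the conjectural Lemma~\ref{polynomial2higher}, which the footnote there records is all that the paper needs.)

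\emph{The estimate.} I would split $\int_{\mathcal{R}^{2}}|\partial_y^{14}f_m|\,dx\,dy$ over $C_m$, the two horizontal strips, the two vertical strips, and the exterior. On $C_m$, $f_m=f$, so that piece is $\le F\cdot\mathrm{area}(C_m)=4Fm^{2}$, and on the exterior it vanishes. On a horizontal strip, $\partial_y^{14}f_m=\partial_y^{14}g_m$ is the transverse $14$-th derivative of the first-stage interpolant, so the ``in particular''/FTC clause of Lemma~\ref{polynomial22D14} gives $\int_m^{m+1/m^{2}}|\partial_y^{14}g_m(x,y)|\,dy=|\partial_y^{13}f(x,m)|\le F$, hence a contribution $\le 2Fm$ after integrating over $x\in[-m,m]$; the lower strip is the same. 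On a vertical strip, $\partial_y^{14}f_m$ is a parameter-direction derivative of the second-stage interpolant, so it is pointwise $\le C_{14}$, with $C_{14}=O(F)$ where $|y|\le m$ (area $O(m^{-1})$) and $C_{14}=O(Fm^{2})$ on the two corner pieces $|y|\ge m$ (area $O(m^{-4})$), so the two vertical strips together contribute $O(1)$. Thus $\int_{\mathcal{R}^{2}}|\partial_y^{14}f_m|\,dx\,dy\le 4Fm^{2}+4Fm+O(1)$. The estimate for $\int_{\mathcal{R}^{2}}|\partial_x^{14}f_m|\,dx\,dy$ is the mirror image with the two families of strips swapped: $C_m$ again gives $\le 4Fm^{2}$; on each vertical strip $\partial_x^{14}f_m$ is now the transverse derivative of the second-stage interpolant, so $\int_m^{m+1/m^{2}}|\partial_x^{14}f_m(x,y)|\,dx=|\partial_x^{13}g_m(m,y)|=O(F)$ uniformly in $y$ (corners included), giving $O(m)$; on each horizontal strip $\partial_x^{14}f_m=\partial_x^{14}g_m$ is a parameter derivative, pointwise $O(F)$ over area $O(m^{-1})$, giving $O(1)$. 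Choosing $G$ large enough, e.g.\ $G=5F$, makes both integrals $\le Gm^{2}$ for all sufficiently large $m$.

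\emph{Main obstacle.} The real work is the construction: checking that the $27$-derivative budget on $f$ is exactly enough to push the two interpolations through --- keeping the second-stage data smooth on each piece with all the derivatives Lemma~\ref{polynomial22D14} calls for, and obtaining the $O(Fm^{2})$ control on the corner data --- and that the resulting $f_m$ genuinely satisfies Definition~\ref{twodimensions}, in particular the inflexion conditions. It also inherits the conjectural status of Lemma~\ref{polynomial2higher} at $n=14$. Once the construction (and especially the FTC and ``moreover'' clauses of Lemma~\ref{polynomial22D14}) is available, the integral bound is pure bookkeeping: the $m^{2}$ comes only from $\mathrm{area}(C_m)$ times the uniform bound $F$ on the fourteenth derivatives of $f$, each collar contributes at most $O(m)$ because of the sign-definiteness/FTC collapse of the transverse integral, and the corner pieces, though they carry a worse pointwise bound, are negligible by area.
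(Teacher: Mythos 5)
Your proposal reproduces the paper's own proof essentially step for step: the same two-stage construction (first extend across the horizontal collars with the $y$-direction as the collar variable via Lemma~\ref{polynomial22D14} at $n=14$, then across the vertical collars with the $x$-direction as the collar variable), the same decomposition of the integral into the bulk square, the four collars, and the corner squares, and the same budget whereby the $m^{2}$ comes from $\mathrm{area}(C_m)\cdot F$, the collars give $O(m)$ by the FTC/sign-definiteness collapse of the transverse integral, and the corners are negligible. The only genuine refinement over the paper's write-up is that you split the parameter interval $[-m-\tfrac1{m^2},m+\tfrac1{m^2}]$ of the second-stage interpolation into three pieces at $y=\pm m$ so that the data are $C^\infty$ on each piece (the paper tacitly treats them as smooth on the whole interval, but the first-stage glue is only $C^{13}$ across $y=\pm m$), and in the corner estimate you use a pointwise $L^\infty$–$L^1$ comparison for a sign-definite polynomial of degree $13$ rather than the paper's iterated FTC; both yield the same $O(F/m^{2})$ contribution, so the result is the same.
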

\begin{proof}
Define $f_{m}=f$ on $C_{m}$, so that $(ii)$ of Definition \ref{twodimensions} is satisfied. Using two applications of Lemma \ref{polynomial22D14} with $n=14$, changing to a vertical rather than horizontal orientation, and the fact that, for $0\leq i\leq 13$, $|x|\leq m$, ${\partial^{i} f\over \partial y^{i}}|_{(x,m)}$ and ${\partial^{i} f\over \partial y^{i}}|_{(x,-m)}$ define smooth functions on $[-m,m]$, we can extend $f_{m}$ to $R=\{(x,y):|x|\leq m, m\leq |y|\leq m+{1\over m^{2}}\}$, such that $f_{m}|R_{1}$ satisfies conditions $(iv),(v)$ of Definition \ref{twodimensions}, where $R_{1}=\{(x,y):|x|\leq m, 0\leq |y|\leq m+{1\over m^{2}}\}$. Again, using two applications of Lemma \ref{polynomial22D14} with $n=14$, and the original horizontal orientation, and the fact that, for $0\leq i\leq 13$, $0\leq |y|\leq m+{1\over m^{2}}$, ${\partial^{i} f_{m}\over \partial x^{i}}|_{(m,y)}$ and ${\partial^{i} f\over \partial x^{i}}|_{(-m,y)}$ define smooth functions on $[-m-{1\over m^{2}},m+{1\over m^{2}}]$, we can extend $f_{m}$ to $S=\{(x,y):m\leq |x|\leq m+{1\over m^{2}}, 0\leq |y|\leq m+{1\over m^{2}}\}$, such that $f_{m}|C_{m+{1\over m^{2}}}$ satisfies conditions $(vi),(vii)$ of Definition \ref{twodimensions}. Conditions $(i),(iii)$ are then clear. We then have, using $(iii)$, that;\\

$\int_{\mathcal{R}^{2}}|{\partial f_{m}\over \partial x^{14}}|dxdy=\int_{C_{m+{1\over m^{2}}}}|{\partial f_{m}\over \partial x^{14}}|dxdy$\\

$=\int_{|x|\leq m,|y|\leq m}|{\partial f_{m}\over \partial x^{14}}|dxdy+\int_{|x|\leq m,m\leq |y|\leq m+{1\over m^{2}}}|{\partial f_{m}\over \partial x^{14}}|dxdy+\int_{m\leq |x|\leq m+{1\over m^{2}},|y|\leq m}|{\partial f_{m}\over \partial x^{14}}|dxdy$\\

$+\int_{m\leq |x|\leq m+{1\over m^{2}},m\leq |y|\leq m+{1\over m^{2}}}|{\partial f_{m}\over \partial x^{14}}|dxdy$\\

$\int_{\mathcal{R}^{2}}|{\partial f_{m}\over \partial y^{14}}|dxdy=\int_{C_{m+{1\over m^{2}}}}|{\partial f_{m}\over \partial y^{14}}|dxdy$\\

$=\int_{|x|\leq m,|y|\leq m}|{\partial f_{m}\over \partial y^{14}}|dxdy+\int_{|x|\leq m,m\leq |y|\leq m+{1\over m^{2}}}|{\partial f_{m}\over \partial y^{14}}|dxdy+\int_{m\leq |x|\leq m+{1\over m^{2}},|y|\leq m}|{\partial f_{m}\over \partial y^{14}}|dxdy$\\

$+\int_{m\leq |x|\leq m+{1\over m^{2}},m\leq |y|\leq m+{1\over m^{2}}}|{\partial f_{m}\over \partial y^{14}}|dxdy$ $(*)$\\

We then have the following cases, using the second clause in Lemma \ref{polynomial22D14} repeatedly with the appropriate orientations;\\

Case 1;\\

$\int_{|x|\leq m,|y|\leq m}|{\partial^{14}f_{m}\over \partial x^{14}}|dxdy$\\

$=\int_{|x|\leq m,|y|\leq m}|{\partial^{14}f\over \partial x^{14}}|dxdy\leq Fm^{2}$\\

$\int_{|x|\leq m,|y|\leq m}|{\partial^{14}f_{m}\over \partial y^{14}}|dxdy$\\

$=\int_{|x|\leq m,|y|\leq m}|{\partial^{14}f\over \partial y^{14}}|dxdy\leq Fm^{2}$\\

Case 2;\\

$\int_{|x|\leq m,m\leq |y|\leq m+{1\over m^{2}}}|{\partial^{14}f_{m}\over \partial x^{14}}|dxdy$\\

$=\int_{|x|\leq m}(\int_{|y|\leq m+{1\over m^{2}}}|{\partial^{14}f_{m}\over \partial x^{14}}|dy)dx$\\

$\leq {2\over m^{2}}\int_{|x|\leq m}C_{14}dx$\\

$\leq 2m{2\over m^{2}}C_{14}$\\

$=4{C_{14}\over m}$\\

Case 3;\\

$\int_{m\leq |x|\leq m+{1\over m^{2}},|y|\leq m}|{\partial^{14}f_{m}\over \partial x^{14}}|dxdy$\\

$=\int_{|y|\leq m}(\int_{m\leq |x|\leq m+{1\over m^{2}}}|{\partial^{14}f_{m}\over \partial x^{14}}|dx)dy$\\

$=\int_{|y|\leq m}(|{\partial^{13}f\over \partial x^{13}}|_{(m,y)}+|{\partial^{13}f\over \partial x^{13}}|_{(-m,y)})dy$\\

$\leq 4mF$\\

Case 4.\\

$\int_{m\leq |x|\leq m+{1\over m^{2}},m\leq |y|\leq m+{1\over m^{2}}}|{\partial^{14}f_{m}\over \partial x^{14}}|dxdy$\\

$=\int_{m\leq |y|\leq m+{1\over m^{2}}}(\int_{m\leq |x|\leq m+{1\over m^{2}}}|{\partial^{14}f_{m}\over \partial x^{14}}|dx)dy$\\

$=\int_{m\leq |y|\leq m+{1\over m^{2}}}(|{\partial^{13}f_{m}\over \partial x^{13}}|_{(m,y)}+|{\partial^{13}f_{m}\over \partial x^{13}}|_{(-m,y)}dy$\\

$\leq \int_{m\leq y\leq m+{1\over m^{2}}}C_{13,1}dy+\int_{-m-{1\over m^{2}}\leq -m}C_{13,2}dy$\\

$\leq {max(C_{13,1},C_{13,2})\over m^{2}}$ (the constants $\{C_{13,1},C_{13,2}\}$ coming from the two applications of Lemma \ref{polynomial22D14} at the two boundaries)\\

Case 5;\\

$\int_{|x|\leq m,m\leq |y|\leq m+{1\over m^{2}}}|{\partial^{14}f_{m}\over \partial y^{14}}|dxdy$\\

$= \int_{|x|\leq m}(\int_{m\leq |y|\leq m+{1\over m^{2}}}|{\partial^{14}f_{m}\over \partial y^{14}}|dy)dx$\\

$= \int_{|x|\leq m}(|{\partial f\over \partial y^{13}}|_{(x,m)}+|{\partial f(x,y)\over \partial y^{13}}|_{(x,-m)}dx)$\\

$\leq 4mF$\\

Case 6;\\

$\int_{|y|\leq m,m\leq |x|\leq m+{1\over m^{2}}}|{\partial^{14}f_{m}\over \partial y^{14}}|dxdy$\\

$=\int_{|y|\leq m}(\int_{m\leq |x|\leq m+{1\over m^{2}}}|{\partial^{14}f_{m}\over \partial y^{14}}|dx)dy$\\

$\leq {1\over m^{2}}\int_{|y|\leq m}(|\sum_{i=0}^{13}D_{i}|{\partial^{i}\partial^{14}f\over \partial y^{14}\partial x^{i}}|(m,y)+|\sum_{i=0}^{13}D_{i}|{\partial^{i}\partial^{14}f\over \partial y^{14}\partial x^{i}}|(-m,y))dy$\\

$\leq {2\over m^{2}}(2m)F(\sum_{i=0}^{13}D_{i})$\\

$=4F{(\sum_{i=0}^{13}D_{i})\over m}$\\

Case 7.\\

$\int_{m\leq |x|\leq m+{1\over m^{2}},m\leq |y|\leq m+{1\over m^{2}}}|{\partial^{14}f_{m}\over \partial y^{14}}|dxdy$\\

$=\int_{m\leq |y|\leq m+{1\over m^{2}}}(\int_{m\leq |x|\leq m+{1\over m^{2}}}|{\partial^{14}f_{m}\over \partial y^{14}}|dx)dy$\\

$\leq {1\over m^{2}}\int_{m\leq |y|\leq m+{1\over m^{2}}}(\sum_{i=0}^{13}L_{i,14}|{\partial^{i+14}f_{m}\over \partial x^{i}\partial y^{14}}|_{(m,y)}+L_{i,14}|{\partial^{i+14}f_{m}\over \partial x^{i}\partial y^{14}}|_{(-m,y)})dy$\\

$={1\over m^{2}}\sum_{i=0}^{13}L_{i,14}(|{\partial^{i+13}f\over \partial x^{i}\partial y^{13}}|_{(m,m)}|+|{\partial^{i+13}f\over \partial x^{i}\partial y^{13}}|_{(m,-m)}|+|{\partial^{i+13}f\over \partial x^{i}\partial y^{13}}|_{(-m,m)}|+|{\partial^{i+13}f\over \partial x^{i}\partial y^{13}}|_{(-m,-m)}|)$\\

$\leq {4F(\sum_{i=0}^{13}L_{i,14})\over m^{2}}$ (the constants $L_{i,14}$,$0\leq i\leq 13$ coming from the proof of Lemma \ref{polynomial22D14})\\

Combining the seven cases and $(*)$, we obtain, for sufficiently large $m$, that;\\

$\int_{\mathcal{R}^{2}}|{\partial f_{m}\over \partial x^{14}}|dxdy\leq Fm^{2}+4{C_{14}\over m}+4mF+{max(C_{13,1},C_{13,2})\over m^{2}}\leq Gm^{2}$\\

$\int_{\mathcal{R}^{2}}|{\partial f_{m}\over \partial y^{14}}|dxdy\leq Fm^{2}+4mF+4F{(\sum_{i=0}^{13}D_{i})\over m}+{4F(\sum_{i=0}^{13}L_{i,14})\over m^{2}}\leq Gm^{2}$\\

\end{proof}

\begin{lemma}
\label{existence2}
For $f\in C^{40}(\mathcal{R}^{3})$ with ${\partial^{i_{1}+i_{2}+i_{3}}f\over \partial x^{i_{1}}\partial y^{i_{2}}\partial z^{i_{3}}}$ bounded by some constant $F\in\mathcal{R}_{>0}$, for $0\leq i_{1}+i_{2}+i_{3}\leq 40$. Then for sufficiently large $m$, there exists an inflexionary approximation sequence $\{f_{m}:m\in\mathcal{N}\}$, with the property that;\\

$max(\int_{\mathcal{R}^{3}}|{\partial f_{m}\over \partial x^{14}}|dxdydz,\int_{\mathcal{R}^{3}}|{\partial f_{m}\over \partial y^{14}}|dxdydz,\int_{\mathcal{R}^{3}}|{\partial f_{m}\over \partial z^{14}}|dxdydz)\leq Gm^{3}$\\

for some $G\in\mathcal{R}_{>0}$, for sufficiently large $m$.\\

\end{lemma}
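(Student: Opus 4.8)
The plan is to mimic the two‑dimensional argument of Lemma~\ref{existence} essentially verbatim, working one coordinate direction at a time and invoking Lemma~\ref{polynomial23D14} (with $n=14$) at each stage of the extension. First I would set $f_m = f$ on the cube $C_m$, which gives condition $(ii)$ of Definition~\ref{threedimensions}. Then I would extend in the $x$‑direction across the two slabs $m\le |x|\le m+\frac{1}{m^3}$ using two applications of Lemma~\ref{polynomial23D14} (one at $x=m$, one at $x=-m$), with the bounded smooth boundary data $\{\frac{\partial^i f}{\partial x^i}|_{(\pm m,y,z)}: 0\le i\le 13\}$ on the compact rectangle $|y|\le m,\ |z|\le m$; this secures $(iv),(v)$. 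Next I would extend in the $y$‑direction across $m\le |y|\le m+\frac{1}{m^3}$, now using Lemma~\ref{polynomial23D14} with the data coming from the already‑constructed $f_m$ on $\{|y|=m\}$ over the rectangle $|x|\le m+\frac{1}{m^3},\ |z|\le m$, giving $(vi),(vii)$. Finally I would extend in the $z$‑direction across $m\le|z|\le m+\frac{1}{m^3}$ over the rectangle $|x|\le m+\frac{1}{m^3},\ |y|\le m+\frac{1}{m^3}$, giving $(viii),(ix)$; conditions $(i),(iii)$ are then immediate. Here it is crucial that each rectangle of ``transverse'' variables is compact (so the data is bounded, with bounds of the form $F$ for the core pieces and $m$‑independent constants $C_{i,k}, D_{i,k}, L_{j}$ for the extension pieces), and that Lemma~\ref{polynomial23D14} was stated precisely to allow two transverse variables.

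The second part is the integral estimate. By $(iii)$, $\int_{\mathcal{R}^3}|\frac{\partial^{14} f_m}{\partial x^{14}}|\,dx\,dy\,dz$ is supported on $C_{m+\frac{1}{m^3}}$, which I would split into $2^3 = 8$ boxes according to whether each coordinate lies in the core range $|\cdot|\le m$ or a boundary slab $m\le |\cdot|\le m+\frac{1}{m^3}$. On the core box $|x|,|y|,|z|\le m$ one has $\frac{\partial^{14}f_m}{\partial x^{14}} = \frac{\partial^{14} f}{\partial x^{14}}$, bounded by $F$, contributing at most $F\,(2m)^3 = 8Fm^3$; this is the dominant $O(m^3)$ term. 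On the boxes where at least one coordinate lies in a slab, I would integrate first in whichever slab direction is ``native'' to the derivative: if the box has $x$ in a slab I use the $(*)$‑type identity of Lemma~\ref{polynomial23D14}, namely $\int_m^{m+1/m^3}|\frac{\partial^{14} f_m}{\partial x^{14}}|\,dx = |\frac{\partial^{13}(\cdot)}{\partial x^{13}}|$ evaluated at the inner boundary, reducing to an integral of a bounded function (bound $F$ where the boundary lies on $C_m$, or one of the extension constants otherwise) over a region of measure $O(m^2)$ or smaller, giving $O(m^2)$ or $O(1/m^k)$; if the box has $x$ only in the core but some other coordinate in a slab, I integrate in that other slab direction, picking up a factor $\frac{1}{m^3}$ times a bound from the $|\frac{\partial^{i+k}h}{\partial y^i \partial z^k}|\le C_{i,k}$ clause and an area factor at most $(2m)^2$, again $O(m)$. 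Summing the eight contributions gives $\int_{\mathcal{R}^3}|\frac{\partial^{14} f_m}{\partial x^{14}}|\le Gm^3$ for large $m$, and the $y$‑ and $z$‑derivatives are handled by the identical bookkeeping with the roles of the coordinates permuted (using the ``same property holding'' clauses of Definition~\ref{threedimensions} and the transverse‑derivative clause $(**)$ of Lemma~\ref{polynomial23D14}).

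I expect the main obstacle to be purely organizational rather than conceptual: correctly tracking, in each of the eight boxes and for each of the three derivative directions, which order of iterated integration makes one of the Lemma~\ref{polynomial23D14} identities applicable, and verifying that the boundary data fed into each application is genuinely smooth on a compact set so that the $m$‑independent constants exist. The one genuine subtlety, already present in Lemma~\ref{existence} but compounded in three dimensions, is that when two coordinates lie in slabs simultaneously the relevant boundary values of $f_m$ are themselves outputs of a previous extension step, so one must use the ``Moreover'' clauses of Lemma~\ref{polynomial23D14} — that $\frac{\partial^{i+k}h}{\partial y^i\partial z^k}$ still satisfies the same endpoint and sign conditions with data $\frac{\partial^{i+k}g_j}{\partial y^i\partial z^k}$ — to keep the constants under control; since those boundary functions are restrictions of $f$ (hence bounded by $F$ together with finitely many derivatives, using $0\le i_1+i_2+i_3\le 40$) composed with the construction, the needed uniform bounds do hold. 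Aside from that, the proof is the three‑dimensional transcription of the seven‑case computation in Lemma~\ref{existence}, now a sixteen‑ or twenty‑four‑case computation, all of whose non‑leading terms are $o(m^3)$.
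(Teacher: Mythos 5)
Your proposal is correct and follows essentially the same route as the paper: set $f_m = f$ on $C_m$, extend slab by slab in the $x$, $y$, and $z$ directions via repeated applications of Lemma~\ref{polynomial23D14} with $n=14$, split the integral over $C_{m+1/m^3}$ into the $2^3$ boxes, and bound each box--derivative pair by first integrating in a suitable slab direction, so that the core box contributes the leading $O(m^3)$ term and the remaining cases are $O(m^2)$ or smaller. The paper works this out as an explicit twenty-two--case computation, but the bookkeeping strategy, the reliance on the ``Moreover'' clauses of Lemma~\ref{polynomial23D14} when two coordinates lie in slabs, and the final summation to $Gm^3$ are exactly what you describe.
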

\begin{proof}
Define $f_{m}=f$ on $W_{m}$, so that $(ii)$ of Definition \ref{threedimensions} is satisfied. Using two applications of Lemma \ref{polynomial23D14} with $n=14$, with a horizontal orientation, and the fact that, for $0\leq i\leq 13$, $0\leq |y|\leq m$, $0\leq |z|\leq m$ ${\partial^{i} f\over \partial x^{i}}|_{(m,y,z)}$ and ${\partial^{i} f\over \partial x^{i}}|_{(-m,y,z)}$ define smooth functions on $[-m,m]^{2}$, we can extend $f_{m}$ to $A_{1}=\{(x,y,z):m\leq |x|\leq m+{1\over m^{3}},0\leq |y|\leq m,0\leq |z|\leq m\}$, such that $f_{m}|A_{2}$ satisfies conditions $(iv),(v)$ of Definition \ref{threedimensions}, where $A_{2}=\{(x,y,z):0\leq |x|\leq m+{1\over m^{3}},0\leq |y|\leq m,0\leq |z|\leq m\}$. Again, using two applications of Lemma \ref{polynomial23D14} with $n=14$ again, this time with a vertical orientation, and the fact that, for $0\leq i\leq 13$, $0\leq |x|\leq m+{1\over m^{3}}$, $0\leq |z|\leq m$, ${\partial^{i} f_{m}\over \partial y^{i}}|_{(x,m,z)}$ and ${\partial^{i} f_{m}\over \partial y^{i}}|_{(x,-m,z)}$ define smooth functions on $[-m-{1\over m^{3}},m+{1\over m^{3}}]\times [-m,m]$, we can extend $f_{m}$ to $A_{3}=\{(x,y,z):0\leq |x|\leq m+{1\over m^{3}}, m\leq |y|\leq m+{1\over m^{3}}, 0\leq |z|\leq m\}$, such that $f_{m}|A_{4}$ satisfies conditions $(vi),(vii)$ of Definition \ref{threedimensions}, where $A_{4}=\{(x,y,z):0\leq |x|\leq m+{1\over m^{3}},0\leq |y|\leq m+{1\over m^{3}},0\leq |z|\leq m\}$. Again, using two applications of Lemma \ref{polynomial23D14} with $n=14$ again, this time with a lateral orientation, and the fact that, for $0\leq i\leq 13$, $0\leq |x|\leq m+{1\over m^{3}}$, $0\leq |y|\leq m+{1\over m^{3}}$, ${\partial^{i} f_{m}\over \partial z^{i}}|_{(x,y,m)}$ and ${\partial^{i} f_{m}\over \partial z^{i}}|_{(x,y,-m)}$ define smooth functions on $[-m-{1\over m^{3}},m+{1\over m^{3}}]^{2}$, we can extend $f_{m}$ to $W_{m+{1\over m^{3}}}$ such that $f_{m}|_{W_{m+{1\over m^{3}}}}$ satisfies conditions $(viii),(ix)$ of Definition \ref{threedimensions}.

Conditions $(i),(iii)$ are then clear. We then have, using $(iii)$, that;\\

$(a)$. $\int_{\mathcal{R}^{3}}|{\partial f_{m}\over \partial x^{14}}|dxdydz=\int_{W_{m+{1\over m^{3}}}}|{\partial f_{m}\over \partial x^{14}}|dxdydz$\\

$=\int_{|x|\leq m,|y|\leq m,|z|\leq m}|{\partial f_{m}\over \partial x^{14}}|dxdydz+\int_{m\leq |x|\leq m+{1\over m^{3}},|y|\leq m,|z|\leq m}|{\partial f_{m}\over \partial x^{14}}|dxdydz$\\

$+\int_{|x|\leq m,m\leq |y|\leq m+{1\over m^{3}},|z|\leq m}|{\partial f_{m}\over \partial x^{14}}|dxdydz+\int_{m\leq |x|\leq m+{1\over m^{3}},m\leq |y|\leq m+{1\over m^{3}},|z|\leq m}|{\partial f_{m}\over \partial x^{14}}|dxdydz$\\

$+\int_{|x|\leq m,|y|\leq m,m\leq |z|\leq m+{1\over m^{3}}}|{\partial f_{m}\over \partial x^{14}}|dxdydz+\int_{m\leq |x|\leq m+{1\over m^{3}},|y|\leq m,m\leq |z|\leq m+{1\over m^{3}}}|{\partial f_{m}\over \partial x^{14}}|dxdydz$\\

$+\int_{|x|\leq m,m\leq |y|\leq m+{1\over m^{3}},m\leq |z|\leq m+{1\over m^{3}}}|{\partial f_{m}\over \partial x^{14}}|dxdydz+\int_{m\leq |x|\leq m+{1\over m^{3}},m\leq |y|\leq m+{1\over m^{3}},m\leq |z|\leq m+{1\over m^{3}}}|{\partial f_{m}\over \partial x^{14}}|dxdydz$\\

$(b)$. $\int_{\mathcal{R}^{3}}|{\partial f_{m}\over \partial y^{14}}|dxdydz=\int_{W_{m+{1\over m^{3}}}}|{\partial f_{m}\over \partial y^{14}}|dxdydz$\\

$=\int_{|x|\leq m,|y|\leq m,|z|\leq m}|{\partial f_{m}\over \partial y^{14}}|dxdydz+\int_{m\leq |x|\leq m+{1\over m^{3}},|y|\leq m,|z|\leq m}|{\partial f_{m}\over \partial y^{14}}|dxdydz$\\

$+\int_{|x|\leq m,m\leq |y|\leq m+{1\over m^{3}},|z|\leq m}|{\partial f_{m}\over \partial y^{14}}|dxdydz+\int_{m\leq |x|\leq m+{1\over m^{3}},m\leq |y|\leq m+{1\over m^{3}},|z|\leq m}|{\partial f_{m}\over \partial y^{14}}|dxdydz$\\

$+\int_{|x|\leq m,|y|\leq m,m\leq |z|\leq m+{1\over m^{3}}}|{\partial f_{m}\over \partial y^{14}}|dxdydz+\int_{m\leq |x|\leq m+{1\over m^{3}},|y|\leq m,m\leq |z|\leq m+{1\over m^{3}}}|{\partial f_{m}\over \partial y^{14}}|dxdydz$\\

$+\int_{|x|\leq m,m\leq |y|\leq m+{1\over m^{3}},m\leq |z|\leq m+{1\over m^{3}}}|{\partial f_{m}\over \partial y^{14}}|dxdydz+\int_{m\leq |x|\leq m+{1\over m^{3}},m\leq |y|\leq m+{1\over m^{3}},m\leq |z|\leq m+{1\over m^{3}}}|{\partial f_{m}\over \partial y^{14}}|dxdydz$\\

$(c)$. $\int_{\mathcal{R}^{3}}|{\partial f_{m}\over \partial z^{14}}|dxdydz=\int_{W_{m+{1\over m^{3}}}}|{\partial f_{m}\over \partial z^{14}}|dxdydz$\\

$=\int_{|x|\leq m,|y|\leq m,|z|\leq m}|{\partial f_{m}\over \partial z^{14}}|dxdydz+\int_{m\leq |x|\leq m+{1\over m^{3}},|y|\leq m,|z|\leq m}|{\partial f_{m}\over \partial z^{14}}|dxdydz$\\

$+\int_{|x|\leq m,m\leq |y|\leq m+{1\over m^{3}},|z|\leq m}|{\partial f_{m}\over \partial z^{14}}|dxdydz+\int_{m\leq |x|\leq m+{1\over m^{3}},m\leq |y|\leq m+{1\over m^{3}},|z|\leq m}|{\partial f_{m}\over \partial z^{14}}|dxdydz$\\

$+\int_{|x|\leq m,|y|\leq m,m\leq |z|\leq m+{1\over m^{3}}}|{\partial f_{m}\over \partial z^{14}}|dxdydz+\int_{m\leq |x|\leq m+{1\over m^{3}},|y|\leq m,m\leq |z|\leq m+{1\over m^{3}}}|{\partial f_{m}\over \partial z^{14}}|dxdydz$\\

$+\int_{|x|\leq m,m\leq |y|\leq m+{1\over m^{3}},m\leq |z|\leq m+{1\over m^{3}}}|{\partial f_{m}\over \partial z^{14}}|dxdydz+\int_{m\leq |x|\leq m+{1\over m^{3}},m\leq |y|\leq m+{1\over m^{3}},m\leq |z|\leq m+{1\over m^{3}}}|{\partial f_{m}\over \partial z^{14}}|dxdydz$\\

$(*)$\\

We then have the following cases, using the second clause in Lemma \ref{polynomial23D14} repeatedly with the appropriate orientations;\\

Case 1;\\

$\int_{|x|\leq m,|y|\leq m,|z|\leq m}|{\partial^{14}f_{m}\over \partial x^{14}}|dxdydz$\\

$=\int_{|x|\leq m,|y|\leq m,|z|\leq m}|{\partial^{14}f\over \partial x^{14}}|dxdydz\leq Fm^{3}$\\

$\int_{|x|\leq m,|y|\leq m,|z|\leq m}|{\partial^{14}f_{m}\over \partial y^{14}}|dxdydz$\\

$=\int_{|x|\leq m,|y|\leq m,|z|\leq m}|{\partial^{14}f\over \partial y^{14}}|dxdydz\leq Fm^{3}$\\

$\int_{|x|\leq m,|y|\leq m,|z|\leq m}|{\partial^{14}f_{m}\over \partial z^{14}}|dxdydz$\\

$=\int_{|x|\leq m,|y|\leq m,|z|\leq m}|{\partial^{14}f\over \partial z^{14}}|dxdydz\leq Fm^{3}$\\

Case 2;\\

$\int_{m\leq |x|\leq m+{1\over m^{3}},|y|\leq m,|z|\leq m}|{\partial^{14}f_{m}\over \partial x^{14}}|dxdydz$\\

$=\int_{|y|\leq m,|z|\leq m}(\int_{m\leq |x|\leq m+{1\over m^{3}}}|{\partial^{14}f_{m}\over \partial x^{14}}|dx)dydz$\\

$=\int_{|y|\leq m,|z|\leq m}(|{\partial^{13}f\over \partial x^{13}}|_{(m,y,z)}+|{\partial^{13}f\over \partial x^{13}}|_{(-m,y,z)})dydz$\\

$\leq 2(2m)^{2}F$\\

$=8m^{2}F$\\

Case 3;\\

$\int_{m\leq |x|\leq m+{1\over m^{3}},|y|\leq m,|z|\leq m}|{\partial^{14}f_{m}\over \partial y^{14}}|dxdydz$\\

$=\int_{|y|\leq m,|z|\leq m}(\int_{m\leq |x|\leq m+{1\over m^{3}}}|{\partial^{14}f_{m}\over \partial y^{14}}|dx)dydz$\\

$\leq {1\over m^{3}}\int_{|y|\leq m,|z|\leq m}(|\sum_{i=0}^{13}D_{i}|{\partial^{i}\partial^{14}f\over \partial y^{14}\partial x^{i}}|(m,y,z)+|\sum_{i=0}^{13}D_{i}|{\partial^{i}\partial^{14}f\over \partial y^{14}\partial x^{i}}|(-m,y,z))dydz$\\

$\leq {2\over m^{3}}(2m)^{2}F(\sum_{i=0}^{13}D_{i})$\\

$={8F(\sum_{i=0}^{13}D_{i})\over m}$\\

Case 4;\\

$\int_{m\leq |x|\leq m+{1\over m^{3}},|y|\leq m,|z|\leq m}|{\partial^{14}f_{m}\over \partial z^{14}}|dxdydz$\\

$=\int_{|y|\leq m,|z|\leq m}(\int_{m\leq |x|\leq m+{1\over m^{3}}}|{\partial^{14}f_{m}\over \partial z^{14}}|dx)dydz$\\

$\leq {1\over m^{3}}\int_{|y|\leq m,|z|\leq m}(|\sum_{i=0}^{13}D_{i}|{\partial^{i}\partial^{14}f\over \partial z^{14}\partial x^{i}}|(m,y,z)+|\sum_{i=0}^{13}D_{i}|{\partial^{i}\partial^{14}f\over \partial z^{14}\partial x^{i}}|(-m,y,z))dydz$\\

$\leq {2\over m^{3}}(2m)^{2}F(\sum_{i=0}^{13}D_{i})$\\

$={8F(\sum_{i=0}^{13}D_{i})\over m}$\\

Case 5.\\

$\int_{|x|\leq m,m\leq |y|\leq m+{1\over m^{3}},|z|\leq m}|{\partial^{14}f_{m}\over \partial x^{14}}|dxdydz$\\

$=\int_{|x|\leq m,|z|\leq m}(\int_{|y|\leq m+{1\over m^{3}}}|{\partial^{14}f_{m}\over \partial x^{14}}|dy)dxdz$\\

$\leq {2\over m^{3}}\int_{|x|\leq m,|z|\leq m}C_{14}dx$\\

$=(2m)^{2}{2\over m^{3}}C_{14,0}$\\

$={8C_{14,0}\over m}$\\

Csse 6.\\

$\int_{|x|\leq m,m\leq |y|\leq m+{1\over m^{3}},|z|\leq m}|{\partial^{14}f_{m}\over \partial z^{14}}|dxdydz$\\

$=\int_{|x|\leq m,|z|\leq m}(\int_{|y|\leq m+{1\over m^{3}}}|{\partial^{14}f_{m}\over \partial z^{14}}|dy)dxdz$\\

$\leq {2\over m^{3}}\int_{|x|\leq m,|z|\leq m}C_{0,14}dx$\\

$=(2m)^{2}{2\over m^{3}}C_{0,14}$\\

$={8C_{0,14}\over m}$\\

Case 7.\\

$\int_{|x|\leq m,m\leq |y|\leq m+{1\over m^{3}},|z|\leq m}|{\partial^{14}f_{m}\over \partial y^{14}}|dxdydz$\\

$= \int_{|x|\leq m,|z|\leq m}(\int_{m\leq |y|\leq m+{1\over m^{3}}}|{\partial^{14}f_{m}\over \partial y^{14}}|dy)dxdz$\\

$= \int_{|x|\leq m,|z|\leq m}(|{\partial f\over \partial y^{13}}|_{(x,m,z)}+|{\partial f\over \partial y^{13}}|_{(x,-m,z)})dxdz)$\\

$\leq 2(2m)^{2}F$\\

$=8m^{2}F$\\

Case 8.\\

$\int_{m\leq |x|\leq m+{1\over m^{3}},m\leq |y|\leq m+{1\over m^{3}},|z|\leq m}|{\partial^{14}f_{m}\over \partial x^{14}}|dxdydz$\\

$=\int_{m\leq |x|\leq m+{1\over m^{3}},|z|\leq m}(\int_{m\leq |y|\leq m+{1\over m^{3}}}|{\partial^{14}f_{m}\over \partial x^{14}}|dy)dxdz$\\

$\leq {1\over m^{3}}\int_{m\leq |x|\leq m+{1\over m^{3}},|z|\leq m}(\sum_{i=0}^{13}L_{i,14}|{\partial^{i+14}\partial^{14}f_{m}\over \partial y^{i}\partial x^{14}}|_{(x,m,z)}+L_{i,14}|{\partial^{i+14}\partial^{14}f_{m}\over \partial y^{i}\partial x^{14}}|_{(x,-m,z)})dxdz$\\

$={1\over m^{3}}\int_{|z|\leq m}(\sum_{i=0}^{13}L_{i,14}(|{\partial^{i+13}\partial^{14}f\over \partial y^{i}\partial x^{13}}|_{(m,m,z)}|+|{\partial^{i+13}\partial^{14}f\over \partial y^{i}\partial x^{13}}|_{(m,-m,z)}|+|{\partial^{i+13}\partial^{14}f\over \partial y^{i}\partial x^{13}}|_{(-m,m,z)}|$\\

$+|{\partial^{i+13}\partial^{14}f\over \partial y^{i}\partial x^{13}}|_{(-m,-m,z)}|))dz$\\

$\leq (2m){4F(\sum_{i=0}^{13}L_{i,14})\over m^{3}}$\\

$={8F(\sum_{i=0}^{13}L_{i,14})\over m^{2}}$\\

(the constants $L_{i,14}$,$0\leq i\leq 13$ coming from the proof of Lemma \ref{polynomial22D14})\\

Case 9.\\

$\int_{m\leq |x|\leq m+{1\over m^{3}},m\leq |y|\leq m+{1\over m^{3}},|z|\leq m}|{\partial^{14}f_{m}\over \partial y^{14}}|dxdydz$\\

$=\int_{m\leq |x|\leq m+{1\over m^{3}},|z|\leq m}(\int_{m\leq |y|\leq m+{1\over m^{3}}}|{\partial^{14}f_{m}\over \partial y^{14}}|dy)dxdz$\\

$=\int_{m\leq |x|\leq m+{1\over m^{3}},|z|\leq m}(|{\partial^{13}f_{m}\over \partial y^{13}}|_{(x,m,z)}+|{\partial^{13}f_{m}\over \partial y^{13}}|_{(x,-m,z)})dxdz$\\

$\leq {1\over m^{3}}(\int_{|z|\leq m}C_{13,1}dz+\int_{|z|\leq m}C_{13,2}dz)$\\

$\leq (2m){max(C_{13,1},C_{13,2})\over m^{3}}$\\

$={2max(C_{13,1},C_{13,2})\over m^{2}}$\\

 (the constants $\{C_{13,1},C_{13,2}\}$ coming from the two applications of Lemma \ref{polynomial22D14} at the two boundaries)\\

Case 10.\\

$\int_{m\leq |x|\leq m+{1\over m^{3}},m\leq |y|\leq m+{1\over m^{3}},|z|\leq m}|{\partial^{14}f_{m}\over \partial z^{14}}|dxdydz$\\

$=\int_{m\leq |x|\leq m+{1\over m^{3}},|z|\leq m}(\int_{m\leq |y|\leq m+{1\over m^{3}}}|{\partial^{14}f_{m}\over \partial z^{14}}|dy)dxdz$\\

$\leq {1\over m^{3}}\int_{m\leq |x|\leq m+{1\over m^{3}},|z|\leq m}(\sum_{i=0}^{13}L_{i,14}|{\partial^{i+14}f_{m}\over \partial y^{i}\partial z^{14}}|_{(x,m,z)}+L_{i,14}|{\partial^{i+14}f_{m}\over \partial y^{i}\partial z^{14}}|_{(x,-m,z)})dxdz$\\

$\leq {1\over m^{6}}\int_{|z|\leq m}(\sum_{i=0}^{13}\sum_{j=0}^{13}L_{i,14}L_{j,i,14}(|{\partial^{i+j+14}f\over \partial x^{j}\partial y^{i}\partial z^{14}}|_{(m,m,z)}|+|{\partial^{i+j+14}f\over \partial x^{j}\partial y^{i}\partial z^{14}}|_{(m,-m,z)}|$\\

$+|{\partial^{i+j+14}f\over \partial x^{j}\partial y^{i}\partial z^{14}}|_{(-m,m,z)}|+|{\partial^{i+j+14}f\over \partial x^{j}\partial y^{i}\partial z^{14}}|_{(-m,-m,z)}|))dz$\\

$\leq (2m){4F(\sum_{i=0}^{13}\sum_{j=0}^{13}L_{i,14}L_{j,i,14})\over m^{6}}$\\

$={8F(\sum_{i=0}^{13}\sum_{j=0}^{13}L_{i,14}L_{j,i,14})\over m^{5}}$\\

(the constants $L_{i,14},L_{j,i,14}$,$0\leq i\leq 13,0\leq j\leq 13$ coming from two applications of the proof of Lemma \ref{polynomial23D14})\\

Case 11.\\

$\int_{|x|\leq m,|y|\leq m,m\leq |z|\leq m+{1\over m^{3}}}|{\partial f_{m}\over \partial x^{14}}|dxdydz$\\

$=\int_{|x|\leq m,|y|\leq m}(\int_{m\leq |z|\leq m+{1\over m^{3}}}|{\partial f_{m}\over \partial x^{14}}|dz)dxdy$\\

$\leq {2\over m^{3}}\int_{|x|\leq m,|y|\leq m}(E_{14,0})$\\

$=(2m)^{2}{2\over m^{3}}E_{14,0}$\\

$={8E_{14,0}\over m}$\\

Case 12.\\

$\int_{|x|\leq m,|y|\leq m,m\leq |z|\leq m+{1\over m^{3}}}|{\partial f_{m}\over \partial y^{14}}|dxdydz$\\

$=\int_{|x|\leq m,|y|\leq m}(\int_{m\leq |z|\leq m+{1\over m^{3}}}|{\partial f_{m}\over \partial y^{14}}|dz)dxdy$\\

$\leq {2\over m^{3}}\int_{|x|\leq m,|y|\leq m}(E_{0,14})$\\

$=(2m)^{2}{2\over m^{3}}E_{0,14}$\\

$={8E_{0,14}\over m}$\\

(the constants $E_{0,14},E_{14,0}$ coming from an application of Lemma \ref{polynomial23D14} with a different orientation)\\

Case 13.

$\int_{|x|\leq m,|y|\leq m,m\leq |z|\leq m+{1\over m^{3}}}|{\partial f_{m}\over \partial z^{14}}|dxdydz$\\

$=\int_{|x|\leq m,|y|\leq m}(\int_{m\leq |z|\leq m+{1\over m^{3}}}|{\partial f_{m}\over \partial z^{14}}|dz)dxdy$\\

$=\int_{|x|\leq m,|y|\leq m}(|{\partial f\over \partial z^{13}}|(x,y,m)+|{\partial f\over \partial z^{13}}|(x,y,m))dxdy$\\

$\leq 2(2m)^{2}F$\\

$=8m^{2}F$\\

Case 14.\\

$\int_{m\leq |x|\leq m+{1\over m^{3}},|y|\leq m,m\leq |z|\leq m+{1\over m^{3}}}|{\partial^{14} f_{m}\over \partial x^{14}}|dxdydz$\\

$=\int_{m\leq |x|\leq m+{1\over m^{3}},|y|\leq m}(\int_{m\leq |z|\leq m+{1\over m^{3}}}|{\partial^{14} f_{m}\over \partial x^{14}}|dz)dxdy$\\

$\leq {1\over m^{3}}\int_{m\leq |x|\leq m+{1\over m^{3}},|y|\leq m}(\sum_{i=0}^{13}L_{i,14}|{\partial^{i+14} f_{m}\over \partial z^{i}\partial x^{14}}|(x,y,m)+L_{i,14}|{\partial^{i+14} f_{m}\over \partial z^{i}\partial x^{14}}|(x,y,-m))dxdy$\\

$={1\over m^{3}}\int_{|y|\leq m}(\int_{m\leq |x|\leq m+{1\over m^{3}}}(\sum_{i=0}^{13}L_{i,14}(|{\partial^{i+14} f_{m}\over \partial z^{i}\partial x^{14}}|(x,y,m)+L_{i,14}(|{\partial^{i+14} f_{m}\over \partial z^{i}\partial x^{14}}|(x,y,-m))dx)dy$\\

$={1\over m^{3}}\int_{|y|\leq m}(\sum_{i=0}^{13}L_{i,14}|{\partial^{i+13} f\over \partial z^{i}\partial x^{13}}|(m,y,m)+\sum_{i=0}^{13}L_{i,14}|{\partial^{i+13} f\over \partial z^{i}\partial x^{13}}|(-m,y,m)$\\

$+\sum_{i=0}^{13}L_{i,14}|{\partial^{i+13} f\over \partial z^{i}\partial x^{13}}|(m,y,m)+\sum_{i=0}^{13}L_{i,14}|{\partial^{i+13} f\over \partial z^{i}\partial x^{13}}|(-m,y,-m))dy$\\

$\leq (2m){1\over m^{3}}(4F)(\sum_{i=0}^{13}L_{i,14})$\\

$={8F(\sum_{i=0}^{13}L_{i,14})\over m^{2}}$\\

Case 15.\\

$\int_{m\leq |x|\leq m+{1\over m^{3}},|y|\leq m,m\leq |z|\leq m+{1\over m^{3}}}|{\partial^{14} f_{m}\over \partial y^{14}}|dxdydz$\\

$=\int_{m\leq |x|\leq m+{1\over m},|y|\leq m}(\int_{m\leq |z|\leq m+{1\over m^{3}}}|{\partial^{14} f_{m}\over \partial y^{14}}|dz)dxdy$\\

$\leq {1\over m^{3}}\int_{m\leq |x|\leq m+{1\over m^{3}},|y|\leq m}(\sum_{i=0}^{13}L_{i,14}|{\partial^{i+14} f_{m}\over \partial z^{i}\partial y^{14}}|(x,y,m)+L_{i,14}|{\partial^{i+14} f_{m}\over \partial y^{i}\partial x^{14}}|(x,y,-m))dxdy$\\

$={1\over m^{3}}\int_{|y|\leq m}(\int_{m\leq |x|\leq m+{1\over m^{3}}}(\sum_{i=0}^{13}L_{i,14}(|{\partial^{i+14} f_{m}\over \partial z^{i}\partial y^{14}}|(x,y,m)+L_{i,14}(|{\partial^{i+14} f_{m}\over \partial z^{i}\partial y^{14}}|(x,y,-m))dx)dy$\\

$\leq {1\over m^{6}}\int_{|y|\leq m}(\sum_{i=0}^{13}\sum_{j=0}^{13}L_{i,14}L_{i,j,14}|{\partial^{i+j+14} f\over \partial x^{j}\partial z^{i}\partial y^{14}}|(m,y,m)$\\

$+\sum_{i=0}^{13}\sum_{j=0}^{13}L_{i,14}L_{i,j,14}|{\partial^{i+j+14} f\over \partial x^{j}\partial z^{i}\partial y^{14}}|(-m,y,m)$\\

$+\sum_{i=0}^{13}\sum_{j=0}^{13}L_{i,14}L_{i,j,14}|{\partial^{i+j+14} f\over \partial x^{j}\partial z^{i}\partial y^{14}}|(m,y,-m)$\\

$+\sum_{i=0}^{13}\sum_{j=0}^{13}L_{i,14}L_{i,j,14}|{\partial^{i+j+14} f\over \partial x^{j}\partial z^{i}\partial y^{14}}|(-m,y,-m))dy$\\

$\leq (2m){1\over m^{6}}(4F)(\sum_{i=0}^{13}\sum_{j=0}^{13}L_{i,14}L_{i,j,14})$\\

$={8F(\sum_{i=0}^{13}\sum_{j=0}^{13}L_{i,14}L_{i,j,14})\over m^{5}}$\\

Case 16.\\

$\int_{m\leq |x|\leq m+{1\over m^{3}},|y|\leq m,m\leq |z|\leq m+{1\over m^{3}}}|{\partial^{14} f_{m}\over \partial z^{14}}|dxdydz$\\

$=\int_{m\leq |x|\leq m+{1\over m^{3}},|y|\leq m}(\int_{m\leq |z|\leq m+{1\over m^{3}}}|{\partial^{14} f_{m}\over \partial z^{14}}|dz)dxdy$\\

$=\int_{m\leq |x|\leq m+{1\over m^{3}},|y|\leq m}(|{\partial^{13} f_{m}\over \partial z^{13}}|(x,y,m)+|{\partial^{13} f_{m}\over \partial z^{13}}|(x,y,-m))dxdy$\\

$=\int_{|y|\leq m}(\int_{m\leq |x|\leq m+{1\over m^{3}}}(|{\partial^{13} f_{m}\over \partial z^{13}}|(x,y,m)+|{\partial^{13} f_{m}\over \partial z^{13}}|(x,y,-m))dx)dy$\\

$\leq {1\over m^{3}}\int_{|y|\leq m}(\sum_{i=0}^{13}L_{i,13}|{\partial^{i+13} f\over \partial x^{i}\partial z^{13}}|(m,y,m)+\sum_{i=0}^{13}L_{i,13}|{\partial^{i+13} f\over \partial x^{i}\partial z^{13}}|(-m,y,m)$\\

$+\sum_{i=0}^{13}L_{i,13}|{\partial^{i+13} f\over \partial x^{i}\partial z^{13}}|(m,y,-m)+\sum_{i=0}^{13}L_{i,13}|{\partial^{i+13} f\over \partial x^{i}\partial z^{13}}|(-m,y,-m))$\\

$\leq (2m){1\over m^{3}}(4F)(\sum_{i=0}^{13}L_{i,13})$\\

$={8F(\sum_{i=0}^{13}L_{i,13})\over m^{2}}$\\

Cases 17-19 are similar to cases 14-16, interchanging the orders of integration, with case 17 corresponding to case 15, case 18 corresponding to case 14 and case 19 corresponding to case 16, so that;\\

Case 17.\\

$\int_{|x|\leq m,m\leq |y|\leq m+{1\over m^{3}},m\leq |z|\leq m+{1\over m^{3}}}|{\partial^{14} f_{m}\over \partial x^{14}}|dxdydz$\\

$\leq {8F(\sum_{i=0}^{13}\sum_{j=0}^{13}L_{i,14}L_{i,j,14})\over m^{5}}$\\

Case 18.\\

$\int_{|x|\leq m,m\leq |y|\leq m+{1\over m^{3}},m\leq |z|\leq m+{1\over m^{3}}}|{\partial^{14} f_{m}\over \partial y^{14}}|dxdydz$\\

$\leq {8F(\sum_{i=0}^{13}L_{i,14})\over m^{2}}$\\

Case 19.\\

$\int_{|x|\leq m,m\leq |y|\leq m+{1\over m^{3}},m\leq |z|\leq m+{1\over m^{3}}}|{\partial^{14} f_{m}\over \partial z^{14}}|dxdydz$\\

$\leq {8F(\sum_{i=0}^{13}L_{i,13})\over m^{2}}$\\

Case 20.\\

$\int_{m\leq |x|\leq m+{1\over m^{3}},m\leq |y|\leq m+{1\over m^{3}},m\leq |z|\leq m+{1\over m^{3}}}|{\partial^{14} f_{m}\over \partial x^{14}}|dxdydz$\\

$=\int_{m\leq |x|\leq m+{1\over m^{3}},m\leq |y|\leq m+{1\over m^{3}}}(\int_{m\leq |z|\leq m+{1\over m^{3}}}|{\partial^{14} f_{m}\over \partial x^{14}}|dz)dxdy$\\

$\leq {1\over m^{3}}\int_{m\leq |x|\leq m+{1\over m^{3}},m\leq |y|\leq m+{1\over m^{3}}}(\sum_{i=0}^{13}L_{i,14}|{\partial^{i+14} f_{m}\over \partial z^{i}\partial x^{14}}|(x,y,m)+\sum_{i=0}^{13}L_{i,14}|{\partial^{i+14} f_{m}\over \partial z^{i}\partial x^{14}}|(x,y,-m))dxdy$\\

$={1\over m^{3}}\int_{m\leq |x|\leq m+{1\over m^{3}}}(\int_{m\leq |y|\leq m+{1\over m^{3}}}(\sum_{i=0}^{13}L_{i,14}(|{\partial^{i+14} f_{m}\over \partial z^{i}\partial x^{14}}|(x,y,m)$\\

$+\sum_{i=0}^{13}L_{i,14}(|{\partial^{i+14} f_{m}\over \partial z^{i}\partial x^{14}}|(x,y,-m))dy)dx$\\

$\leq {1\over m^{6}}\int_{m\leq |x|\leq m+{1\over m^{3}}}(\sum_{j=0}^{13}\sum_{i=0}^{13}L_{i,14}L_{i,j,14}|{\partial^{i+j+14} f_{m}\over \partial y^{j}\partial z^{i}\partial x^{14}}|(x,m,m)$\\

$+\sum_{j=0}^{13}\sum_{i=0}^{13}L_{i,14}L_{i,j,14}|{\partial^{i+j+14} f_{m}\over \partial y^{j}\partial z^{i}\partial x^{14}}|(x,-m,m)$\\

$+\sum_{j=0}^{13}\sum_{i=0}^{13}L_{i,14}L_{i,j,14}|{\partial^{i+j+14} f_{m}\over \partial y^{j}\partial z^{i}\partial x^{14}}|(x,m,-m)$\\

$+\sum_{j=0}^{13}\sum_{i=0}^{13}L_{i,14}L_{i,j,14}|{\partial^{i+j+14} f_{m}\over \partial y^{j}\partial z^{i}\partial x^{14}}|(x,-m,-m))dx$\\

$={1\over m^{6}}(\sum_{j=0}^{13}\sum_{i=0}^{13}L_{i,14}L_{i,j,14}|{\partial^{i+j+13} f\over \partial y^{j}\partial z^{i}\partial x^{13}}|(m,m,m)+\sum_{j=0}^{13}\sum_{i=0}^{13}L_{i,14}L_{i,j,14}|{\partial^{i+j+13} f\over \partial y^{j}\partial z^{i}\partial x^{13}}|(-m,m,m)$\\

$+\sum_{j=0}^{13}\sum_{i=0}^{13}L_{i,14}L_{i,j,14}|{\partial^{i+j+13} f\over \partial y^{j}\partial z^{i}\partial x^{13}}|(m,-m,m)+\sum_{j=0}^{13}\sum_{i=0}^{13}L_{i,14}L_{i,j,14}|{\partial^{i+j+13} f\over \partial y^{j}\partial z^{i}\partial x^{13}}|(-m,-m,m)$\\

$+\sum_{j=0}^{13}\sum_{i=0}^{13}L_{i,14}L_{i,j,14}|{\partial^{i+j+13} f\over \partial y^{j}\partial z^{i}\partial x^{13}}|(m,m,-m)+\sum_{j=0}^{13}\sum_{i=0}^{13}L_{i,14}L_{i,j,14}|{\partial^{i+j+13} f\over \partial y^{j}\partial z^{i}\partial x^{13}}|(-m,m,-m)$\\

$+\sum_{j=0}^{13}\sum_{i=0}^{13}L_{i,14}L_{i,j,14}|{\partial^{i+j+13} f\over \partial y^{j}\partial z^{i}\partial x^{13}}|(m,-m,-m)$\\

$+\sum_{j=0}^{13}\sum_{i=0}^{13}L_{i,14}L_{i,j,14}|{\partial^{i+j+13} f\over \partial y^{j}\partial z^{i}\partial x^{13}}|(-m,-m,-m))$\\

$\leq {8F\over m^{6}}(\sum_{j=0}^{13}\sum_{i=0}^{13}L_{i,14}L_{i,j,14})$\\

Case 21.\\

$\int_{m\leq |x|\leq m+{1\over m^{3}},m\leq |y|\leq m+{1\over m^{3}},m\leq |z|\leq m+{1\over m^{3}}}|{\partial^{14} f_{m}\over \partial y^{14}}|dxdydz$\\

$=\int_{m\leq |x|\leq m+{1\over m^{3}},m\leq |y|\leq m+{1\over m^{3}}}(\int_{m\leq |z|\leq m+{1\over m^{3}}}|{\partial^{14} f_{m}\over \partial y^{14}}|dz)dxdy$\\

$\leq {1\over m^{3}}\int_{m\leq |x|\leq m+{1\over m^{3}},m\leq |y|\leq m+{1\over m^{3}}}(\sum_{i=0}^{13}L_{i,14}|{\partial^{i+14} f_{m}\over \partial z^{i}\partial y^{14}}|(x,y,m)+\sum_{i=0}^{13}L_{i,14}|{\partial^{i+14} f_{m}\over \partial z^{i}\partial y^{14}}|(x,y,-m))dxdy$\\

$={1\over m^{3}}\int_{|x|\leq m+{1\over m^{3}}}(\int_{m\leq |y|\leq m+{1\over m^{3}}}(\sum_{i=0}^{13}L_{i,14}(|{\partial^{i+14} f_{m}\over \partial z^{i}\partial y^{14}}|(x,y,m)+\sum_{i=0}^{13}L_{i,14}(|{\partial^{i+14} f_{m}\over \partial z^{i}\partial y^{14}}|(x,y,-m))dy)dx$\\

$={1\over m^{3}}\int_{m\leq |x|\leq m+{1\over m^{3}}}(\sum_{i=0}^{13}L_{i,14}|{\partial^{i+13} f_{m}\over \partial z^{i}\partial y^{13}}|(x,m,m)$\\

$+\sum_{i=0}^{13}L_{i,14}|{\partial^{i+13} f_{m}\over \partial z^{i}\partial y^{13}}|(x,-m,m)$\\

$+\sum_{i=0}^{13}L_{i,14}|{\partial^{i+13} f_{m}\over \partial z^{i}\partial y^{13}}|(x,m,-m)$\\

$+\sum_{i=0}^{13}L_{i,14}|{\partial^{i+13} f_{m}\over \partial z^{i}\partial y^{13}}|(x,-m,-m))dx$\\

$\leq {1\over m^{6}}(\sum_{j=0}^{13}\sum_{i=0}^{13}L_{i,14}L_{i,j,13}|{\partial^{i+j+13} f\over \partial x^{j}\partial z^{i}\partial y^{13}}|(m,m,m)+\sum_{j=0}^{13}\sum_{i=0}^{13}L_{i,14}L_{i,j,13}|{\partial^{i+j+13} f\over \partial x^{j}\partial z^{i}\partial y^{13}}|(-m,m,m)$\\

$+\sum_{j=0}^{13}\sum_{i=0}^{13}L_{i,14}L_{i,j,13}|{\partial^{i+j+13} f\over \partial x^{j}\partial z^{i}\partial y^{13}}|(m,-m,m)+\sum_{j=0}^{13}\sum_{i=0}^{13}L_{i,14}L_{i,j,13}|{\partial^{i+j+13} f\over \partial x^{j}\partial z^{i}\partial y^{13}}|(-m,-m,m)$\\

$+\sum_{j=0}^{13}\sum_{i=0}^{13}L_{i,14}L_{i,j,13}|{\partial^{i+j+13} f\over \partial x^{j}\partial z^{i}\partial y^{13}}|(m,m,-m)+\sum_{j=0}^{13}\sum_{i=0}^{13}L_{i,14}L_{i,j,13}|{\partial^{i+j+13} f\over \partial x^{j}\partial z^{i}\partial y^{13}}|(-m,m,-m)$\\

$+\sum_{j=0}^{13}\sum_{i=0}^{13}L_{i,14}L_{i,j,13}|{\partial^{i+j+13} f\over \partial x^{j}\partial z^{i}\partial y^{13}}|(m,-m,-m)$\\

$+\sum_{j=0}^{13}\sum_{i=0}^{13}L_{i,14}L_{i,j,13}|{\partial^{i+j+13} f\over \partial x^{j}\partial z^{i}\partial y^{13}}|(-m,-m,-m))$\\

$\leq {8F(\sum_{j=0}^{13}\sum_{i=0}^{13}L_{i,14}L_{i,j,13})\over m^{6}}$\\

Case 22.\\

$\int_{m\leq |x|\leq m+{1\over m^{3}},m\leq |y|\leq m+{1\over m^{3}},m\leq |z|\leq m+{1\over m^{3}}}|{\partial^{14} f_{m}\over \partial z^{14}}|dxdydz$\\

$=\int_{m\leq |x|\leq m+{1\over m^{3}},m\leq |y|\leq m+{1\over m^{3}}}(\int_{m\leq |z|\leq m+{1\over m^{3}}}|{\partial^{14} f_{m}\over \partial z^{14}}|dz)dxdy$\\

$=\int_{m\leq |x|\leq m+{1\over m^{3}},m\leq |y|\leq m+{1\over m^{3}}}(|{\partial^{13}f_{m}\over \partial z^{13}}|(x,y,m)+|{\partial^{13}f_{m}\over \partial z^{13}}|(x,y,-m))dxdy$\\

$=\int_{m\leq |x|\leq m+{1\over m^{3}}}(\int_{m\leq |y|\leq m+{1\over m^{3}}}((|{\partial^{13}f_{m}\over \partial z^{13}}|(x,y,m)+|{\partial^{13}f_{m}\over \partial z^{13}}|(x,y,-m))dy)dx$\\

$\leq {1\over m^{3}}\int_{|x|\leq m+{1\over m^{3}}}(\sum_{i=0}^{13}L_{i,13}|{\partial^{i+13} f_{m}\over \partial y^{i}\partial z^{13}}|(x,m,m)$\\

$+\sum_{i=0}^{13}L_{i,13}|{\partial^{i+13} f_{m}\over \partial y^{i}\partial z^{13}}|(x,-m,m)$\\

$+\sum_{i=0}^{13}L_{i,13}|{\partial^{i+13} f_{m}\over \partial y^{i}\partial z^{13}}|(x,m,-m)$\\

$+\sum_{i=0}^{13}L_{i,13}|{\partial^{i+13} f_{m}\over \partial y^{i}\partial z^{13}}|(x,-m,-m))dx$\\

$\leq {1\over m^{6}}(\sum_{j=0}^{13}\sum_{i=0}^{13}L_{i,13}L_{i,j,13}|{\partial^{i+j+13} f\over \partial x^{j}\partial y^{i}\partial z^{13}}|(m,m,m)+\sum_{j=0}^{13}\sum_{i=0}^{13}L_{i,13}L_{i,j,13}|{\partial^{i+j+13} f\over \partial x^{j}\partial y^{i}\partial z^{13}}|(-m,m,m)$\\

$+\sum_{j=0}^{13}\sum_{i=0}^{13}L_{i,13}L_{i,j,13}|{\partial^{i+j+13} f\over \partial x^{j}\partial y^{i}\partial z^{13}}|(m,-m,m)+\sum_{j=0}^{13}\sum_{i=0}^{13}L_{i,13}L_{i,j,13}|{\partial^{i+j+13} f\over \partial x^{j}\partial y^{i}\partial z^{13}}|(-m,-m,m)$\\

$+\sum_{j=0}^{13}\sum_{i=0}^{13}L_{i,13}L_{i,j,13}|{\partial^{i+j+13} f\over \partial x^{j}\partial y^{i}\partial z^{13}}|(m,m,-m)+\sum_{j=0}^{13}\sum_{i=0}^{13}L_{i,13}L_{i,j,13}|{\partial^{i+j+13} f\over \partial x^{j}\partial y^{i}\partial z^{13}}|(-m,m,-m)$\\

$+\sum_{j=0}^{13}\sum_{i=0}^{13}L_{i,13}L_{i,j,13}|{\partial^{i+j+13} f\over \partial x^{j}\partial y^{i}\partial z^{13}}|(m,-m,-m)$\\

$+\sum_{j=0}^{13}\sum_{i=0}^{13}L_{i,13}L_{i,j,13}|{\partial^{i+j+13} f\over \partial x^{j}\partial y^{i}\partial z^{13}}|(-m,-m,-m))$\\

$\leq {8F(\sum_{j=0}^{13}\sum_{i=0}^{13}L_{i,13}L_{i,j,13})\over m^{6}}$\\

It is then clear from $(*)$, summing the bounds from the individual cases 1-19, as at the end of the proof of Lemma \ref{existence}, that there exists a constant $G\in\mathcal{R}_{>0}$ with;\\

$max(\int_{\mathcal{R}^{3}}|{\partial f_{m}\over \partial x^{14}}|dxdydz,\int_{\mathcal{R}^{3}}|{\partial f_{m}\over \partial y^{14}}|dxdydz,\int_{\mathcal{R}^{3}}|{\partial f_{m}\over \partial z^{14}}|dxdydz)\leq Gm^{3}$\\

for sufficiently large $m$.\\

\end{proof}
\begin{lemma}
\label{decayofapprox}
Let $\{f_{m}:m\in\mathcal{N}\}$ be the inflexionary sequence constructed in Lemma \ref{existence2}, then for $\overline{k}\in\mathcal{R}^{3}$, $\overline{k}\neq \overline{0}$, sufficiently large $m$, we have that there exists $D\in\mathcal{R}_{>0}$, independent of $m$, with;\\

$|\mathcal{F}(f_{m})(\overline{k})|\leq {Dm^{3}\over |\overline{k}|^{14}}$\\

Moreover, for sufficiently large $m$, $\mathcal{F}(f_{m})\in L^{1}(\mathcal{R}^{3})$.\\

A similar result holds for the inflexionary sequence $\{f_{m}:m\in\mathcal{N}\}$, constructed in Lemma \ref{existence}, for $\overline{k}\neq 0$, sufficiently large $m$, we have that there exists $D\in\mathcal{R}_{>0}$, independent of $m$, with;\\

$|\mathcal{F}(f_{m})(\overline{k})|\leq {Dm^{2}\over |\overline{k}|^{14}}$\\

Moreover, for sufficiently large $m$, $\mathcal{F}(f_{m})\in L^{1}(\mathcal{R}^{3})$.\\

\end{lemma}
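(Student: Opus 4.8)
The plan is to exploit the compact support of $f_{m}$ together with the $L^{1}$ bounds on its fourteenth pure partial derivatives furnished by Lemmas \ref{existence} and \ref{existence2}, via iterated integration by parts, in exactly the spirit of the derivation of $(*)$ in Lemma \ref{integrablefirst}.

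First I would fix $\overline{k}\neq\overline{0}$ and, by the symmetry of the construction in the coordinate directions, assume the component of $\overline{k}$ of largest modulus is the first one, which I write as $k_{1}$ with associated variable $x$; then $k_{1}\neq 0$ and $|k_{1}|\geq|\overline{k}|/\sqrt{3}$ in dimension $3$ (resp. $|k_{1}|\geq|\overline{k}|/\sqrt{2}$ in dimension $2$). Since $f_{m}$ is supported in $C_{m+1/m^{3}}$ and lies in $C^{13,13,14,m}(\mathcal{R}^{3})$, each of $f_{m},\frac{\partial f_{m}}{\partial x},\dots,\frac{\partial^{13}f_{m}}{\partial x^{13}}$ is continuous with compact support, and $\frac{\partial^{13}f_{m}}{\partial x^{13}}$ is Lipschitz in $x$ since $\frac{\partial^{14}f_{m}}{\partial x^{14}}$ is bounded off the measure-zero grid $Q_{m}$; hence $\frac{\partial^{13}f_{m}}{\partial x^{13}}$ is absolutely continuous along almost every line parallel to the $x$-axis. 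Using Fubini (legitimate by compact support and boundedness) and integrating by parts fourteen times in $x$ inside each such line integral — the contributions at $x=\pm\infty$ vanishing by compact support — I would obtain, as in \cite{dep3} and Lemma \ref{integrablefirst}, the identity
\[
|k_{1}|^{14}\,|\mathcal{F}(f_{m})(\overline{k})|=\left|\mathcal{F}\!\left(\frac{\partial^{14}f_{m}}{\partial x^{14}}\right)(\overline{k})\right|.
\]

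Next I would apply the trivial Fourier bound together with Lemma \ref{existence2} (resp. Lemma \ref{existence}):
\[
\left|\mathcal{F}\!\left(\frac{\partial^{14}f_{m}}{\partial x^{14}}\right)(\overline{k})\right|\leq\frac{1}{(2\pi)^{3/2}}\left\|\frac{\partial^{14}f_{m}}{\partial x^{14}}\right\|_{L^{1}(\mathcal{R}^{3})}\leq\frac{Gm^{3}}{(2\pi)^{3/2}},
\]
so that, using $|k_{1}|\geq|\overline{k}|/\sqrt{3}$,
\[
|\mathcal{F}(f_{m})(\overline{k})|\leq\frac{3^{7}Gm^{3}}{(2\pi)^{3/2}\,|\overline{k}|^{14}}=:\frac{Dm^{3}}{|\overline{k}|^{14}},
\]
with $D$ independent of $m$; the two-dimensional case is identical with $m^{2}$, $2\pi$ and $2^{7}$ replacing $m^{3}$, $(2\pi)^{3/2}$ and $3^{7}$, using Lemma \ref{existence}. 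For the $L^{1}$ assertion I would note that $\mathcal{F}(f_{m})$ is continuous (as $f_{m}\in L^{1}$), hence bounded on the finite-measure ball $B(\overline{0},1)$, so $\mathcal{F}(f_{m})\in L^{1}(B(\overline{0},1))$; and on $\mathcal{R}^{d}\setminus B(\overline{0},1)$ the decay estimate above, together with $\int_{\mathcal{R}^{d}\setminus B(\overline{0},1)}|\overline{k}|^{-14}\,d\overline{k}<\infty$ (a polar-coordinates computation, valid since $14>d$ for $d\in\{2,3\}$), gives integrability, so that $\mathcal{F}(f_{m})\in L^{1}(\mathcal{R}^{d})$.

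The one step I expect to require genuine care, as opposed to bookkeeping, is the justification of the iterated integration by parts for the merely piecewise-$C^{14}$ function $f_{m}$: one must confirm that no interface contributions survive along the gluing hypersurfaces $x=\pm m$ and $x=\pm(m+\tfrac{1}{m^{3}})$. This is precisely what the matching conditions of Definition \ref{threedimensions} (resp. Definition \ref{twodimensions}) are designed to ensure — all derivatives of order at most $13$ agree across $x=\pm m$ and vanish on the outer faces — so that along almost every line parallel to the $x$-axis the function $x\mapsto\frac{\partial^{13}f_{m}}{\partial x^{13}}(x,\cdot)$ is globally absolutely continuous, being continuous and Lipschitz on each subinterval determined by the finitely many grid points on that line; the fundamental theorem of calculus then applies and the usual integration-by-parts identity holds with only the vanishing endpoint terms at $\pm\infty$. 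Everything else reduces to the trivial-bound argument already used in Lemma \ref{integrablefirst}.
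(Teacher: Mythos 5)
Your argument is correct, and it is genuinely different from — and somewhat cleaner than — the one in the paper. The paper integrates by parts fourteen times in \emph{each} of the three coordinate directions, sums the resulting identities to obtain
$\mathcal{F}(\partial^{14}f_{m}/\partial x^{14}+\partial^{14}f_{m}/\partial y^{14}+\partial^{14}f_{m}/\partial z^{14})(\overline{k})=-(k_{1}^{14}+k_{2}^{14}+k_{3}^{14})\mathcal{F}(f_{m})(\overline{k})$,
and then bounds $1/(k_{1}^{14}+k_{2}^{14}+k_{3}^{14})$ by $1/(\epsilon|\overline{k}|^{14})$ via a compactness argument: it passes to polar coordinates, writes $k_{1}^{14}+k_{2}^{14}+k_{3}^{14}=r^{14}\alpha(\theta,\phi)$ with $\alpha(\theta,\phi)=\sin^{14}\theta(\cos^{14}\phi+\sin^{14}\phi)+\cos^{14}\theta$, checks case by case that $\alpha$ never vanishes, and then invokes continuity on the compact set $[0,\pi]\times[-\pi,\pi]$ to get a positive lower bound $\epsilon$. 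You instead integrate by parts only in the direction of the component of $\overline{k}$ of largest modulus and use the elementary inequality $\max_{i}|k_{i}|\geq|\overline{k}|/\sqrt{d}$; this yields the same $|\overline{k}|^{-14}$ decay with the explicit constant $d^{7}G/(2\pi)^{d/2}$ in place of the paper's existential $\epsilon$, and sidesteps the compactness argument entirely. For the $L^{1}$ claim the paper establishes $C^{1}$ regularity of $\mathcal{F}(f_{m})$ by differentiating under the integral sign (using $x_{i}f_{m}\in L^{1}$) and deduces continuity from that; you observe directly that $f_{m}\in L^{1}$ forces $\mathcal{F}(f_{m})$ continuous and bounded, which is all that is needed. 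You also supply a justification of the fourteenfold integration by parts for the piecewise-$C^{14}$ function $f_{m}$ (absolute continuity of $\partial^{13}f_{m}/\partial x^{13}$ along almost every line, matching conditions across the interfaces) that the paper takes for granted — that is an improvement in rigor, not a divergence of method. One small stylistic point: your invocation of ``the symmetry of the construction in the coordinate directions'' is not really needed; the construction of Definition \ref{threedimensions} is not literally symmetric under permuting $x,y,z$, but Lemma \ref{existence2} provides the same $Gm^{3}$ bound for all three pure fourteenth derivatives, which is all your argument actually uses.
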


\begin{proof}

For $(k_{1},k_{2},k_{3})\in\mathcal{R}^{3}$, using repeated integration by parts, and the fact that;\\

$\{{\partial f_{m}\over \partial x^{14}},{\partial f_{m}\over \partial y^{14}},{\partial f_{m}\over \partial z^{14}}\}\subset L^{1}(\mathcal{R}^{3})$\\

$\{{\partial f_{m}\over \partial x^{i}},{\partial f_{m}\over \partial y^{i}},{\partial f_{m}\over \partial z^{i}}\}\subset C_{c}(\mathcal{R}^{3})$, for $1\leq i\leq 13$\\

where $C_{c}(\mathcal{R}^{3})$ is the space of continuous functions with compact support, we have, for $m\in\mathcal{N}$;\\

$\mathcal{F}({\partial^{14}f_{m}\over \partial x^{14}}+{\partial^{14}g\over \partial y^{14}}+{\partial^{14}g\over \partial z^{14}})(\overline{k})$\\

$={1\over (2\pi)^{3\over 2}}\int_{-\infty}^{\infty}\int_{-\infty}^{\infty}\int_{-\infty}^{\infty}({\partial^{14}f_{m}\over \partial x^{14}}+{\partial^{14}f_{m}\over \partial y^{14}}+{\partial^{14}f_{m}\over \partial z^{14}})e^{-ik_{1}x}e^{-ik_{2}y}e^{-ik_{3}z}dxdydz$\\

$=((ik_{1})^{14}+(ik_{2})^{14}+(ik_{3})^{14}){1\over (2\pi)^{3\over 2}}\int_{-\infty}^{\infty}\int_{-\infty}^{\infty}\int_{-\infty}^{\infty}f_{m}(x,y,z)e^{-ik_{1}x}e^{-ik_{2}y}e^{-ik_{3}z}dxdydz$\\

$=(-k_{1}^{14}-k_{2}^{14}-k_{3}^{14})\mathcal{F}(f_{m})(\overline{k})$\\

so that, for $\overline{k}\neq \overline{0}$;\\

$|\mathcal{F}(f_{m})(\overline{k})|\leq {|\mathcal{F}({\partial^{14}f_{m}\over \partial x^{14}}+{\partial^{14}f_{m}\over \partial y^{14}}+{\partial^{14}f_{m}\over \partial z^{14}})(\overline{k})|\over (k_{1}^{14}+k_{2}^{14}+k_{3}^{14})}$ $(\dag)$\\

We have, using the result of Lemma \ref{existence2}, for sufficiently large $m$, that;\\

$|\mathcal{F}({\partial^{14}f_{m}\over \partial x^{14}}+{\partial^{14}f_{m}\over \partial y^{14}}+{\partial^{14}f_{m}\over \partial z^{14}})(\overline{k})|$\\

${1\over (2\pi)^{3\over 2}}|\int_{\mathcal{R}^{3}}({\partial^{14}f_{m}\over \partial x^{14}}+{\partial^{14}f_{m}\over \partial y^{14}}+{\partial^{14}f_{m}\over \partial z^{14}})e^{-ik_{1}x}e^{-ik_{2}y}e^{-ik_{3}z}dxdydz|$\\

$\leq{1\over (2\pi)^{3\over 2}}\int_{\mathcal{R}^{3}}(|{\partial f_{m}\over \partial x^{14}}|+|{\partial f_{m}\over \partial y^{14}}|+|{\partial f_{m}\over \partial z^{14}}|)dxdydz$\\

$\leq {3G\over (2\pi)^{3\over 2}}m^{3}$ $(\dag\dag)$\\

so that, combining $(\dag)$ and $(\dag\dag)$, we have, for $\overline{k}\neq \overline{0}$, sufficiently large $m$;\\

$|\mathcal{F}(f_{m})(\overline{k})|\leq {3G\over (2\pi)^{3\over 2}}{m^{3}\over (k_{1}^{14}+k_{2}^{14}+k_{3}^{14})}$ $(*)$\\

Using polar coordinates $k_{1}=rsin(\theta)cos(\phi)$, $k_{2}=rsin(\theta)sin(\phi)$, $k_{3}=rcos(\theta)$, $0\leq \theta\leq\pi$, $-\pi<\phi\leq \pi$, we have that;\\

${1\over (k_{1}^{14}+k_{2}^{14}+k_{3}^{14})}={1\over r^{14}}{1\over \alpha(\theta,\phi)}$\\

where $\alpha(\theta,\phi)=sin^{14}(\theta)(cos^{14}(\phi)+sin^{14}(\phi))+cos^{14}(\theta)$\\

We have that, in the range $0\leq \theta\leq\pi$, $-\pi\leq\phi\leq \pi$, with $\theta\neq {\pi\over 2}$, $|\phi|\neq {\pi\over 2}$;\\

$\alpha(\theta,\phi)=0$\\

iff $tan^{14}(\theta)(1+tan^{14}(\phi))+{1\over cos^{14}(\phi)}=0$\\

iff $tan^{14}(\theta)(1+tan^{14}(\phi))=-{1\over cos^{14}(\phi)}$\\

which has no solution, as the two sides of the equation have opposite signs.\\

and, with $\theta={\pi\over 2}$, , $|\phi|\neq {\pi\over 2}$\\

$\alpha(\theta,\phi)=0$\\

iff $cos^{14}(\phi)+sin^{14}(\phi)=0$\\

iff $tan^{14}(\phi)=-1$\\

which has no solution, as the two sides of the equation have opposite signs.\\

and, with $\theta\neq {\pi\over 2}$, , $|\phi|={\pi\over 2}$\\

$\alpha(\theta,\phi)=0$\\

iff $cos^{14}(\theta)+sin^{14}(\theta)=0$\\

iff $tan^{14}(\theta)=-1$\\

which has no solution, as the two sides of the equation have opposite signs.\\

and, with $\theta={\pi\over 2}$, , $|\phi|={\pi\over 2}$\\

$\alpha(\theta,\phi)=0$\\

iff $1=0$\\

which is not the case. It follows that $\alpha(\theta,\phi)=0$ has no solution in the range $0\leq \theta\leq\pi$, $-\pi\leq\phi\leq \pi$. By continuity, compactness of $[0\,\pi]\times [-\pi,\pi]$ and the fact that $\alpha({\pi\over 2},{\pi\over 2})=1$, restricting the interval $[-\pi,\pi]$, there exists $\epsilon>0$, with $\alpha(\theta,\phi)\geq \epsilon$, for $0\leq \theta\leq\pi$, $-\pi<\phi\leq \pi$. In particularly;\\

${1\over (k_{1}^{14}+k_{2}^{14}+k_{3}^{14})}\leq {1\over \epsilon r^{14}}$\\

$={1\over \epsilon |\overline{k}|^{14}}$\\

so that, from $(*)$;\\

$|\mathcal{F}(f_{m})(\overline{k})|\leq {3G\over (2\pi)^{3\over 2}}{m^{3}\over \epsilon |\overline{k}|^{14}}$\\

$={Dm^{3}\over |\overline{k}|^{14}}$\\

where $D={3G\over \epsilon (2\pi)^{3\over 2}}$\\

For the final claim, we have, for $1\leq i\leq 3$, $m\in\mathcal{N}$, as $f_{m}$ is supported on $W_{m+{1\over m}}$ and continuous, that $x_{i}f_{m}\in L^{1}(\mathcal{R}^{3})$ and, differentiating under the integral sign;\\

$|{\partial\mathcal{F}(f_{m})(\overline{k})\over \partial k^{i}}|=|{\partial \over \partial k^{i}}({1\over (2\pi)^{3\over 2}}\int_{\mathcal{R}^{3}}f_{m}(\overline{x})e^{-i\overline{k}\centerdot\overline{x}}d\overline{x})|$\\

$=|{-i\over (2\pi)^{3\over 2}}\int_{\mathcal{R}^{3}}x_{i}f_{m}(\overline{x})e^{-i\overline{k}\centerdot\overline{x}}d\overline{x})|$\\

$\leq {1\over (2\pi)^{3\over 2}}\int_{\mathcal{R}^{3}}|x_{i}f_{m}(\overline{x})|d\overline{x}$\\

$={1\over (2\pi)^{3\over 2}}||x_{i}f_{m}(\overline{x})||_{1}$\\

so that ${\partial\mathcal{F}(f_{m})(\overline{k})\over \partial k^{i}}$ is bounded, and, in particularly, $\mathcal{F}(f_{m})$ is continuous, for $m\in\mathcal{N}$. It follows, using the first result, and polar coordinates, that, for $n>1$, sufficiently large $m$;\\

$|\int_{\mathcal{R}^{3}}\mathcal{F}(f_{m})(\overline{k})d\overline{k}|\leq \int_{B(\overline{0},n)}|\mathcal{F}(f_{m})(\overline{k})|d\overline{k}+\int_{{\mathcal{R}^{3}\setminus B(\overline{0},n)}}|\mathcal{F}(f_{m})(\overline{k})|d\overline{k}$\\

$\leq {4C_{n}\pi^{3}\over 3}+\int_{{\mathcal{R}^{3}\setminus B(\overline{0},n)}}{Dm^{3}\over |\overline{k}|^{14}}$\\

$\leq {4C_{n}\pi^{3}\over 3}+\int_{0}^{\pi}\int_{-\pi}^{\pi}\int_{n}^{\infty}{Dm^{3}\over r^{14}}|r^{2}sin(\theta)|drd\theta d\phi$\\

$\leq {4C_{n}\pi^{3}\over 3}+2D\pi^{2}m^{3}\int_{n}^{\infty}{dr\over r^{12}}$\\

$\leq {4C_{n}\pi^{3}\over 3}+2D\pi^{2}m^{3}[{-1\over 11 r^{11}}]^{\infty}_{n}$\\

$={4C_{n}\pi^{3}\over 3}+{2D\pi^{2}m^{3}\over 11 n^{11}}$\\

where $C_{n}=||\mathcal{F}(f_{m})|_{B(\overline{0},n)}||_{\infty}$, so that $\mathcal{F}(f_{m})\in L^{1}(\mathcal{R}^{3})$.\\

A similar proof works in the two dimensional case.\\

\end{proof}

\begin{lemma}
\label{squarecube23}
Let $\{f_{m}:m\in\mathcal{N}\}$ be the inflexionary sequences constructed in Lemmas \ref{existence} and \ref{existence2}, then;\\

$\int_{[-m-{1\over m^{2}},m+{1\over m^{2}}]^{2}\setminus [-m,m]^{2}}|f_{m}|dxdy\leq {E\over m}$\\

for sufficiently large $m\in\mathcal{N}$, where $E\in\mathcal{R}_{>0}$.\\

$\int_{[-m-{1\over m^{3}},m+{1\over m^{3}}]^{3}\setminus [-m,m]^{3}}|f_{m}|dxdydz\leq {E\over m}$\\

for sufficiently large $m\in\mathcal{N}$, where $E\in\mathcal{R}_{>0}$.\\
\end{lemma}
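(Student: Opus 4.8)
The plan is to cut the collar region into the finitely many blocks on which $f_m$ is produced, block by block, in the constructions of Lemma~\ref{existence} and Lemma~\ref{existence2}, and on each block to combine a sup bound on $|f_m|$ that is independent of $m$ with the small volume of the block. (By $(ii)$ and $(iii)$ of the relevant definition, $f_m$ equals $f$ on $C_m$ and vanishes outside $C_{m+{1\over m^2}}$, resp. $C_{m+{1\over m^3}}$, so the collar is precisely the region where the ``extension'' part of $f_m$ lives.)

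\emph{Two dimensions.} Up to a set of measure zero the collar $C_{m+{1\over m^2}}\setminus C_m$ is $R\cup S$, where $R=\{(x,y):|x|\le m,\ m\le|y|\le m+{1\over m^2}\}$ and $S=\{(x,y):m\le|x|\le m+{1\over m^2},\ |y|\le m+{1\over m^2}\}$ are the two families of extension blocks appearing in the proof of Lemma~\ref{existence}. On $R$ the function $f_m$ is the output of the vertical applications of Lemma~\ref{polynomial22D14} with $n=14$, whose data $\{{\partial^j f\over\partial y^j}|_{(x,\pm m)}:0\le j\le 13\}$ are bounded by $F$, so the first clause of Lemma~\ref{polynomial22D14} gives $|f_m|\le C$ on $R$ with $C=C(F)$ independent of $m$. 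On $S$ the data for the horizontal applications are $\{{\partial^i f_m\over\partial x^i}|_{(\pm m,y)}:0\le i\le 13\}$, which equal ${\partial^i f\over\partial x^i}$ (bounded by $F$) when $|y|\le m$ and are bounded by the constants $C_i$ furnished by the second clause of Lemma~\ref{polynomial22D14} at the previous stage (again independent of $m$) when $m\le|y|\le m+{1\over m^2}$; hence $|f_m|\le C'$ on $S$ with $C'$ independent of $m$. Since $R$ has area $4/m$ and $S$ has area $4(m+{1\over m^2})/m^2\le 8/m$ for $m\ge 1$, we obtain $\int_{C_{m+{1\over m^2}}\setminus C_m}|f_m|\,dxdy\le (4C+8C')/m=:E/m$.

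\emph{Three dimensions.} Proceed identically. Up to measure zero the collar $C_{m+{1\over m^3}}\setminus C_m$ is the union of the extension blocks occurring in the three successive applications in the proof of Lemma~\ref{existence2}: the six ``face'' slabs (one transverse direction of thickness ${1\over m^3}$, the other two of length $O(m)$), the twelve ``edge'' slabs (two transverse directions), and the eight ``corner'' cubes (three transverse directions). On each block $|f_m|\le C$ with $C$ independent of $m$, obtained by iterating the first clause of Lemma~\ref{polynomial23D14} through the three stages: the stage-one data are derivatives of $f$ bounded by $F$, and the stage-two and stage-three data are bounded by the constants $C_i$, $C_{i,k}$ produced by the second clause of Lemma~\ref{polynomial23D14} at the earlier stages --- this is precisely why $f$ is assumed to have derivatives bounded through order $40=13+13+14$. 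The face slabs have volume $O(1/m)$ each, the edge slabs $O(1/m^5)$, the corner cubes $O(1/m^9)$, so summing the finitely many contributions yields $\int_{C_{m+{1\over m^3}}\setminus C_m}|f_m|\,dxdydz\le E/m$.

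There is no analytic obstacle: the content is the bookkeeping needed to see that the bound $|f_m|\le C$ on the collar really is uniform in $m$, i.e.\ that at each stage the data fed into Lemma~\ref{polynomial22D14}/\ref{polynomial23D14} are controlled by quantities ($F$, and the $C_i$, $C_{i,k}$ of the previous stage) that do not grow with $m$. This is the same tracking of constants already carried out, in the more demanding form needed for the $14$th derivatives, in the proofs of Lemma~\ref{existence} and Lemma~\ref{existence2}; here only $|f_m|$ itself is needed, which is easier. The one point to watch is the elementary volume count of the blocks, but each block of the collar is thin --- of thickness ${1\over m^2}$ in dimension two and ${1\over m^3}$ in dimension three --- in at least one direction, so each contributes at most $O(1/m)$, and there are only finitely many.
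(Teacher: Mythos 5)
Your proposal is correct and takes essentially the same approach as the paper: a uniform-in-$m$ sup bound on $|f_m|$ over the collar, obtained by chaining the first clause of Lemma~\ref{polynomial22D14}/\ref{polynomial23D14} through the successive extension stages (with the lemma's second clause supplying bounded data at the later stages), multiplied by the $O(1/m)$ measure of the collar. The only cosmetic difference is that you tally the volume block by block, whereas the paper computes $\mathrm{vol}(C_{m+1/m^2}\setminus C_m)$ and $\mathrm{vol}(C_{m+1/m^3}\setminus C_m)$ in one step by the binomial theorem; the two bookkeepings are equivalent.
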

\begin{proof}
By the construction, we obtain the result that for an inflexionary approximation sequence $f_{m}$ in $\mathcal{R}^{2}$ or $\mathcal{R}^{3}$;\\

$|f_{m}|_{[-m-{1\over m^{2}},m+{1\over m^{2}}]^{2}\setminus [-m,m]^{2}}\leq D$\\

$|f_{m}|_{[-m-{1\over m^{3}},m+{1\over m^{3}}]^{3}\setminus [-m,m]^{3}}\leq D$ $(*)$\\

independently of $m$. We give the proof of $(*)$ in the $3$-dimensional case. We have that, for $m\leq x\leq m+{1\over m^{3}}$, $m\leq y\leq m+{1\over m^{3}}$, $m\leq z\leq m+{1\over m^{3}}$;\\

$|f_{m}|(x,y,z)\leq \sum_{i=0}^{13}D_{i}|{\partial^{i}f_{m}\over \partial z^{i}}|(x,y,m)$\\

$\leq \sum_{i=0}^{13}D_{i}\sum_{j=0}^{13}D_{ij}{\partial^{i+j}f_{m}\over \partial y^{j}\partial z^{i}}|(x,m,m)$\\

$\leq \sum_{i=0}^{13}D_{i}\sum_{j=0}^{13}D_{ij}\sum_{k=0}^{13}D_{ijk}{\partial^{i+j+k}f_{m}\over \partial x^{k}\partial y^{j}\partial z^{i}}|(m,m,m)$\\

$=\sum_{i=0}^{13}D_{i}\sum_{j=0}^{13}D_{ij}\sum_{k=0}^{13}D_{ijk}{\partial^{i+j+k}f\over \partial x^{k}\partial y^{j}\partial z^{i}}|(m,m,m)$\\

$\leq C\sum_{i,j,k=0}^{13}D_{i}D_{ij}D_{ijk}$\\

$=C\sum_{i,j,k=0}^{13}D_{i}D_{j}D_{k}=D$\\

The proof of the bound for the other regions is similar and left to the reader, as is the two dimensional case. It follows that, using the binomial theorem;\\

$\int_{[-m-{1\over m^{2}},m+{1\over m^{2}}]^{2}\setminus [-m,m]^{2}}|f_{m}|dxdy$\\

 $\leq Darea([-m-{1\over m^{2}},m+{1\over m^{2}}]^{2}\setminus [-m,m]^{2})$\\

 $=4D((m+{1\over m^{2}})^{2}-m^{2})$\\

 $4D(m^{2}+{2m\over m^{2}}+{1\over m^{4}}-m^{2})$\\

 $\leq {E\over m}$\\

 and;\\

 $\int_{[-m-{1\over m^{3}},m+{1\over m^{3}}]^{3}\setminus [-m,m]^{3}}|f_{m}|dxdydz$\\

 $\leq Dvol([-m-{1\over m^{3}},m+{1\over m^{3}}]^{3}\setminus [-m,m]^{3})$\\

 $=8D((m+{1\over m^{3}})^{3}-m^{3})$\\

 $8D(m^{3}+{3m^{2}\over m^{3}}+{3m\over m^{6}}+{1\over m^{9}}-m^{3})$\\

 $\leq {E\over m}$\\

for $m$ sufficiently large, where $E\in\mathcal{R}_{>0}$.\\

\end{proof}

\begin{lemma}
\label{sequences23}
Let $f\in C^{\infty}(\mathcal{R}^{3})$ be quasi split normal, with the Fourier transform $\mathcal{F}$ defined in \cite{dep3}. Let $\{f_{m}:m\in\mathcal{N}\}$ be the inflexionary sequence constructed in Lemma \ref{existence2}. Let $\mathcal{F}$ be the ordinary Fourier transform, defined for each $f_{m}$, then, for any $(k_{01},k_{02},k_{03})$, with $k_{01}\neq 0$, $k_{02}\neq 0$, $k_{03}\neq 0$, the sequence $\{\mathcal{F}(f_{m}):m\in\mathcal{N}\}$ converges pointwise and uniformly to $\mathcal{F}(f)$ on ${\mathcal{R}^{3}\setminus (|k_{1}|<k_{01})\cup (|k_{2}|<k_{02})\cup (|k_{3}|<k_{03})}$. In particularly, $\mathcal{F}(f)\in C({\mathcal{R}^{3}\setminus\{k_{1}=0\cup k_{2}=0\cup k_{3}=0\}})$. A corresponding result holds in dimension $2$.\\

\end{lemma}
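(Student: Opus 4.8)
The plan is to compare the ordinary Fourier transform $\mathcal{F}(f_{m})$ with $\mathcal{F}(f)$ by using that, by Definition \ref{threedimensions}, $f_{m}$ agrees with $f$ on the cube $C_{m}$ and is supported in $C_{m+{1\over m^{3}}}$, so the difference is governed by the thin boundary shell $S_{m}=C_{m+{1\over m^{3}}}\setminus C_{m}$ together with the tail of the Fourier integral of $f$. First I would write
\[
\mathcal{F}(f_{m})(\overline{k})={1\over (2\pi)^{3\over 2}}\int_{C_{m}}f(\overline{x})e^{-i\overline{k}\centerdot\overline{x}}\,d\overline{x}+R_{m}(\overline{k}),\qquad R_{m}(\overline{k})={1\over (2\pi)^{3\over 2}}\int_{S_{m}}f_{m}(\overline{x})e^{-i\overline{k}\centerdot\overline{x}}\,d\overline{x},
\]
and invoke Lemma \ref{squarecube23} to get $|R_{m}(\overline{k})|\leq {E\over (2\pi)^{3\over 2}m}$, which tends to $0$ as $m\to\infty$, uniformly in $\overline{k}\in\mathcal{R}^{3}$; this step needs no restriction on $\overline{k}$ at all.

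The substance is to show that the truncated integral $I_{m}(\overline{k})={1\over (2\pi)^{3\over 2}}\int_{C_{m}}f(\overline{x})e^{-i\overline{k}\centerdot\overline{x}}\,d\overline{x}$ converges to $\mathcal{F}(f)(\overline{k})$ uniformly on $\Omega=\{\overline{k}:|k_{1}|\geq k_{01},\,|k_{2}|\geq k_{02},\,|k_{3}|\geq k_{03}\}$. Here I would use the definition of $\mathcal{F}$ on quasi split normal functions from \cite{dep3} together with relation $(*)$ in the proof of Lemma \ref{integrablefirst}: for $k_{1},k_{2},k_{3}\neq 0$ one has $\mathcal{F}(f)(\overline{k})=\mathcal{F}((\bigtriangledown^{2})^{n}f)(\overline{k})/(-|\overline{k}|^{2})^{n}$, and $(\bigtriangledown^{2})^{n}f$ is of moderate decrease $2n+1$, hence in $L^{1}(\mathcal{R}^{3})$ once $n\geq 2$; fix such an $n$. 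Then I would integrate by parts $2n$ times inside $I_{m}$ — legitimate on the bounded cube for $k_{i}\neq 0$, exactly as in \cite{dep3} — to obtain
\[
I_{m}(\overline{k})={1\over (-|\overline{k}|^{2})^{n}}\cdot{1\over (2\pi)^{3\over 2}}\int_{C_{m}}(\bigtriangledown^{2})^{n}f(\overline{x})e^{-i\overline{k}\centerdot\overline{x}}\,d\overline{x}+B_{m}(\overline{k}),
\]
where $B_{m}(\overline{k})$ collects the boundary contributions on $\partial C_{m}$. Since $(\bigtriangledown^{2})^{n}f\in L^{1}(\mathcal{R}^{3})$, the first term tends to $\mathcal{F}((\bigtriangledown^{2})^{n}f)(\overline{k})/(-|\overline{k}|^{2})^{n}=\mathcal{F}(f)(\overline{k})$, with error at most ${1\over (2\pi)^{3\over 2}|\overline{k}|^{2n}}\int_{\mathcal{R}^{3}\setminus C_{m}}|(\bigtriangledown^{2})^{n}f|\,d\overline{x}$, which on $\Omega$ is bounded by a $\overline{k}$-independent quantity (using $|\overline{k}|^{2n}\geq (k_{01}^{2}+k_{02}^{2}+k_{03}^{2})^{n}$ there) that vanishes as $m\to\infty$.

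The step I expect to be the main obstacle is the control of $B_{m}(\overline{k})$. The iterated integration by parts over a cube produces, in Green's-identity fashion, boundary contributions on the faces and edges of $\partial C_{m}$, each of the form (a constant)$\,\times\,k_{1}^{-a_{1}}k_{2}^{-a_{2}}k_{3}^{-a_{3}}\,\times\,$(an integral over a face or edge of a partial derivative of $f$ of order $\leq 2n-1$). On $\Omega$ the factors $k_{i}^{-a_{i}}$ are bounded by $k_{0i}^{-a_{i}}$, so it suffices that each of these face and edge integrals tends to $0$ as $m\to\infty$, uniformly. This is precisely where the decay estimates for the derivatives of quasi split normal functions from \cite{dep3} must be invoked carefully: moderate decrease $2$ of the first partials alone does not force the first boundary term to vanish (its integral over an $(m+{1\over m^{3}})$-face is only $O(1)$), and one must instead use the stronger decay of the higher-order derivatives supplied by the definition in \cite{dep3}, which makes the face and edge integrals $o(1)$.

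Granting $B_{m}(\overline{k})\to 0$ uniformly on $\Omega$, combining the estimates above gives $\mathcal{F}(f_{m})\to\mathcal{F}(f)$ uniformly on $\Omega$, in particular pointwise there. Each $\mathcal{F}(f_{m})$ is continuous, since $f_{m}\in C_{c}(\mathcal{R}^{3})$ (as already noted in the proof of Lemma \ref{decayofapprox}), so the uniform limit $\mathcal{F}(f)$ is continuous on $\Omega$; as $k_{01},k_{02},k_{03}>0$ are arbitrary and continuity is a local property, $\mathcal{F}(f)\in C(\mathcal{R}^{3}\setminus\{k_{1}=0\cup k_{2}=0\cup k_{3}=0\})$. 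The two dimensional statement would go through by the same argument, with $C_{m+{1\over m^{2}}}$ in place of $C_{m+{1\over m^{3}}}$, Lemma \ref{existence} in place of Lemma \ref{existence2}, the two dimensional case of Lemma \ref{squarecube23}, and $n\geq 1$ (so that $(\bigtriangledown^{2})^{n}f\in L^{1}(\mathcal{R}^{2})$).
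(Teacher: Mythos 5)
Your decomposition is exactly the paper's: writing $\mathcal{F}(f_{m})=I_{m}+R_{m}$ with $I_{m}$ the integral over $C_{m}$ and $R_{m}$ the shell contribution is the same as the paper's split into $\mathcal{F}_{m}(f)$, $\mathcal{F}_{m}(f_{m})-\mathcal{F}_{m}(f)$ (which vanishes since $f_{m}=f$ on $C_{m}$), and $\mathcal{F}(f_{m})-\mathcal{F}_{m}(f_{m})$, and you bound $R_{m}$ by Lemma \ref{squarecube23} just as the paper does. The continuity argument at the end is also essentially the paper's (the paper notes $x_{i}f_{m}$ has compact support so $\mathcal{F}(f_{m})$ is differentiable; you observe $f_{m}\in C_{c}(\mathcal{R}^{3})$, which is even more direct).

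Where you diverge is the treatment of $I_{m}\to\mathcal{F}(f)$ uniformly on $\Omega$. The paper disposes of this in one line by citing a quantitative estimate from \cite{dep3}: $|\mathcal{F}(f)(\overline{k})-\mathcal{F}_{m}(f)(\overline{k})|\leq C_{k_{01},k_{02},k_{03}}/m$, which is exactly the rate needed to combine with the $E/m$ shell bound. You instead try to re-derive this by integrating by parts $2n$ times over the cube $C_{m}$, and you correctly foresee that the resulting boundary terms $B_{m}(\overline{k})$ on $\partial C_{m}$ are the obstruction. But you then write ``Granting $B_{m}(\overline{k})\to 0$ uniformly on $\Omega$,'' and the proposal stops there. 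That is a genuine gap: the Green-type boundary terms arising from iterated integration by parts on a cube are not simply a single face integral but a nested collection of face and edge integrals of mixed partials of $f$ of orders up to $2n-1$, and showing each of these is $o(1)$ uniformly on $\Omega$ requires unwinding the precise decay estimates on all those mixed partials that \cite{dep3} attaches to quasi split normality. You identify the right ingredients but do not close the estimate, whereas the paper's citation of $(B)$ from \cite{dep3} is precisely the closed form of that analysis. If you want a self-contained argument you must actually carry out the boundary-term bookkeeping; otherwise the cleaner route is to cite the truncation estimate, as the paper does.
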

\begin{proof}
For $g\in C_{c}(\mathcal{R}^{3})$ or $g$ quasi split normal, and $m\in\mathcal{N}$, define;\\

$\mathcal{F}_{m}(g)(\overline{k})={1\over (2\pi)^{3\over 2}}\int_{C_{m}}g(\overline{x})e^{-i\overline{k}\centerdot\overline{x}}d\overline{x}$\\

For $\overline{k}\in{\mathcal{R}^{3}\setminus (|k_{1}|<k_{01})\cup (|k_{2}|<k_{02})\cup (|k_{3}|<k_{03})}$, $m\in \mathcal{N}$, $\epsilon>0$, we have, using Lemma \ref{squarecube23};\\

$|\mathcal{F}(f)(\overline{k})-\mathcal{F}(f_{m})(\overline{k})|\leq |\mathcal{F}(f)(\overline{k})-\mathcal{F}_{m}(f)(\overline{k})|+|\mathcal{F}_{m}(f)(\overline{k})-\mathcal{F}_{m}(f_{m})(\overline{k})|$\\

$+|\mathcal{F}_{m}(f_{m})(\overline{k})-\mathcal{F}(f_{m})(\overline{k})|$\\

$=|\mathcal{F}(f)(\overline{k})-\mathcal{F}_{m}(f)(\overline{k})|+|\mathcal{F}_{m}(f_{m})(\overline{k})-\mathcal{F}(f_{m})(\overline{k})|$\\

$\leq |\mathcal{F}(f)(\overline{k})-\mathcal{F}_{m}(f)(\overline{k})|+|\int_{\mathcal{R}^{3}\setminus C_{m}}f_{m}(\overline{x})e^{-i\overline{k}\centerdot\overline{x}}d\overline{x}|$\\

$\leq |\mathcal{F}(f)(\overline{k})-\mathcal{F}_{m}(f)(\overline{k})|+\int_{C_{m+{1\over m^{3}}}\setminus C_{m}}|f_{m}(\overline{x})|d\overline{x}$\\

$\leq |\mathcal{F}(f)(\overline{k})-\mathcal{F}_{m}(f)(\overline{k})|+{E\over m}$ $(BB)$\\

By the result in \cite{dep3}, we have that, for sufficiently large $m$;\\

$|\mathcal{F}(f)(\overline{k})-\mathcal{F}_{m}(f)(\overline{k})|\leq {C_{k_{01},k_{02},k_{03}}\over m}$ $(B)$\\

Combining $(B)$ and $(BB)$, we obtain that;\\

$|\mathcal{F}(f)(\overline{k})-\mathcal{F}(f_{m})(\overline{k})|\leq {C_{k_{01},k_{02},k_{03}}+E\over m}$\\

$\leq \epsilon$\\

for $m\geq {C_{k_{01},k_{02},k_{03}}+E\over\epsilon}$. As $\epsilon>0$ was arbitrary, we obtain the first result. The fact that each $\mathcal{F}(f_{m})$ is continuous, follows from the differentiability $\mathcal{F}(f_{m})$, which is a consequence of the fact that $x_{i}f_{m}(\overline{x}$ has compact support, for $1\leq i\leq 3$. The last result then follows immediately from the fact that $k_{01}\neq 0$, $k_{02}\neq 0$, $k_{03}\neq 0$ were arbitrary and the uniform limit of continuous functions is continuous. The last claim is similar.\\

\end{proof}

\begin{lemma}
\label{inversion3D}
Let $f\in C^{\infty}(\mathcal{R}^{3})$, with ${\partial^{i_{1}+i_{2}+i_{3}}\over \partial x^{i_{1}}\partial y^{i_{2}}\partial z^{i_{3}}}$ bounded for $0\leq i_{1}+i_{2}+i_{3}\leq 40$, $f$ quasi split normal, and of moderate decrease. Then;\\

$f(\overline{x})=\mathcal{F}^{-1}(\mathcal{F}(f))(\overline{x})$, $(\overline{x}\in\mathcal{R}^{3})$\\

where, for $g\in L^{1}(\mathcal{R}^{3})$;\\

$\mathcal{F}^{-1}(g)(\overline{x})={1\over (2\pi)^{3\over 2}}\int_{\mathcal{R}^{3}}g(\overline{k})e^{i\overline{k}\centerdot\overline{x}}d\overline{k}$\\

The same claim holds in dimension $2$.\\
\end{lemma}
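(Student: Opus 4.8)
The plan is to reduce to classical Fourier inversion for the inflexionary approximants $f_m$ of Lemma \ref{existence2} and then pass to the limit $m\to\infty$. First, Lemma \ref{integrablefirst} gives $\mathcal{F}(f)\in L^1(\mathcal{R}^3)$, so $\mathcal{F}^{-1}(\mathcal{F}(f))(\overline{x})=\frac{1}{(2\pi)^{3/2}}\int_{\mathcal{R}^3}\mathcal{F}(f)(\overline{k})e^{i\overline{k}\cdot\overline{x}}\,d\overline{k}$ is well defined and continuous in $\overline{x}$. For each $m$, $f_m$ is continuous with compact support, hence $f_m\in L^1(\mathcal{R}^3)$, and $\mathcal{F}(f_m)\in L^1(\mathcal{R}^3)$ by Lemma \ref{decayofapprox}; since $f_m$ is continuous everywhere, the classical inversion theorem — recovered by inserting a Gaussian multiplier $e^{-t|\overline{k}|^2}$ and letting $t\to 0$, using $\mathcal{F}(f_m)\in L^1$ inside the $\overline{k}$-integral and the boundedness and continuity of $f_m$ for the approximate-identity step — yields $f_m(\overline{x})=\mathcal{F}^{-1}(\mathcal{F}(f_m))(\overline{x})$ for every $\overline{x}\in\mathcal{R}^3$.

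It then suffices to check $f_m(\overline{x})\to f(\overline{x})$ and $\mathcal{F}^{-1}(\mathcal{F}(f_m))(\overline{x})\to\mathcal{F}^{-1}(\mathcal{F}(f))(\overline{x})$. The first is immediate, since $f_m=f$ on $C_m=[-m,m]^3$. For the second I would establish the stronger assertion $\mathcal{F}(f_m)\to\mathcal{F}(f)$ in $L^1(\mathcal{R}^3)$, after which $|\mathcal{F}^{-1}(\mathcal{F}(f_m))(\overline{x})-\mathcal{F}^{-1}(\mathcal{F}(f))(\overline{x})|\le(2\pi)^{-3/2}\|\mathcal{F}(f_m)-\mathcal{F}(f)\|_{L^1}\to 0$ uniformly in $\overline{x}$. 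Write $\mathcal{F}(f_m)-\mathcal{F}(f)=\mathcal{F}(f_m-f)$, where $f_m-f$ is a $C^{13}$ function supported in $\mathcal{R}^3\setminus C_m$ and equal to $-f$ on $\mathcal{R}^3\setminus C_{m+1/m^3}$. On $\{|\overline{k}|\le 1\}$ use $|\mathcal{F}(f_m-f)(\overline{k})|\le(2\pi)^{-3/2}\|f_m-f\|_{L^1}$, which tends to $0$ because $\int_{\mathcal{R}^3\setminus C_m}|f|\to 0$ ($f$ being of moderate decrease, hence in $L^1$) and $\int_{C_{m+1/m^3}\setminus C_m}|f_m|\le E/m$ by Lemma \ref{squarecube23}. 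On $\{|\overline{k}|>1\}$ integrate by parts fourteen times in each coordinate — there are no boundary terms, as $f$ and its low order derivatives decay — exactly as in Lemmas \ref{integrablefirst} and \ref{decayofapprox}, to obtain $|\mathcal{F}(f_m-f)(\overline{k})|\le C|\overline{k}|^{-14}\max_i\|\partial^{14}(f_m-f)/\partial x_i^{14}\|_{L^1}$; this norm tends to $0$ because the integrand vanishes on $C_m$, its contribution on the shell $C_{m+1/m^3}\setminus C_m$ is $O(1/m)$ (by the case analysis of Lemma \ref{existence2}), and $\int_{\mathcal{R}^3\setminus C_{m+1/m^3}}|\partial^{14}f/\partial x_i^{14}|\to 0$, the fourteenth pure derivatives of the quasi split normal $f$ being of moderate decrease and hence in $L^1(\mathcal{R}^3)$. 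Since $|\overline{k}|^{-14}\in L^1(\{|\overline{k}|>1\})$, combining the two estimates gives $\|\mathcal{F}(f_m)-\mathcal{F}(f)\|_{L^1}\to 0$, whence $f(\overline{x})=\mathcal{F}^{-1}(\mathcal{F}(f))(\overline{x})$. As a byproduct, this $L^1$ convergence together with Lemma \ref{sequences23} confirms that the transform of \cite{dep3} coincides with the ordinary Fourier transform on this class, so the formula is unambiguous. The two dimensional case runs identically, using Lemmas \ref{existence}, \ref{squarecube23}, \ref{decayofapprox} in their planar forms, with $m^2$ in place of $m^3$.

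The main obstacle is exactly this $L^1$ convergence $\mathcal{F}(f_m)\to\mathcal{F}(f)$. The bound $|\mathcal{F}(f_m)(\overline{k})|\le Dm^3|\overline{k}|^{-14}$ of Lemma \ref{decayofapprox} is by itself too weak for a tail estimate uniform in $m$, and passing to the difference $f_m-f$ is precisely what removes the obstruction: the bad factor $m^3$ in Lemma \ref{existence2} came solely from $\int_{C_m}|\partial^{14}f/\partial x_i^{14}|$, and on $C_m$ we have $f_m-f\equiv 0$. The one external input is that the high order pure derivatives $\partial^{14}f/\partial x_i^{14}$ of a quasi split normal $f$ are of moderate decrease; should the definition in \cite{dep3} control only $\nabla^2$-type derivatives, one would instead run an $\varepsilon/3$ split at a radius $R=R(m)\to\infty$ chosen so slowly that $\|\mathcal{F}(f_m)-\mathcal{F}(f)\|_\infty\,R^3\to 0$ while $Dm^3R^{-11}\to 0$, which is possible given the rapid decrease of $f$ and its low order derivatives on $\mathcal{R}^3\setminus C_m$.
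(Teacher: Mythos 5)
Your overall strategy — invert for $f_m$ using classical theory and pass to the limit — is the same as the paper's, and your awareness that the factor $m^{3}$ in Lemma \ref{decayofapprox} is the main obstruction to a naive tail estimate is exactly right. However, the route you take to establish the limit has two genuine gaps, both traceable to the same arithmetic error.

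You propose to show $\|\mathcal{F}(f_m)-\mathcal{F}(f)\|_{L^{1}(\mathcal{R}^{3})}\to 0$. At low frequencies you write $|\mathcal{F}(f_m-f)(\overline{k})|\le (2\pi)^{-3/2}\|f_m-f\|_{L^{1}}$ and claim this tends to $0$ because ``$f$ being of moderate decrease, hence in $L^{1}$''. This is false in dimension $3$: by Definition \ref{moderate3D}, moderate decrease means $|f(\overline{x})|\le C|\overline{x}|^{-2}$, and $\int_{|\overline{x}|>R}|\overline{x}|^{-2}\,d\overline{x}=4\pi\int_{R}^{\infty}dr=\infty$, so $f\notin L^{1}(\mathcal{R}^{3})$ and $\|f_m-f\|_{L^{1}}$ is infinite. (The exponent $2$ does give $f\in L^{2}(\mathcal{R}^{3})$, since $\int_{R}^{\infty} r^{2}\cdot r^{-4}\,dr<\infty$; this is precisely what the paper uses.) The same error recurs at high frequency, where you need $\partial^{14}f/\partial x_i^{14}\in L^{1}(\mathcal{R}^{3})$: even if quasi split normality gave these pure derivatives moderate decrease (it is stated only for powers of $\nabla^{2}$, as you yourself flag), decay of order $2$ would still not put them in $L^{1}$ of $\mathcal{R}^{3}$. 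Your fallback — a radius $R(m)$ split controlled by $\|\mathcal{F}(f_m)-\mathcal{F}(f)\|_{\infty}$ — is also not available as stated, because $\mathcal{F}(f)$ is only shown to be continuous away from the coordinate hyperplanes (Lemma \ref{sequences23}) and is merely in $L^{1}(B(\overline{0},1))$ via the $1/k$ factor in Lemma \ref{integrablefirst}; there is no sup bound near $k_i=0$.

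The paper's proof repairs both problems at once by working in $L^{2}$. It uses Plancherel: $\|\mathcal{F}(f)-\mathcal{F}(f_m)\|_{L^{2}}^{2}=\|f-f_m\|_{L^{2}}^{2}\le\int_{\mathcal{R}^{3}\setminus C_m}|f|^{2}+\int_{C_{m+1/m^{3}}\setminus C_m}|f_m|^{2}\le F/m$, exploiting that moderate decrease $2$ puts $f$ in $L^{2}(\mathcal{R}^{3})$ and Lemma \ref{squarecube23}. This $L^{2}$ bound is then converted to an $L^{1}$ bound on the ball $B(\overline{0},n)$ by Cauchy--Schwarz, costing a factor $n^{3/2}$; the tail over $\mathcal{R}^{3}\setminus B(\overline{0},n)$ is handled by the rapid decay of $\mathcal{F}(f)$ and the $Dm^{3}|\overline{k}|^{-14}$ bound of Lemma \ref{decayofapprox}; and the two are balanced by the choice $m=[n^{10/3}]$, so both $Kn^{3/2}m^{-1/2}=Kn^{-1/6}$ and $Dm^{3}n^{-3}\cdot n^{-\text{(something)}}$ from the tail integral go to $0$. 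That coupling of $m$ to $n$ is the essential device you are missing; without it (or without first downgrading to $L^{2}$), the $m^{3}$ in Lemma \ref{decayofapprox} cannot be controlled by a tail estimate that must also shrink the $L^{1}$ norm over a growing ball. If you want to keep your ``subtract $f$'' idea, you would need to prove that the pure $14$th derivatives of a quasi split normal $f$ are of moderate decrease of order strictly greater than $3$, and additionally replace the low-frequency $L^{1}$ estimate by the Cauchy--Schwarz $L^{2}$ argument; at that point you have essentially reconstructed the paper's proof.
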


\begin{proof}
By Lemma \ref{integrablefirst}, we have that $\mathcal{F}(f)\in L^{1}(\mathcal{R}^{3})$. Let $\{f_{m}:m\in\mathcal{N}\}$ be the inflexionary approximating sequence, given by Lemma \ref{existence}, then, for sufficiently large m, $f_{m}\in L^{1}(\mathcal{R}^{3})$ and $\mathcal{F}(f_{m})\in L^{1}(\mathcal{R}^{3})$ by Lemma \ref{decayofapprox}. It follows, see \cite{fol} or the method of \cite{dep2}, that for such $m$, $f_{m}=\mathcal{F}^{-1}(\mathcal{F}(f_{m}))$, $(***)$, By the proof of Lemma \ref{sequences23}, we have that, for $\overline{k}$ with $min(|k_{1}|,|k_{2}|,|k_{3}|)>\epsilon>0$, $|\mathcal{F}(f)(k)-\mathcal{F}(f_{m})(k)|\leq {E_{\epsilon}\over m}$, (B). By the fact that $f$ is of very moderate decrease, we have that $\mathcal{F}(f)-\mathcal{F}(f_{m})\in L^{2}(\mathcal{R}^{3})$, and by the classical theory, and by the proof of Lemma \ref{squarecube23}, we have that;\\

$||\mathcal{F}(f)-\mathcal{F}(f_{m})||^{2}_{L^{2}(\mathcal{R}^{3})}$\\

$=||f-f_{m}||^{2}_{L^{2}(\mathcal{R}^{3})}$\\

$\leq \int_{\mathcal{R}^{3}\setminus C_{m}}|f|^{2}d\overline{x}+\int_{C_{m+{1\over m^{3}}}\setminus C_{m}}|f_{m}|^{2}d\overline{x}$\\

$\leq \int_{\mathcal{R}^{3}\setminus B(\overline{0},m)}|f|^{2}d\overline{x}+{G\over m}$\\

$\leq \int_{\mathcal{R}^{3}\setminus B(\overline{0},m)}{C\over |\overline{x}|^{4}}d\overline{x}+{G\over m}$\\

$\leq 2\pi^{2}\int_{m}^{\infty}{C\over r^{2}}dr+{G\over m}$\\

$\leq {C\over m}+{G\over m}$\\

$\leq {F\over m}$\\

where $\{C,F,G\}\subset\mathcal{R}_{>0}$. It follows that $||\mathcal{F}(f)-\mathcal{F}(f_{m})||_{L^{2}(\mathcal{R}^{3})}\rightarrow 0$ as $m\rightarrow \infty$. In particularly, there exists a constant $H\in\mathcal{R}_{>0}$ with $||\mathcal{F}(f)-\mathcal{F}(f_{m})||_{L^{2}(\mathcal{R}^{3})}\leq H$, for sufficiently large $m$. By the Cauchy Schwarz inequality, we have that, for $m$ sufficiently large;\\

$||\mathcal{F}(f)-\mathcal{F}(f_{m})||_{L^{1}(B(\overline{0},n))}$\\

$\leq ||(\mathcal{F}(f)-\mathcal{F}(f_{m}))|_{B(\overline{0},n)}||_{L^{2}(B(\overline{0},n))}||1_{B(\overline{0},n)}||_{L^{2}(B(\overline{0},n))}$\\

$\leq {\sqrt{F}\over \sqrt{m}}||1_{B(\overline{0},n)}||_{L^{2}(B(\overline{0},n))}$\\

$={2\sqrt{F\pi}n^{3\over 2}\over \sqrt{3m}}$

$={Kn^{3\over 2}\over m^{1\over 2}}$, $(A)$\\

Using the fact from Lemma \ref{integrablefirst}, that $\mathcal{F}(f)\in L^{1}(\mathcal{R})$, and of rapid decrease, for $\delta>0$ arbitrary, we have that;\\

$\int_{{\mathcal{R}^{3}\setminus B(\overline{0},n)}}|\mathcal{F}(f)(\overline{k}|d\overline{k}<\delta$\\

for $n\in\mathcal{N}$, sufficiently large, $n\geq n_{0}$. Choosing $n\in\mathcal{N}$, with $m=[n^{10\over 3}]$, and using $(A)$, Lemma \ref{decayofapprox}, we have, for $\overline{x}\in\mathcal{R}^{3}$, that;\\

$|\mathcal{F}^{-1}(\mathcal{F}(f))(\overline{x})-\mathcal{F}^{-1}(\mathcal{F}(f_{m}))(\overline{x})|=|\mathcal{F}^{-1}(\mathcal{F}(f)(\overline{k})-\mathcal{F}(f_{m})(\overline{k}))|$\\

$={1\over (2\pi)^{3\over 2}}|\int_{B(\overline{0},n)}(\mathcal{F}(f)(\overline{k})-\mathcal{F}(f_{m})(\overline{k}))e^{i\overline{k}\centerdot\overline{x}}d\overline{k}$\\

$+\int_{{\mathcal{R}^{3}\setminus B(\overline{0},n)}}(\mathcal{F}(f)(\overline{k})-\mathcal{F}(f_{m})(\overline{k}))e^{i\overline{k}\centerdot\overline{x}}d\overline{k}|$\\

$\leq {1\over (2\pi)^{3\over 2}}(\int_{B(\overline{0},n)}|\mathcal{F}(f)(\overline{k})-\mathcal{F}(f_{m})(\overline{k})|d\overline{k}$\\

$+\int_{{\mathcal{R}^{3}\setminus B(\overline{0},n)}}|\mathcal{F}(f)(\overline{k})|d\overline{k}+\int_{{\mathcal{R}^{3}\setminus B(\overline{0},n)}}|\mathcal{F}(f_{m})(\overline{k})|d\overline{k})$\\

$\leq {1\over (2\pi)^{3\over 2}}(\int_{B(\overline{0},n)}|\mathcal{F}(f)(\overline{k})-\mathcal{F}(f_{m})(\overline{k})|d\overline{k}+\delta+\int_{{\mathcal{R}^{3}\setminus B(\overline{0},n)}}{Dm^{3}\over |k|^{14}}d\overline{k})$\\

$\leq {1\over (2\pi)^{3\over 2}}({Kn^{3\over 2}\over 3m^{1\over 2}}+\delta+\int_{{\mathcal{R}^{3}\setminus B(\overline{0},n)}}{Dm^{3}\over |k|^{14}}d\overline{k})$\\

$\leq {1\over (2\pi)^{3\over 2}}({Kn^{3\over 2}\over [n^{10\over 3}]^{1\over 2}}+\delta+\int_{{\mathcal{R}^{3}\setminus B(\overline{0},n)}}{Dn^{10}\over |k|^{14}}d\overline{k})$\\

$\leq {1\over (2\pi)^{3\over 2}}({K\over n^{1\over 6}}+\delta+2\pi^{2}\int_{r>n}{Dn^{10}\over r^{14}}dr)$\\

$= {1\over (2\pi)^{3\over 2}}({K\over n^{1\over 6}}+\delta+2D\pi^{2}n^{10}[{-1\over 13r^{13}}]^{\infty}_{n})$\\

$={1\over (2\pi)^{3\over 2}}({K\over n^{1\over 6}}+\delta+{2D\pi^{2}\over 13n^{3}})$\\

$<{2\delta\over (2\pi)^{3\over 2}}$\\

for sufficiently large $n\geq n_{0}$, or $m\geq m_{0}$, so that, as $\epsilon>0$ and $\delta>0$ were arbitrary, for $\overline{x}\in\mathcal{R}^{3}$;\\

$lim_{m\rightarrow\infty}\mathcal{F}^{-1}(\mathcal{F}(f_{m}))(\overline{x})=\mathcal{F}^{-1}\mathcal{F}(f)(\overline{x})$, $(****)$\\

and, by Definition \ref{threedimensions}, $(***)$, $(****)$;\\

$f(\overline{x})=lim_{m\rightarrow\infty}f_{m}(\overline{x})=lim_{m\rightarrow\infty}\mathcal{F}^{-1}(\mathcal{F}(f_{m}))(\overline{x})=\mathcal{F}^{-1}\mathcal{F}(f)(\overline{x})$\\

The proof of the final claim in dimension $2$ is identical.\\

\end{proof}
The following results are not required for the proof of the inversion theorem but are required in \cite{dep2}.\\

\begin{defn}
\label{moderate3D}
We say that $f:\mathcal{R}^{3}\rightarrow\mathcal{R}$ is of very moderate decrease if $|f(\overline{x})|\leq {C\over |\overline{x}|}$ for $|\overline{x}|>C$, $C\in\mathcal{R}_{>0}$. We say that $f:\mathcal{R}^{3}\rightarrow\mathcal{R}$ is of moderate decrease $n$ if $|f(\overline{x})|\leq {C\over |\overline{x}|^{n}}$ for $|\overline{x}|>C$, $C\in\mathcal{R}_{>0}$, $n\geq 2$. We just say that $f$ is of moderate decrease if $f$ is of moderate decrease $2$. We call $\{\theta,\phi\}$ generic if $sin(\theta)cos(\phi)\neq 0$, $sin(\theta)sin(\phi)\neq 0$, $cos(\theta)\neq 0$\\
\end{defn}

\begin{lemma}
\label{differentiableinversion}
Let $f$ be of very moderate decrease and quasi split normal, $f\in C^{41}(\mathcal{R}^{3})$, such that the partial derivatives $\{{\partial f^{i+j+k}\over \partial x^{i}\partial y^{j}\partial z^{k}}:1\leq i+j+k\leq 41\}$ are of moderate decrease, and of moderate decrease $i+j+k+1$, then for $1\leq i\leq 3$;\\

$k_{i}\mathcal{F}(f)(\overline{k})\in C^{1}({\mathcal{R}^{3}\setminus (k_{1}=0\cup k_{2}=0 \cup k_{3}=0)})$\\

$lim_{\overline{k}\rightarrow 0,\overline{k}\notin (k_{1}=0\cup k_{2}=0 \cup k_{3}=0)}k_{i}\mathcal{F}(f)(\overline{k})=0$\\

The same results hold for $k_{i}\mathcal{F}({\partial f\over \partial x_{j}})$, $1\leq i\leq j\leq 3$, when $f\in C^{42}(\mathcal{R}^{3})$.\\

Making a polar coordinate change, for $\{\theta,\phi\}$ generic,  $r\mathcal{F}(f)_{\theta,\phi}(r)\in C^{1}(\mathcal{R}_{>0})$, $lim_{r\rightarrow 0}r\mathcal{F}(f)_{\theta,\phi}(r)=0$, and similarly for $r\mathcal{F}({\partial f\over \partial x_{j}})$, $1\leq j\leq 3$.\\

We have that $\mathcal{F}(f)(\overline{k})\in L^{1}(\mathcal{R}^{3})$, $\{{\mathcal{F}({\partial f\over \partial x_{j}})(\overline{k})\over |\overline{k}|}:1\leq j\leq 3\}\subset L^{1}(\mathcal{R}^{3})$\\

and $\{{\mathcal{F}({\partial^{2} f\over \partial x_{i}\partial x_{j}})(\overline{k})\over |\overline{k}|^{2}}:1\leq i,j\leq 3\}\subset L^{1}(\mathcal{R}^{3})$\\

For any given $\epsilon>0$, there exists $\delta>0$, for $1\leq j\leq 3$, such that for a generic translation $\overline{l}$ with $l_{1}\neq 0$, $l_{2}\neq 0$, $l_{3}\neq 0$;\\

$max(|\int_{0}^{\delta}r\mathcal{F}_{\theta,\phi,\overline{l}}({\partial f\over \partial x_{j}})(r)dr|,|\int_{0}^{\delta}{d\over dr}(r\mathcal{F}_{\theta,\phi,\overline{l}}({\partial f\over \partial x_{j}})(r))dr|)<\epsilon$\\

uniformly in $\{\theta,\phi\}$.\\

\end{lemma}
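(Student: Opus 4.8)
The plan is to reduce every clause to the two transform identities available for smooth quasi split normal functions — the one–variable integration by parts relation $\mathcal{F}({\partial g\over \partial x_{j}})(\overline{k})=ik_{j}\mathcal{F}(g)(\overline{k})$, valid for $\overline{k}$ off the coordinate planes (equation $(AA)$ in the proof of Lemma \ref{integrablefirst}), and the Laplacian relation $\mathcal{F}((\bigtriangledown^{2})^{n}g)(\overline{k})=(-|\overline{k}|^{2})^{n}\mathcal{F}(g)(\overline{k})$ (equation $(*)$) — together with differentiation under the integral sign, legitimate once the relevant polynomial multiple of the integrand lies in $L^{1}(\mathcal{R}^{3})$. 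For the regularity off the coordinate planes: for such $\overline{k}$ one has $k_{i}\mathcal{F}(f)(\overline{k})=-i\mathcal{F}({\partial f\over \partial x_{i}})(\overline{k})$, and since ${\partial f\over \partial x_{i}}$ is again smooth and quasi split normal, the Laplacian relation with $n=2$ gives $k_{i}\mathcal{F}(f)(\overline{k})=-i|\overline{k}|^{-4}\mathcal{F}\big((\bigtriangledown^{2})^{2}{\partial f\over \partial x_{i}}\big)(\overline{k})$. The function $(\bigtriangledown^{2})^{2}{\partial f\over \partial x_{i}}={\partial\over\partial x_{i}}(\bigtriangledown^{2})^{2}f$ is a partial derivative of order $5$, hence of moderate decrease $6$ by hypothesis, so $x_{\ell}x_{m}(\bigtriangledown^{2})^{2}{\partial f\over \partial x_{i}}\in L^{1}(\mathcal{R}^{3})$ for all $\ell,m$; differentiating twice under the integral sign shows $\mathcal{F}\big((\bigtriangledown^{2})^{2}{\partial f\over \partial x_{i}}\big)\in C^{2}(\mathcal{R}^{3})$, and division by the smooth nowhere–vanishing factor $|\overline{k}|^{4}$ gives $k_{i}\mathcal{F}(f)\in C^{2}$, a fortiori $C^{1}$, off the coordinate planes; the same computation gives $\mathcal{F}(f)=-ik_{i}^{-1}\mathcal{F}({\partial f\over \partial x_{i}})\in C^{2}$ there, so composing with the smooth curve $r\mapsto r\omega(\theta,\phi)$, which for generic $\{\theta,\phi\}$ maps $\mathcal{R}_{>0}$ into the complement of the planes, gives $r\mathcal{F}(f)_{\theta,\phi}(r)\in C^{1}(\mathcal{R}_{>0})$. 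Replacing $f$ by ${\partial f\over \partial x_{j}}$ gives the statements for $k_{i}\mathcal{F}({\partial f\over \partial x_{j}})$ and $r\mathcal{F}({\partial f\over \partial x_{j}})_{\theta,\phi}$, at the cost of one more derivative, which is why $f\in C^{42}(\mathcal{R}^{3})$ is assumed there.

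For the $L^{1}$ assertions: $\mathcal{F}(f)\in L^{1}(\mathcal{R}^{3})$ is Lemma \ref{integrablefirst}. Split $\mathcal{R}^{3}$ into $B(\overline{0},1)$ and its complement. On the complement, ${\partial f\over \partial x_{j}}$ and ${\partial^{2}f\over \partial x_{i}\partial x_{j}}$ are smooth and quasi split normal, so their transforms are of rapid decay by Lemma \ref{integrablefirst}, whence ${\mathcal{F}({\partial f\over \partial x_{j}})\over|\overline{k}|}$ and ${\mathcal{F}({\partial^{2}f\over \partial x_{i}\partial x_{j}})\over|\overline{k}|^{2}}$ are integrable there. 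On $B(\overline{0},1)$ the off–plane identities give $|\mathcal{F}({\partial f\over \partial x_{j}})(\overline{k})|/|\overline{k}|=(|k_{j}|/|\overline{k}|)\,|\mathcal{F}(f)(\overline{k})|\le|\mathcal{F}(f)(\overline{k})|$ and $|\mathcal{F}({\partial^{2}f\over \partial x_{i}\partial x_{j}})(\overline{k})|/|\overline{k}|^{2}=(|k_{i}k_{j}|/|\overline{k}|^{2})\,|\mathcal{F}(f)(\overline{k})|\le|\mathcal{F}(f)(\overline{k})|$, both in $L^{1}(B(\overline{0},1))$ by Lemma \ref{integrablefirst}. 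For the final estimate, fix a generic translation $\overline{l}$ and $\delta_{0}<{1\over2}\min_{j}|l_{j}|$, so that $\overline{B(\overline{l},\delta_{0})}$ misses the coordinate planes; on it $\mathcal{F}({\partial f\over \partial x_{j}})$ and its gradient are continuous, hence bounded by some $M$, so for $0<\delta\le\delta_{0}$ and every unit vector $\omega=\omega(\theta,\phi)$ one gets $\big|\int_{0}^{\delta}r\,\mathcal{F}_{\theta,\phi,\overline{l}}({\partial f\over \partial x_{j}})(r)\,dr\big|\le M\delta^{2}/2$ and, by the fundamental theorem of calculus, $\big|\int_{0}^{\delta}{d\over dr}\big(r\,\mathcal{F}_{\theta,\phi,\overline{l}}({\partial f\over \partial x_{j}})(r)\big)\,dr\big|=\delta\,\big|\mathcal{F}_{\theta,\phi,\overline{l}}({\partial f\over \partial x_{j}})(\delta)\big|\le M\delta$; taking $\delta=\min(\delta_{0},\epsilon/(M+1))$ makes both $<\epsilon$, uniformly in $\{\theta,\phi\}$.

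It remains to show $\lim_{\overline{k}\to 0}k_{i}\mathcal{F}(f)(\overline{k})=0$, i.e. $\mathcal{F}({\partial f\over \partial x_{i}})(\overline{k})\to 0$ as $\overline{k}\to 0$ within the complement of the coordinate planes (the polar statement $\lim_{r\to 0}r\mathcal{F}(f)_{\theta,\phi}(r)=-i\omega_{i}^{-1}\lim_{r\to 0}\mathcal{F}({\partial f\over \partial x_{i}})(r\omega)$ being the special case of a fixed generic ray). I would write $\mathcal{F}({\partial f\over \partial x_{i}})(\overline{k})=(-|\overline{k}|^{2})^{-N}\mathcal{F}\big((\bigtriangledown^{2})^{N}{\partial f\over \partial x_{i}}\big)(\overline{k})$ for large $N$, observe that $(\bigtriangledown^{2})^{N}{\partial f\over \partial x_{i}}$ together with its polynomial multiples of admissible degree lies in $L^{1}(\mathcal{R}^{3})$ by the moderate–decrease hypotheses — this is where the order–$41$ smoothness and the sharpened decay order $i_{1}+i_{2}+i_{3}+1$ of each partial enter — so that $\mathcal{F}\big((\bigtriangledown^{2})^{N}{\partial f\over \partial x_{i}}\big)$ is $C^{2N-2}$ with every Taylor coefficient at $\overline{0}$ up to order $2N-2$ vanishing, each being, after integration by parts, an integral of $(\bigtriangledown^{2})^{N}$ applied to a polynomial of degree $<2N$. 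The naive count then yields only $\mathcal{F}\big((\bigtriangledown^{2})^{N}{\partial f\over \partial x_{i}}\big)(\overline{k})=o(|\overline{k}|^{2N-2})$, which is two orders short of what is needed after division by $|\overline{k}|^{2N}$. Closing this two–order gap is the main obstacle: one must extract the extra cancellation from quasi split normality itself, equivalently from the construction of $\mathcal{F}$ in \cite{dep3} (the transform being obtained as the controlled limit of the truncations $\mathcal{F}_{m}$ of Lemma \ref{sequences23}, whose origin behaviour is explicit), rather than from smoothness alone. The derivative statements then follow by repeating the argument with ${\partial f\over \partial x_{j}}$ in place of $f$, using $f\in C^{42}(\mathcal{R}^{3})$.
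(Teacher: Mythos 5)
Your proposal is sound and cleaner than the paper's for several clauses, but it contains one genuine gap that you yourself flag, and it is worth noting that the paper closes that gap with an idea quite different from the one you are reaching for.

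For the $C^{1}$ regularity off the coordinate planes your route through the Laplacian identity and differentiation under the integral sign — writing $k_{i}\mathcal{F}(f)=-i|\overline{k}|^{-4}\mathcal{F}\big((\bigtriangledown^{2})^{2}{\partial f\over \partial x_{i}}\big)$, observing that the fifth–order partial has moderate decrease $6$ so its quadratic polynomial multiples are $L^{1}$, and concluding $C^{2}$ away from the origin — is genuinely different from the paper's. The paper instead builds an inflexionary approximating sequence $g_{m}$ for $x\,{\partial f\over \partial x}$, shows the difference quotients ${\partial\mathcal{F}(f_{m})\over \partial k_{1}}$ are uniformly Cauchy on $\{|k_{i}|\geq\epsilon_{i}\}$ via the $L^{1}$ smallness of $g_{m}-g_{n}$ on the annular region and on $[-m,m]^{3}\setminus[-n,n]^{3}$, and passes to the limit. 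Your approach is shorter and avoids the machinery of Lemmas \ref{existence2} and \ref{squarecube23}; provided you are confident that the \cite{dep3} transform agrees with the classical one on the $L^{1}$ function $(\bigtriangledown^{2})^{2}{\partial f\over\partial x_{i}}$ (which the decay hypothesis certainly supports), this is a legitimate simplification. Likewise, for the $L^{1}$ membership of ${\mathcal{F}({\partial f\over \partial x_{j}})/|\overline{k}|}$ your pointwise bound $|k_{j}|/|\overline{k}|\leq 1$ combined with $\mathcal{F}(f)\in L^{1}$ on the unit ball and rapid decay outside is cleaner than the paper's detour through Hausdorff–Young and Cauchy–Schwarz, and your treatment of the final $\delta$–uniform estimate matches the paper's.

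The gap is exactly where you put it: the claim $\lim_{\overline{k}\to 0}k_{i}\mathcal{F}(f)(\overline{k})=0$. Your Taylor–coefficient count on $\mathcal{F}\big((\bigtriangledown^{2})^{N}{\partial f\over\partial x_{i}}\big)$ is indeed two orders short of what you need after dividing by $|\overline{k}|^{2N}$, and no amount of raising $N$ closes that gap, because the deficit is structural rather than a matter of insufficient smoothness. The paper's argument is of an entirely different character and much more elementary. It uses the fact that $\mathcal{F}$ is defined as an iterated limit $\lim_{r_{1}}\lim_{r_{2}}\lim_{r_{3}}\int\int\int{\partial f\over\partial x}\,e^{-i\overline{k}\cdot\overline{x}}$: pushing $\lim_{k_{1}\to 0}$ through the inner integral by dominated convergence (legitimate because ${\partial f\over\partial x}(\cdot,x_{2},x_{3})$ is integrable in $x_{1}$ for fixed $(x_{2},x_{3})$ by moderate decrease), one reduces to
\[
\int_{-\infty}^{\infty}{\partial f\over\partial x}(x_{1},x_{2},x_{3})\,dx_{1}=\big[f\big]_{x_{1}=-\infty}^{x_{1}=+\infty}=0,
\]
since $f$ is of very moderate decrease and hence vanishes at infinity; the outer $(k_{2},k_{3})$ limits then contribute nothing. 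So the cancellation you suspected must be ``extracted from quasi split normality itself'' is exactly the fundamental theorem of calculus applied to the innermost integral, not a higher–order Taylor miracle. Your polar limit $\lim_{r\to 0}r\mathcal{F}(f)_{\theta,\phi}(r)=0$ also depends on this step (you reduce it to the Cartesian limit along a generic ray), so without the FTC argument that clause is open too. Inserting the paper's direct computation would complete your proof; the rest is sound.
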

\begin{proof}
As ${\partial f\over \partial x}$ is of moderate decrease and quasi split normal, for fixed $y,z$, $f_{y,z}$ is of very moderate decrease and analytic at infinity, we have for $k_{1}\neq 0$, $k_{2}\neq 0$, $k_{3}\neq 0$;\\

$\mathcal{F}({\partial f\over \partial x})={1\over (2\pi)^{3\over 2}}lim_{r_{1}\rightarrow\infty}lim_{r_{2}\rightarrow \infty}lim_{r_{3}\rightarrow\infty}\int_{-r_{1}}^{r_{1}}\int_{-r_{2}}^{r_{2}}\int_{-r_{3}}^{r_{3}}{\partial f\over \partial x}(\overline{x})e^{-i\overline{k}\centerdot \overline{x}}dx_{1}dx_{2}dx_{3}$\\

$={1\over (2\pi)^{3\over 2}}lim_{r_{2}\rightarrow\infty}lim_{r_{3}\rightarrow \infty}\int_{-r_{2}}^{r_{2}}\int_{-r_{3}}^{r_{3}}(lim_{r_{1}\rightarrow\infty}\int_{-r_{1}}^{r_{1}}{\partial f\over \partial x}(\overline{x})e^{-ik_{1}x_{1}}dx_{1})e^{-i(k_{2}x_{2}+k_{3}x_{3})}dx_{2}dx_{3}$\\

$={1\over (2\pi)^{3\over 2}}lim_{r_{2}\rightarrow\infty}lim_{r_{3}\rightarrow \infty}\int_{-r_{2}}^{r_{2}}\int_{-r_{3}}^{r_{3}}(lim_{r_{1}\rightarrow\infty}([fe^{-ikx_{1}}]_{-r_{1}}^{r_{1}}+ik_{1}\int_{-r_{1}}^{r_{1}}f(\overline{x})e^{-ikx_{1}}dx_{1})$\\

$e^{-i(k_{2}x_{2}+k_{3}x_{3})}dx_{2}dx_{3}$\\

$=ik_{1}{1\over (2\pi)^{3\over 2}}lim_{r_{2}\rightarrow\infty}lim_{r_{3}\rightarrow \infty}\int_{-r_{2}}^{r_{2}}\int_{-r_{3}}^{r_{3}}(lim_{r_{1}\rightarrow\infty}\int_{-r_{1}}^{r_{1}}f(\overline{x})e^{-ikx_{1}}dx_{1})e^{-i(k_{2}x_{2}+k_{3}x_{3})}dx_{2}dx_{3}$\\

$=ik_{1}{1\over (2\pi)^{3\over 2}}lim_{r_{1}\rightarrow\infty}lim_{r_{2}\rightarrow \infty}lim_{r_{3}\rightarrow\infty}\int_{-r_{1}}^{r_{1}}\int_{-r_{2}}^{r_{2}}\int_{-r_{3}}^{r_{3}}f(\overline{x})e^{-i\overline{k}\centerdot \overline{x}}dx_{1}dx_{2}dx_{3}$\\

$=ik_{1}\mathcal{F}(f)(\overline{k})$ $(TT)$\\

the limit interchange being justified by the calculation in \cite{dep3}. It follows that, for $k_{1}\neq 0$, $k_{2}\neq 0$, $k_{3}\neq 0$, we have that;\\

$k_{1}\mathcal{F}(f)(\overline{k})=-i\mathcal{F}({\partial f\over \partial x})$\\

and similarly;\\

$k_{i}\mathcal{F}(f)(\overline{k})=-i\mathcal{F}({\partial f\over \partial x_{i}})$ $(A)$, for $1\leq i\leq 3$ and $k_{1}\neq 0$, $k_{2}\neq 0$, $k_{3}\neq 0$.\\

It follows that, using the fact that;\\

$F(x_{1},k_{2},k_{3})=lim_{r_{2}\rightarrow\infty}lim_{r_{3}\rightarrow\infty}\int_{-r_{2}}^{r_{2}}\int_{-r_{3}}^{r_{3}}{\partial f\over \partial x}(x_{1},x_{2},x_{3})e^{-ik_{2}x_{2}}e^{-ik_{3}x_{3}}dx_{2}dx_{3}$\\

is of moderate decrease, the DCT and the FTC, and the fact that $f_{y,z}$ is of very moderate decrease;\\

$lim_{\overline{k}\rightarrow 0,\overline{k}\notin (k_{1}=0\cup k_{2}=0 \cup k_{3}=0)}k_{1}\mathcal{F}(f)(\overline{k})$\\

$-ilim_{\overline{k}\rightarrow 0,\overline{k}\notin (k_{1}=0\cup k_{2}=0 \cup k_{3}=0)}\mathcal{F}(f)({\partial f\over \partial x})(\overline{k})$\\

$={-i\over (2\pi)^{3\over 2}}lim_{\overline{k}\rightarrow 0,\overline{k}\notin (k_{1}=0\cup k_{2}=0 \cup k_{3}=0)}lim_{r_{1}\rightarrow\infty}lim_{r_{2}\rightarrow \infty}lim_{r_{3}\rightarrow\infty}\int_{-r_{1}}^{r_{1}}\int_{-r_{2}}^{r_{2}}\int_{-r_{3}}^{r_{3}}{\partial f\over \partial x}(\overline{x})e^{-i\overline{k}\centerdot \overline{x}}dx_{1}dx_{2}dx_{3}$\\

$={1\over (2\pi)^{3\over 2}}lim_{k_{2}\rightarrow 0,k_{3}\rightarrow 0,k_{2}\neq 0,k_{3}\neq 0}lim_{r_{2}\rightarrow\infty}lim_{r_{3}\rightarrow \infty}\int_{-r_{2}}^{r_{2}}\int_{-r_{3}}^{r_{3}}(lim_{k_{1}\rightarrow 0}\int_{-\infty}^{\infty}{\partial f\over \partial x}(\overline{x})e^{-ik_{1}x_{1}}dx_{1})$\\

$e^{-i(k_{2}x_{2}+k_{3}x_{3})}dx_{2}dx_{3}$\\

$={1\over (2\pi)^{3\over 2}}lim_{k_{2}\rightarrow 0,k_{3}\rightarrow 0,k_{2}\neq 0,k_{3}\neq 0}lim_{r_{2}\rightarrow\infty}lim_{r_{3}\rightarrow \infty}\int_{-r_{2}}^{r_{2}}\int_{-r_{3}}^{r_{3}}(\int_{-\infty}^{\infty}{\partial f\over \partial x}(\overline{x})dx_{1})e^{-i(k_{2}x_{2}+k_{3}x_{3})}dx_{2}dx_{3}$\\

$={1\over (2\pi)^{3\over 2}}lim_{k_{2}\rightarrow 0,k_{3}\rightarrow 0,k_{2}\neq 0,k_{3}\neq 0}lim_{r_{2}\rightarrow\infty}lim_{r_{3}\rightarrow \infty}\int_{-r_{2}}^{r_{2}}\int_{-r_{3}}^{r_{3}}([f]_{-\infty}^{\infty})e^{-i(k_{2}x_{2}+k_{3}x_{3})}dx_{2}dx_{3}$\\

$=0$ $(E)$\\

Similarly;\\

$lim_{\overline{k}\rightarrow 0,\overline{k}\notin (k_{1}=0\cup k_{2}=0 \cup k_{3}=0)}k_{i}\mathcal{F}(f)(\overline{k})=0$, $1\leq i\leq 3$\\

As $f\in C^{41}(\mathcal{R}^{3})$, we have, by the product rule, that $x_{i}{\partial f\over \partial x_{j}}\in C^{40}(\mathcal{R}^{3})$, $1\leq i\leq j\leq 3$. As $f$ is of very moderate decrease and;\\

$\{{\partial f^{l+m+n}\over \partial x_{1}^{l}\partial x_{2}^{m}\partial x_{3}^{m}}:1\leq l+m+n\leq 40\}$\\

are of very moderate decrease, we have, by repeated application of the product rule again, that;\\

$\{{\partial^{l+m+n}x_{i}{\partial f\over \partial x_{j}}\over \partial x_{1}^{l}\partial x_{2}^{m}\partial x_{3}^{n}}:0\leq l+m+n\leq 40\}$, $1\leq i\leq j\leq 3$\\

are bounded. By Lemma \ref{squarecube}, there exists an inflexionary approximation sequence $g_{m}$ for $x{\partial f\over \partial x}$ with the properties that;\\

$(i)$ $g_{m}\in C^{13,13,14}(\mathcal{R}^{3})$\\

$(ii)$. $g_{m}|_{[-m,m]^{3}}=x{\partial f\over \partial x}|_{[-m,m]^{3}}$\\

$(iii)$. $\int_{{[-m-{1\over m^{3}},m+{1\over m^{3}}]^{3}\setminus [-m,m]^{3}}}|g_{m}(\overline{x})d\overline{x}\leq {E\over m}$\\

$(iv)$. $g_{m}|_{{\mathcal{R}^{3}\setminus [-m-{1\over m^{3}},m+{1\over m^{3}}]^{3}}}=0$\\

By the construction of $g_{m}$, we have that $f_{m}={g_{m}\over x}$ is an approximation sequence for ${\partial f\over \partial x}$, with the property that;\\

$(i)'$ $f_{m}\in C^{13,13,14}(\mathcal{R}^{3})$\\

$(ii)'$. $f_{m}|_{[-m,m]^{3}}={\partial f\over \partial x}|_{[-m,m]^{3}}$\\

$(iii)'$. $\int_{{[-m-{1\over m^{3}},m+{1\over m^{3}}]^{3}\setminus [-m,m]^{3}}}|f_{m}(\overline{x})d\overline{x}\leq {E'\over m}$\\

$(iv)'$. $f_{m}|_{{\mathcal{R}^{3}\setminus [-m-{1\over m^{3}},m+{1\over m^{3}}]^{3}}}=0$\\

Following through the proof of Lemma \ref{sequences23}, as ${\partial f\over \partial x}$ is quasi split normal of moderate decrease and, therefore, of very moderate decrease, we have that $\mathcal{F}(f_{m})$ converges uniformly to $\mathcal{F}({\partial f\over \partial x})$ on compact subsets of ${\mathcal{R}^{3}\setminus (k_{1}=0\cup k_{2}=0\cup k_{3}=0)}$, so that $\mathcal{F}({\partial f\over \partial x})\in C({\mathcal{R}^{3}\setminus (k_{1}=0\cup k_{2}=0\cup k_{3}=0)})$, As $x_{i}x_{j}f_{m}\in L^{1}(\mathcal{R}^{3})$, for $1\leq i\leq j\leq 3$, we have that $\mathcal{F}(f_{m})$ is twice differentiable, in particularly, $\mathcal{F}(f_{m})\in C^{1}(\mathcal{R}^{3})$. As $f$ is quasi split normal, so is ${\partial f\over \partial x}$ and $x{\partial f\over \partial x}$. It follows that for $\{m,n\}\subset\mathcal{N}$, with $m\geq n$, differentiating under the integral sign, using the DCT, property $(iii)$ of an inflexionary approximating sequence, and the fact that $x{\partial f\over \partial x}$ is of very moderate decrease and quasi split normal, for $|k_{1}|\geq \epsilon_{1}>0$, $|k_{2}|\geq \epsilon_{2}>0$, $|k_{3}|\geq \epsilon_{3}>0$, we have that;\\

$|{\partial \mathcal{F}(f_{m})\over \partial k_{1}}-{\partial \mathcal{F}(f_{n})\over \partial k_{1}}|$\\

$={1\over (2\pi)^{3\over 2}}|{\partial \over \partial k_{1}}(\int_{\mathcal{R}^{3}}f_{m}(\overline{x})e^{-i\overline{k}\centerdot \overline{x}}d\overline{x}-{\partial \over \partial k_{1}}\int_{\mathcal{R}^{3}}f_{n}(\overline{x})e^{-i\overline{k}\centerdot \overline{x}}d\overline{x}|$\\

$={1\over (2\pi)^{3\over 2}}|\int_{\mathcal{R}^{3}}-ix_{1}f_{m}(\overline{x})e^{-i\overline{k}\centerdot \overline{x}}d\overline{x}-\int_{\mathcal{R}^{3}}-ix_{1}f_{n}(\overline{x})e^{-i\overline{k}\centerdot \overline{x}}d\overline{x}|$\\

$={1\over (2\pi)^{3\over 2}}|\int_{\mathcal{R}^{3}}(g_{m}-g_{n})(\overline{x})e^{-i\overline{k}\centerdot \overline{x}}d\overline{x}|$\\

$\leq {1\over (2\pi)^{3\over 2}}(\int_{{[-m-{1\over m^{3}},m+{1\over m^{3}}]^{3}\setminus [-m,m]^{3}}}|g_{m}(\overline{x})|d\overline{x}+\int_{{[-m-{1\over m^{3}},m+{1\over m^{3}}]^{3}\setminus [-m,m]^{3}}}|g_{n}(\overline{x})|d\overline{x}$\\

$+|\int_{{[-m,m]^{3}\setminus [-n,n]^{3}}}x_{1}{\partial f\over \partial x_{1}}e^{-i\overline{k}\centerdot\overline{x}}d\overline{x}|)$\\

$\leq {E\over m}+{E\over n}+{C(\overline{k})\over n}$ $(*)$\\

where $C(\overline{k})$ is uniformly bounded on the region $|k_{1}|\geq \epsilon_{1}>0$, $|k_{2}|\geq \epsilon_{2}>0$, $|k_{3}|\geq \epsilon_{3}>0$. It follows that the sequence $\{{\partial \mathcal{F}(f_{m})\over \partial k_{1}}:m\in\mathcal{N}\}$ is uniformly Cauchy on the region $|k_{1}|\geq \epsilon_{1}>0$, $|k_{2}|\geq \epsilon_{2}>0$, $|k_{3}|\geq \epsilon_{3}>0$, and converges uniformly. By considering inflexionary sequences for $y{\partial f\over \partial x}$ and $z{\partial f\over \partial x}$, we can similarly show that the sequences $\{{\partial \mathcal{F}(f_{m})\over \partial k_{2}}:m\in\mathcal{N}\}$ and $\{{\partial \mathcal{F}(f_{m})\over \partial k_{3}}:m\in\mathcal{N}\}$ are uniformly Cauchy on the region $|k_{1}|\geq \epsilon_{1}>0$, $|k_{2}|\geq \epsilon_{2}>0$, $|k_{3}|\geq \epsilon_{3}>0$, and converge uniformly. As $\mathcal{F}(f_{m})$ converges uniformly to $\mathcal{F}({\partial f\over \partial x})$ on the regions $|k_{1}|\geq \epsilon_{1}>0$, $|k_{2}|\geq \epsilon_{2}>0$, $|k_{3}|\geq \epsilon_{3}>0$, it follows that $\mathcal{F}({\partial f\over \partial x})\in C^{1}({\mathcal{R}^{3}\setminus (k_{1}=0\cup k_{2}=0\cup k_{3}=0)})$. The same result folds for $\mathcal{F}({\partial f\over \partial y})$ and $\mathcal{F}({\partial f\over \partial z})$, so by $(A)$;\\

 $\{k_{1}\mathcal{F}(f)(\overline{k}),k_{2}\mathcal{F}(f)(\overline{k}),k_{3}\mathcal{F}(f)(\overline{k})\}\subset C^{1}({\mathcal{R}^{3}\setminus (k_{1}=0\cup k_{2}=0\cup k_{3}=0)})$\\

 $(B)$\\

It follows that, changing to polars;\\

${\partial r\mathcal{F}(f)(\overline{k})\over \partial r}=({\partial\over \partial k_{1}}{k_{1}\over r}+{\partial\over \partial k_{2}}{k_{2}\over r}+{\partial\over \partial k_{3}}{k_{3}\over r})(r\mathcal{F}(f)(\overline{k}))$\\

$={\partial k_{1}\mathcal{F}(f)(\overline{k})\over \partial k_{1}}+{\partial k_{2}\mathcal{F}(f)(\overline{k})\over \partial k_{2}}+{\partial k_{3}\mathcal{F}(f)(\overline{k})\over \partial k_{3}}$ $(WW)$\\

so that, for generic $\{\theta,\phi\}$, $r\mathcal{F}(f)(r)_{\theta,\phi}\in C^{1}(\mathcal{R}_{>0})$, by $(B)$. Moreover;\\

$lim_{r\rightarrow 0}r\mathcal{F}(f)(r)_{\theta,\phi}$.\\

$=lim_{\overline{k}(\theta,\phi)\rightarrow 0}{r\over k_{1}}lim_{\overline{k}(\theta,\phi)\rightarrow \overline{0},k_{1}\neq 0,k_{2}\neq 0,k_{3}\neq 0}k_{1}\mathcal{F}(f)(\overline{k})$\\

$=lim_{\overline{k}(\theta,\phi)\rightarrow 0}{r\over k_{2}}lim_{\overline{k}(\theta,\phi)\rightarrow \overline{0},k_{1}\neq 0,k_{2}\neq 0,k_{3}\neq 0}k_{2}\mathcal{F}(f)(\overline{k})$\\

$=lim_{\overline{k}(\theta,\phi)\rightarrow 0}{r\over k_{3}}lim_{\overline{k}(\theta,\phi)\rightarrow \overline{0},k_{1}\neq 0,k_{2}\neq 0,k_{3}\neq 0}k_{3}\mathcal{F}(f)(\overline{k})$\\

$=lim_{\overline{k}(\theta,\phi)\rightarrow 0} sign(k_{1})(1+{k_{2}^{2}\over k_{1}^{2}}+{k_{3}^{2}\over k_{1}^{2}})lim_{\overline{k}(\theta,\phi)\rightarrow \overline{0},k_{1}\neq 0,k_{2}\neq 0,k_{3}\neq 0}k_{1}\mathcal{F}(f)(\overline{k})$\\

$=lim_{\overline{k}(\theta,\phi)\rightarrow 0} sign(k_{2})(1+{k_{1}^{2}\over k_{2}^{2}}+{k_{3}^{2}\over k_{2}^{2}})lim_{\overline{k}(\theta,\phi)\rightarrow \overline{0},k_{1}\neq 0,k_{2}\neq 0,k_{3}\neq 0}k_{2}\mathcal{F}(f)(\overline{k})$\\

$=lim_{\overline{k}(\theta,\phi)\rightarrow 0} sign(k_{3})(1+{k_{1}^{2}\over k_{3}^{2}}+{k_{2}^{2}\over k_{3}^{2}})lim_{\overline{k}(\theta,\phi)\rightarrow \overline{0},k_{1}\neq 0,k_{2}\neq 0,k_{3}\neq 0}k_{3}\mathcal{F}(f)(\overline{k})$\\

$=0$\\

as the cases $max(|k_{2}|,|k_{3}|)\leq |k_{1}|$, $max(|k_{1}|,|k_{3}|)\leq |k_{2}|$ and $max(|k_{1}|,|k_{2}|)\leq |k_{3}|$ are exhaustive.\\

Clearly, we can repeat the above arguments for ${\partial f\over \partial x_{i}}$, $1\leq i\leq 3$, and $f\in C^{42}(\mathcal{R}^{3})$, using the fact that ${\partial f\over \partial x_{i}}$ is of moderate decrease, in particularly of very moderate decrease, with the higher derivatives ${\partial^{l+m+n}{\partial f\over \partial x_{i}}\over \partial x^{l}y^{m}z^{n}}$ of moderate decrease $l+m+n+2$, in particularly of moderate decrease $l+m+n+1$.\\

For the next claim, we have, $\mathcal{F}(f)\in L^{1}(\mathcal{R}^{3})$, $(R)$, by Lemma \ref{integrablefirst}. A similar calculation shows that, as ${\partial f\over \partial x}$ is of moderate decrease $2$, that $f\in L^{{3\over 2}+\epsilon}(\mathcal{R}^{3})$, for $\epsilon>0$. Applying the Haussdorff-Young inequality, $\mathcal{F}({\partial f\over \partial x})\in L^{3-\delta}(\mathcal{R}^{3})$, for $\delta>0$. In particular, due to the decay again, $\mathcal{F}({\partial f\over \partial x})\in L^{2}(\mathcal{R}^{3})$. Locally, on $B(\overline{0},1)$, for $\delta>0$;\\

$\int_{B(\overline{0},1)}{1\over |\overline{k}|^{3-\delta}}d\overline{k}$\\

$=\int_{0\leq \theta\leq \pi,-\pi\leq \phi\leq \phi}\int_{0}^{1}{r^{2}\over r^{3-\delta}}dr d\theta d\phi$\\

$\leq 2\pi^{2}[r^{\delta}]^{1}_{0}$\\

$=2\pi^{2}<\infty$\\

so that ${1\over |\overline{k}|}\in L^{3-\delta}(B(\overline{0},1))$, in particularly ${1\over |\overline{k}|}\in L^{2}(B(\overline{0},1))$. As $\mathcal{F}({\partial f\over \partial x})\in L^{2}(B(\overline{0},1))$, by the Cauchy Schwarz inequality, we obtain that ${\mathcal{F}({\partial f\over \partial x})(\overline{k})\over |\overline{k}|}\in L^{1}(B(\overline{0},1))$, and by the decay, we have that ${\mathcal{F}({\partial f\over \partial x})(\overline{k})\over |\overline{k}|}\in L^{1}(\mathcal{R}^{3})$. Similar arguments show that ${\mathcal{F}({\partial f\over \partial x_{i}})(\overline{k})\over |\overline{k}|}\in L^{1}(\mathcal{R}^{3})$, for $1\leq i\leq 3$. We also have, using the fact that ${\partial^{2} f\over\partial x_{i}\partial x_{j}}$ is of moderate decrease and quasi split normal, $1\leq i\leq j\leq 3$, using the argument $(TT)$ twice, that for $k_{1}\neq 0$, $k_{2}\neq 0$, $k_{3}\neq 0$;\\

$\mathcal{F}({\partial^{2} f\over \partial x_{i}\partial x_{j}})=(ik_{i})(ik_{j})\mathcal{F}(f)(\overline{k})$\\

$=-k_{i}k_{j}\mathcal{F}(f)(\overline{k})$\\

so that;\\

${\mathcal{F}({\partial^{2} f\over \partial x_{i}\partial x_{j}})(\overline{k})\over |\overline{k}|^{2}}={-k_{i}k_{j}\over |\overline{k}|^{2}}\mathcal{F}(f)(\overline{k})$\\

with, for $k_{i}\neq 0$, $k_{j}\neq 0$;\\

$|{|-k_{i}k_{j}\over |\overline{k}|^{2}}|=|sign(k_{1})sign(k_{2}||{1\over (1+{k_{2}\over k_{1}}^{2}+{k_{3}\over k_{1}}^{2})^{1\over 2}}||{1\over (1+{k_{1}\over k_{2}}^{2}+{k_{3}\over k_{2}}^{2})^{1\over 2}}|\leq 1$\\

so that;\\

$|{\mathcal{F}({\partial^{2} f\over \partial x_{i}\partial x_{j}})(\overline{k})\over |\overline{k}|^{2}}|\leq |\mathcal{F}(f)(\overline{k})|$\\

and, by $(R)$, $\mathcal{F}(f)(\overline{k})\in L^{1}(\mathcal{R}^{3})$, so that ${\mathcal{F}({\partial^{2} f\over \partial x_{i}\partial x_{j}})(\overline{k})\over |\overline{k}|^{2}}\in L^{1}(\mathcal{R}^{3})$.\\

The last claim follows from the fact that, for $\overline{l}$, with $l_{1}\neq 0$, $l_{2}\neq 0$, $l_{3}\neq 0$, the translation $\mathcal{F}_{\overline{l}}({\partial f\over \partial x_{i}})(\overline{k})\in C^{1}(B(\overline{0},\epsilon'))$, for some $\epsilon'>0$. In particular, given $\epsilon>0$, there exists $\delta>0$, such that;\\

$max(|\int_{0}^{\delta}r\mathcal{F}_{\theta,\phi,\overline{l}}({\partial f\over \partial x_{j}})(r)dr|,|\int_{0}^{\delta}{d\over dr}(r\mathcal{F}_{\theta,\phi,\overline{l}}({\partial f\over \partial x_{j}})(r))dr|)<\epsilon$\\

uniformly in $\{\theta,\phi\}$.\\

\end{proof}

\end{document}